\newcommand{\R}{\mathbb{R}}
\newcommand{\norm}[1]{\left\lVert #1 \right\rVert}
\newcommand{\abs}[1]{\left\lvert #1 \right\rvert}
\newcommand{\T}{\mathsf{T}}
\newcommand{\Alpha}{\bm{\alpha}}
\newcommand{\Beta}{\bm{\beta}}
\newcommand{\bDelta}{\bm{\Delta}}
\newcommand{\hatFuncAlpha}{\widehat{\bm{\alpha}}}
\newcommand{\hatFuncBeta}{\widehat{\bm{\beta}}}
\newcommand{\starFuncBeta}{\bm{\beta}^*}
\newcommand{\bphi}{\bm{\phi}}
\newcommand{\hatA}{\widehat{\bm{A}}}
\newcommand{\hatB}{\widehat{\bm{B}}}
\newcommand{\starA}{\bm{A}^*}
\newcommand{\starB}{\bm{B}^*}
\newcommand{\argmin}{\mathop{\mathrm{arg\,min}}}
\theoremstyle{plain}
\newtheorem{assumption}{Assumption}
\newtheorem{theorem}{Theorem}
\newtheorem{lemma}{Lemma}
\newtheorem{proposition}{Proposition}
\theoremstyle{remark}
\newtheorem{remark}{Remark}
\title{\textbf{A Generalized Adaptive Joint Learning Framework for High-Dimensional Time-Varying Models}}
\author{
  Baolin Chen\thanks{School of Statistics and Data Science, Capital University of Economics and Business, Beijing, China. Email: Skylinchern@cueb.edu.cn}
  \and
  Mengfei Ran\thanks{Corresponding author. Wisdom Lake Academy of Pharmacy, Xi’an Jiaotong-Liverpool University, Suzhou, China. Email: Mengfei.Ran@xjtlu.edu.cn}
}
\date{\today}
\begin{document}

\maketitle

\begin{abstract}
In modern biomedical and econometric studies, longitudinal processes are often characterized by complex time-varying associations and abrupt regime shifts that are shared across correlated outcomes. Standard functional data analysis (FDA) methods, which prioritize smoothness, often fail to capture these dynamic structural features, particularly in high-dimensional settings. This article introduces Adaptive Joint Learning (AJL), a hierarchical regularization framework designed to integrate functional variable selection with structural changepoint detection in multivariate time-varying coefficient models. Unlike standard simultaneous estimation approaches, we propose a theoretically grounded two-stage screening-and-refinement procedure. This framework first synergizes adaptive group-wise penalization with sure screening principles to robustly identify active predictors, followed by a refined fused regularization step that effectively borrows strength across multiple outcomes to detect local regime shifts. We provide a rigorous theoretical analysis of the estimator in the ultra-high-dimensional regime ($p \gg n$). Crucially, we establish the sure screening consistency of the first stage, which serves as the foundation for proving that the refined estimator achieves the oracle property—performing as well as if the true active set and changepoint locations were known a priori. A key theoretical contribution is the explicit handling of approximation bias via undersmoothing conditions to ensure valid asymptotic inference. The proposed method is validated through comprehensive simulations and an application to Sleep-EDF data, revealing novel dynamic patterns in physiological states.
\end{abstract}

\section{Introduction}
\label{sec:introduction}
{\color{black}Longitudinal studies, characterized by repeated observations of subjects over time, are a cornerstone of research in biomedicine \citep{fitzmaurice2012applied}, economics \citep{Baltagi2021}, and social sciences. Classical tools for such data, such as linear mixed-effects models (LMMs) \citep{laird1982random} and generalized estimating equations (GEE) \citep{liang1986longitudinal}, have been foundational for decades. However, modern data acquisition technologies, from wearable devices to high-throughput biological assays, produce datasets that are not only high-dimensional ($p \gg n$) but also densely measured in time. This dense temporal structure places the analytical challenge at the intersection of classical longitudinal analysis and functional data analysis (FDA) \citep{Ramsay2005, Kokoszka2017}.

A primary goal of FDA is to move beyond simple scalar covariate effects and model how associations vary smoothly as a function of time. The Time-Varying Coefficient Model (TVCM) is a key tool in this domain \citep{hastie1993varying, hoover1998nonparametric, fan2008statistical}. By allowing covariate effects $\beta_j(t)$ to be flexible functions, TVCMs can capture complex, dynamic relationships. This is critical in applications like neuroimaging, where brain-behavior relationships are known to be non-static \citep{Calhoun2014}, or in understanding the time-varying impact of treatments and prognostic factors in progressive diseases.

The transition to functional models, however, introduces significant statistical and computational challenges. Estimating an entire function $\beta_j(t)$ for all $p$ covariates is an ill-posed problem, often tackled by basis expansions (e.g., B-splines \cite{de2001practical}) or kernel smoothing \citep{wu1998kernel}. When $p$ is large, this leads to an ultra-high-dimensional parametric problem, necessitating variable selection. While penalized methods like the Lasso \citep{tibshirani1996regression} are standard for scalar $p$, selecting an entire function $\beta_j(t)$ requires structured penalties. Methods have been proposed for functional variable selection, such as group-wise penalties on basis coefficients \citep{zhao2014structured, fan2011sparse,fan2010selective, jenatton2011structured}.

However, several critical gaps remain in the literature. Standard $\ell_1$ and group-$\ell_1$ penalties (like the Group Lasso, \cite{yuan2006model}) are known to produce biased estimates for large coefficients, thus failing to possess the oracle property \citep{fan2001variable}. This property, which ensures an estimator is as asymptotically efficient as if the true sparse model were known in advance, is crucial for accurate inference. This limitation can be overcome by using adaptive weights, as in the Adaptive Lasso \citep{zou2006adaptive} and its extensions \citep{wang2008adaptivegroup}, or by using non-convex penalties like SCAD \citep{fan2001variable} or MCP \citep{zhang2010nearly}. Furthermore, standard functional models enforce smoothness (e.g., via $\int [\beta''(t)]^2 dt$). In many longitudinal applications, however, we are interested in abrupt changepoints, for instance, a sudden change in disease trajectory or symptom onset. While recent advances have developed efficient algorithms for multiple changepoint detection in high-dimensional generalized linear models \citep{wang2023efficient} and quantile regression settings \citep{wang2025efficient}, these methods are primarily tailored for discrete time series or specific distributional parameters. They are not directly applicable to the functional data setting, where the goal is to detect structural breaks within continuous time-varying coefficient curves $\beta_j(t)$ while simultaneously performing variable selection. The fused lasso \citep{tibshirani2005sparsity} and its variants \citep{rinaldo2009properties, Viallon2016fused} are designed for this, but are rarely integrated with high-dimensional functional variable selection. Finally, in many studies (e.g., clinical trials, neuroimaging), multiple correlated outcomes ($K > 1$) are collected. Analyzing each outcome separately is statistically inefficient. Joint modeling, or multi-task learning, can borrow strength across outcomes to improve estimation and selection accuracy \citep{obozinski2010joint,caruana1997multitask, argyriou2008convex}.

Unlike conventional baseline regression approaches that link baseline covariates to a single summary outcome (e.g., end-point value or subject-level average), AJL models the entire multi-outcome trajectory through time-varying effects $\beta_{jk}(t)$ and further detects shared changepoints in the population-level baseline curve $\alpha_k(t)$. This enables simultaneous characterization of when the overall trajectory exhibits structural transitions and which baseline features drive heterogeneous dynamics across correlated outcomes—capabilities that are unattainable via constant-effect regression or separate per-outcome analyses.

A unified framework that simultaneously (1) performs joint variable selection across multiple outcomes in a functional model, (2) detects abrupt changepoints in functional trajectories, and (3) possesses the oracle property for both selection and estimation, remains underdeveloped. This paper bridges these gaps by proposing a novel \textbf{Adaptive Joint Learning (AJL)} framework for high-dimensional functional longitudinal data. We model time-varying intercepts $\alpha_k(t)$ and coefficients $\beta_j(t)$ using B-spline basis expansions, transforming the functional problem into an ultra-high-dimensional parametric one. Our core contribution is a novel convex objective function that synthesizes three key ideas: an \textbf{Adaptive Group Lasso} on blocks of B-spline coefficients, to select entire functional covariates $\beta_j(t)$ that are relevant across all $K$ outcomes; an adaptive fused lasso on the B-spline coefficients of the intercept functions $\alpha_k(t)$, to explicitly detect changepoints in their temporal dynamics; and the principle of adaptive weighting as a computationally tractable one-step Local Linear Approximation (LLA) \citep{zou2008} of an ideal but intractable non-convex oracle problem. This innovative fusion of adaptive structured penalties is, to our knowledge, the first to provide a theoretically-backed oracle-performing estimator that jointly selects functional predictors and detects functional changepoints in a multi-task setting. 

It is worth emphasizing that the theoretical integration of these components extends beyond a mere superposition of penalties. A primary theoretical hurdle lies in the structural interference between the covariate selection space (governed by the group lasso) and the temporal segmentation space (governed by the fused lasso). In high-dimensional regimes, distinguishing between a smooth covariate effect and a structural intercept shift is theoretically precarious, as it requires rigorous incoherence conditions between the B-spline basis and the difference operator. Furthermore, the ‘Joint Learning’ aspect introduces a multi-objective optimization trade-off: attempting to enforce shared sparsity across heterogeneous outcomes ($K$) while locally identifying outcome-specific changepoints creates a conflict in regularization paths. Our theoretical analysis (Section \ref{sec:theory}) resolves these challenges by deriving uniform convergence rates that account for the simultaneous complexity of the discrete changepoint set and the continuous functional parameter space.

We develop a scalable multi-stage BCD-ADMM algorithm, provide a rigorous theoretical analysis with detailed proofs, and validate our 
method via extensive, redesigned simulation studies. Finally, we demonstrate the applicability of AJL by analyzing the Sleep-EDF database. We jointly model the spectral power trajectories of multiple EEG frequency bands, identifying shared structural breakpoints that align with NREM-REM sleep cycles. Crucially, by leveraging the derived undersmoothing properties, we construct pointwise confidence bands to validate the time-varying effects of age and gender while rigorously filtering out high-dimensional noise variables. 

The remainder of this paper is organized as follows. Section \ref{sec:methodology} introduces the functional TVCM, the basis expansion, and the motivations for our penalized objective. Section \ref{sec:algorithm} details the optimization procedure, including convergence and complexity analysis. Section \ref{sec:theory} presents the core theoretical results, including the high-dimensional assumptions, non-asymptotic error bounds, and asymptotic oracle properties. Section \ref{sec:simulations} validates our method through extensive numerical studies. Section \ref{sec:application} provides a application to EEG data analysis. Section \ref{sec:conclusion} concludes. The Appendix contains proofs and discusses potential extensions.}

\section{Methodology}
\label{sec:methodology}

\subsection{The Time-Varying Coefficient Model (TVCM)}
\label{subsec:tvcm}

{\color{black}We consider $n$ subjects ($i=1,\dots,n$) observed at possibly subject-specific and irregular visit times $t_{il} \in \mathcal{T}$, $l=1,\dots,T_{i}$, with $K$ scalar outcomes $y_{ilk} \in \mathbb{R}$, $k=1,\dots,K$, recorded at each visit. For each subject $i$ and outcome $k$, the collection $\{y_{ilk} : l=1,\dots,T_{i}\}$ forms a longitudinal trajectory over the follow-up period.

While baseline covariates are prevalent in clinical studies, modern longitudinal designs (e.g., wearable sensors, intensive monitoring) often yield densely sampled predictor processes. To accommodate such complexity, we propose a general concurrent \textbf{Time-Varying Coefficient Model (TVCM)} framework:
\begin{equation}
    y_{ilk} = \alpha_k(t_{il}) + \sum_{j=1}^{p} x_{ij}(t_{il}) \beta_{jk}(t_{il}) + \epsilon_{ilk}, \quad t_{il} \in \mathcal{T},
    \label{eq:tvcm_model}
\end{equation}
where $\boldsymbol{\alpha}(t) = (\alpha_1(t), \dots, \alpha_K(t))^\top$ and $\boldsymbol{\beta}_k(t) = (\beta_{1k}(t), \dots, \beta_{pk}(t))^\top$ represents the $K$ functional intercepts and the vector of $p$ time-varying functional coefficients for the $k$-th outcome, respectively. The errors $\epsilon_{ilk}$ are assumed to be independent with mean zero and finite variance.

This formulation encompasses two important statistical scenarios:
\begin{enumerate}
    \item \textbf{Instantaneous Modulation:} The covariates $x_{ij}(t)$ vary continuously over time (e.g., concurrent EEG power predicting sleep depth), capturing dynamic dependencies.
    \item \textbf{Baseline Prognosis:} The covariates are time-invariant baseline markers, i.e., $x_{ij}(t) \equiv x_{ij}$ (e.g., genetic risk scores or initial lab values), used to predict future disease trajectories.
\end{enumerate}

In this article, we focus our numerical and empirical analysis primarily on the baseline prognosis case, as identifying early prognostic biomarkers remains a central challenge in precision medicine. However, it is crucial to note that our proposed AJL framework and the associated B-spline design matrix construction apply seamlessly to the general time-varying case defined in \eqref{eq:tvcm_model} without algorithmic modification. Our goal is to jointly select the active functional predictors $\{\beta_{jk}(t)\}$ and identify structural change points in the intercept dynamics $\{\alpha_k(t)\}$.

\subsection{Basis Expansion for Functional Data}
Estimating the infinite-dimensional functions $\alpha_k(t)$ and $\beta_{jk}(t)$ directly is ill-posed. A standard and powerful approach is to approximate them using a linear combination of $M$ basis functions \citep{Ramsay2005}. Let $\bphi(t) = (\phi_1(t), \dots, \phi_M(t))^\T$ be a vector of $M$ basis functions defined over a set of knots on the domain $\mathcal{T}$.

Common choices for the basis include the Fourier basis and B-splines. The Fourier basis is an orthonormal basis on $[0,1]$ given by $\phi_1(t) = 1$, $\phi_{2m}(t) = \sqrt{2}\cos(m\pi t)$, and $\phi_{2m+1}(t) = \sqrt{2}\sin(m\pi t)$. It is highly effective for smooth, periodic functions. However, our interest lies in modeling trajectories that may have abrupt, non-periodic changepoints. We therefore choose B-splines \citep{de2001practical} because they have local support (i.e., each $\phi_m(t)$ is non-zero only on a small interval). This local control is crucial for flexibly modeling non-periodic data and, as we will see, for representing abrupt changepoints, which a global basis like Fourier cannot do efficiently.

We approximate the unknown functions as:
\begin{align}
\alpha_k(t) &\approx \bphi(t)^\T \bm{a}_k, \quad \text{where } \bm{a}_k = (a_{k1}, \dots, a_{kM})^\T \in \R^M \\
\beta_{jk}(t) &\approx \bphi(t)^\T \bm{b}_{jk}, \quad \text{where } \bm{b}_{jk} = (b_{jk1}, \dots, b_{jkM})^\T \in \R^M
\end{align}
The choice of $M$, the number of basis functions, is a crucial tuning parameter that controls a bias-variance trade-off. A small $M$ may underfit (high bias), while a large $M$ may overfit (high variance) and increases computational cost. In theory, $M$ should grow slowly with $N$ (e.g., $M \asymp N^{1/5}$) to achieve optimal non-parametric rates. In practice, $M$ is often chosen to be large enough to capture the function's complexity and then fixed, with the subsequent regularization handling the overfitting.

Let $\bm{A} = [\bm{a}_1, \dots, \bm{a}_K] \in \R^{M \times K}$ be the coefficient matrix for all intercepts; and $\bm{B}_j = [\bm{b}_{j1}, \dots, \bm{b}_{jK}] \in \R^{M \times K}$ be the coefficient matrix for the $j$-th functional covariate across all $K$ outcomes. Substituting these expansions into the model \eqref{eq:tvcm_model}, we transform the functional model into an ultra-high-dimensional linear model. For the observation $(i, t_{it}, k)$:
\begin{equation}
y_{itk} \approx \bphi(t_{it})^\T \bm{a}_k + \sum_{j=1}^p x_{ij} (\bphi(t_{it})^\T \bm{b}_{jk}) + \epsilon_{itk}
\label{eq:bspline_model}
\end{equation}
This can be written in matrix form for all $N = \sum n_i$ observations. Let $\bm{Y} \in \R^{N \times K}$ be the stacked outcome matrix. Denote $\bm{Z}_{\Phi}$ be the $N \times M$ matrix where $[\bm{Z}_{\Phi}]_{il,m} = \phi_m(t_{il})$, and $\bm{X}_{\Phi, j}$ be the $N \times M$ matrix where $[\bm{X}_{\Phi, j}]_{il,m} = x_{ij} \phi_m(t_{il})$. The model becomes a high-dimensional multivariate regression on the $M(K + pK)$ coefficients:
\begin{equation}
\bm{Y} \approx \bm{Z}_{\Phi} \bm{A} + \sum_{j=1}^p \bm{X}_{\Phi, j} \bm{B}_j + \bm{E}
\end{equation}
Our task is to estimate $\bm{A}$ and the $\bm{B}_j$ matrices under sparsity and changepoint constraints.

\subsection{Advantages of the Functional Basis Parameterization}
This transformation from an infinite-dimensional functional problem to an ultra-high-dimensional parametric one is not merely a convenience; it is a theoretically motivated strategy. First, the local support of B-splines (Assumption \ref{assump:basis_func}) means the design matrix $\bm{\Phi}$ is sparse, which generally improves the condition number and stability of the estimation problem compared to dense kernel matrices. Second, the local parameterization allows us to design penalties that target specific functional features. As we will see, applying a fused lasso to adjacent B-spline coefficients $\abs{a_{k,m+1} - a_{k,m}}$ is a principled way to enforce piecewise-constant structure in $\alpha_k(t)$. This is only meaningful because B-spline basis functions $\phi_m(t)$ and $\phi_{m+1}(t)$ are neighbors in time. Third, the problem is naturally structured into $p$ blocks $\{\bm{B}_j\}$, each of size $M \times K$. This perfectly aligns with our goal of selecting an entire function $\beta_j(t)$. Our penalty will be a sum of penalties on these blocks, $P(\bm{B}) = \sum_j P_j(\bm{B}_j)$, which is a classic decomposable regularizer \citep{negahban2012unified}. Finally, this framework naturally handles irregular observation times $t_{il}$ and missing data by simply evaluating the basis functions $\bphi(t_{il})$ at the times that are observed.

\subsection{A Penalized Objective for Functional Data}
We jointly estimate the coefficient matrices $\bm{A}$ and $\{\bm{B}_j\}_{j=1}^p$ by minimizing a penalized least-squares loss. The core innovation is to design penalties on the coefficients that enforce our desired functional properties. This is a crucial step that imbues the parametric model with functional characteristics.

Our first goal is functional variable selection: we want to identify which covariates $j$ are irrelevant, i.e., $\beta_j(t) \equiv 0$ for all $t \in \mathcal{T}$ and all outcomes $k=1, \dots, K$. In the B-spline framework, this is equivalent to selecting the entire coefficient matrix $\bm{B}_j = [\bm{b}_{j1}, \dots, \bm{b}_{jK}] \in \R^{M \times K}$ as a single group. A natural penalty to enforce this group-wise sparsity is the Frobenius (or $\ell_{2,1}$) norm on each block \citep{lounici2011oracle}:
\begin{equation}
P_g(\{\bm{B}_j\}) = \sum_{j=1}^p \norm{\bm{B}_j}_F, \quad \text{where } \norm{\bm{B}_j}_F = \left( \sum_{k=1}^K \sum_{m=1}^M b_{jkm}^2 \right)^{1/2}
\end{equation}
This penalty is the empirical counterpart to penalizing the $L^2$-norm of the functional coefficient block. To see this, note that $\sum_k \int \beta_{jk}(t)^2 dt = \sum_k \int \bm{b}_{jk}^\T \bphi(t) \bphi(t)^\T \bm{b}_{jk} dt = \sum_k \bm{b}_{jk}^\T \bm{\Sigma}_{\phi} \bm{b}_{jk}$, where $\bm{\Sigma}_{\phi} = \int \bphi(t) \bphi(t)^\T dt$. If the basis is orthonormal ($\bm{\Sigma}_{\phi} = \bm{I}_M$), this simplifies to $\sum_k \norm{\bm{b}_{jk}}_2^2 = \norm{\bm{B}_j}_F^2$. While B-splines are not perfectly orthonormal, $\norm{\bm{B}_j}_F$ remains a highly effective and computationally convenient proxy for the functional $L^2$-norm. Applying this penalty encourages entire matrices $\bm{B}_j$ to be set to exactly $\bm{0}$, thus selecting functional covariate $\bm{x}_j$ out of the model.

Our second, equally important goal is to detect changepoints in the intercept functions $\alpha_k(t)$. Standard functional penalties (e.g., $\int [\alpha_k''(t)]^2 dt$) enforce smoothness and correspond to $\ell_2$ penalties on coefficient differences (e.g., $\norm{\bm{D}_2 \bm{a}_k}_2^2$, where $\bm{D}_2$ is a second-order difference operator). These penalties are excellent for estimating smooth curves but will blur and hide abrupt changepoints, which are often of primary clinical or economic interest \citep{safikhani2022joint}.

To explicitly detect changepoints, we want to find an $\alpha_k(t)$ that is piecewise constant. Due to the local support property of B-splines, a piecewise-constant function $\alpha_k(t)$ corresponds to a coefficient vector $\bm{a}_k$ that is itself piecewise constant. We can enforce this structure by applying the fused lasso (or total variation) penalty, which is an $\ell_1$ penalty on the first-differences of the adjacent B-spline coefficients:
\begin{equation}
P_f(\bm{A}) = \sum_{k=1}^K \norm{\bm{D}_1 \bm{a}_k}_1 = \sum_{k=1}^K \sum_{m=1}^{M-1} \abs{a_{k, m+1} - a_{k, m}}
\end{equation}
where $\bm{D}_1$ is the first-difference operator matrix. This penalty shrinks successive differences $(a_{k, m+1} - a_{k, m})$ towards zero. The result is an estimated vector $\hat{\bm{a}}_k$ that is sparse in its differences, i.e., piecewise constant. This, in turn, produces an estimated function $\hat{\alpha}_k(t) = \bphi(t)^\T \hat{\bm{a}}_k$ that is a (locally smoothed) piecewise-constant function. The locations $m$ where $\hat{a}_{k,m+1} \neq \hat{a}_{k,m}$ correspond to the detected changepoints.

Combining these elements, our baseline (non-adaptive) joint functional model is the solution to:
\begin{equation}
\min_{\bm{A}, \{\bm{B}_j\}} \quad \frac{1}{2N}\|\bm{Y} - \bm{Z}_{\Phi}\bm{A} - \sum_{j=1}^p \bm{X}_{\Phi, j}\bm{B}_j\|_F^2 + \lambda_g \sum_{j=1}^p \norm{\bm{B}_j}_F + \lambda_f \sum_{k=1}^K \norm{\bm{D}_1\bm{a}_k}_1
\label{eq:baseline_objective_func}
\end{equation}
We have omitted the optional weight matrix $\bm{W}$ for simplicity.

\subsection{Theoretical Motivation for Adaptivity}
\label{sec:ajl_framework_math_func}
The baseline model \eqref{eq:baseline_objective_func}, while functional and structured, still suffers from the fundamental estimation bias inherent in $\ell_1$ and Frobenius-norm penalties \citep{fan2001variable, zou2006adaptive}. The constant shrinkage applied by $\lambda_g$ and $\lambda_f$ biases the estimates of non-zero coefficients towards zero, preventing the estimator from achieving the oracle property.

Our goal is to find an estimator that is as efficient as an oracle estimator, i.e., an unpenalized estimator that knows the true active set $\mathcal{S}$ and the true changepoint locations $\{\mathcal{C}_k\}$ in advance. This property can be achieved by non-convex penalties (e.g., SCAD, MCP), but this leads to a difficult, non-convex optimization problem.

Instead, we follow the insight of \citet{zou2008} and motivate our AJL framework as a computationally tractable, one-step Local Linear Approximation (LLA) to the ideal but intractable non-convex functional problem:
\begin{equation}
\min_{\bm{A}, \{\bm{B}_j\}} \quad L_N(\bm{A}, \{\bm{B}_j\}) + \sum_{j=1}^p p_{\lambda_g}(\norm{\bm{B}_j}_F) + \sum_{k=1}^K \sum_{m=1}^{M-1} p_{\lambda_f}(\abs{a_{k, m+1} - a_{k, m}}).
\label{eq:nonconvex_objective_func}
\end{equation}
where $p_\lambda(\cdot)$ is an oracle penalty like SCAD. The LLA of this problem is a weighted $\ell_1$-type problem, which can be solved in a 3-stage, convex-only procedure: (1) get initial consistent estimates $(\tilde{\bm{A}}, \tilde{\bm{B}})$ from \eqref{eq:baseline_objective_func}, (2) compute data-driven weights, and (3) solve the final weighted convex problem.

\subsection{The Two-Stage Generalized Adaptive Joint Learning (AJL) Framework}
\label{subsec:AJL_Obj_func}
{\color{black}
We propose a unified and generalized Adaptive Joint Learning (AJL) objective. Unlike traditional methods that treat variable selection and structural change detection as separate tasks, our framework formulates them as a joint convex optimization problem. This generalized formulation allows for the simultaneous discovery of ($i$) significant functional predictors, ($ii$) shared regime shifts in baseline processes, and ($iii$) dynamic structural changes in covariate effects. The estimator is defined as the global minimizer of:
\begin{equation}
\begin{aligned}Q(\theta) & =  \frac{1}{2N}\| Y - Z_{\Phi}A - \sum_{j=1}^{p} X_{\Phi,j}B_{j} \|_{F}^{2}  + \lambda_{g} \sum_{j=1}^{p} \hat{w}_{g,j} \|B_{j}\|_{F} \quad & (\text{Selection}) \\& + \lambda_{f}^{\alpha} \sum_{k=1}^{K} \sum_{m=1}^{M-1} \hat{w}_{f,km}^{\alpha} |a_{k,m+1} - a_{k,m}| \quad & (\text{Intercept Shifts}) \\& + \lambda_{f}^{\beta} \sum_{j=1}^{p} \sum_{k=1}^{K} \sum_{m=1}^{M-1} \hat{w}_{f,jkm}^{\beta} |b_{jk,m+1} - b_{jk,m}| \quad & (\text{Effect Dynamics})\end{aligned}\label{eq:adaptive_objective_final_func}
\end{equation}
This generalized formulation introduces three distinct regularization components:\begin{enumerate}\item Adaptive Group Lasso: Selects significant functional predictors by encouraging sparsity on the block norms $\|B_j\|_F$.\item Adaptive Intercept Fusion: Detects structural breaks in the baseline processes $\alpha_k(t)$.\item Adaptive Slope Fusion: The term with $\lambda_f^\beta$ enforces piecewise-constant dynamics in the covariate effects $\beta_{jk}(t)$. This allows the model to identify critical time points where the relationship between a predictor and the outcome undergoes a structural shift (e.g., a ``regime change" in treatment efficacy).\end{enumerate}

\begin{remark}[Implicit Robustness to Outliers]\label{rem:Implicit Robustness to Outliers}
    While the proposed objective function \eqref{eq:adaptive_objective_final_func} employs a squared Frobenius norm loss, which is classically sensitive to outliers due to the quadratic penalization of residuals, our framework inherently mitigates this sensitivity through the structural properties of the estimator.

    First, the local support property of the B-spline basis (Assumption \ref{assump:basis_func}) ensures that the influence of an outlier at time $t$ is confined to a small subset of adjacent coefficients, preventing global distortion of the estimated functional trajectory $\hat{\beta}_{j}(t)$. Unlike global bases (e.g., Fourier) where an outlier affects the entire curve, B-splines localize the error.

    Second, the strong regularization imposed by the adaptive group and fused penalties acts as a barrier against overfitting. Even if the $L_2$ loss encourages the model to chase an outlier, the penalty terms suppress the emergence of complex, high-amplitude structures that do not align with the shared sparsity patterns or piecewise-constant dynamics enforced by the oracle weights.

    Consequently, the AJL estimator maintains practical robustness without the computational complexity of non-smooth robust loss functions (e.g., Huber loss), a property we empirically validate under heavy-tailed and contaminated error scenarios in Section \ref{sec:simulations}. 
\end{remark}}

The weights are the key to achieving the oracle property. They are based on the initial consistent estimates $(\tilde{\bm{A}}, \tilde{\bm{B}}_j)$ from Stage 1.

\begin{enumerate}
    \item \textbf{Adaptive Group Lasso Weights}. For the $j$-th functional covariate:
    \begin{equation}
    \hat{w}_{g,j} = \frac{1}{\|\tilde{\bm{B}}_j\|_F^{\gamma_g} + \epsilon_g},
    \end{equation}
    where $\gamma_g > 0$ (typically 1 or 2) and $\epsilon_g$ is a small stability constant (e.g., $1/N$). This adaptive reweighting follows the adaptive group Lasso principle \citep{wang2008adaptivegroup}, acting as a data-driven switch to separate signal from noise.
    \begin{itemize}
        \item \textbf{Case 1: Noise Variable ($j \in \mathcal{S}^c$)}
        Here, the true function $\beta_j^*(t) \equiv 0$, so the true coefficient matrix $\starB_j = \bm{0}$. By consistency of the initial estimator (Assumption \ref{assump:consistency_func3}), $\|\tilde{\bm{B}}_j\|_F \xrightarrow{p} 0$. As the denominator goes to zero (plus $\epsilon_g$), the weight $\hat{w}_{g,j} \to \infty$. The penalty for this variable, $\lambda_g \hat{w}_{g,j} \|\bm{B}_j\|_F$, becomes infinitely large for any $\norm{\bm{B}_j}_F > 0$. This aggressively and consistently forces the final estimate $\hatB_j$ to be exactly $\bm{0}$.
        
        \item \textbf{Case 2: Signal Variable ($j \in \mathcal{S}$)}
        Here, $\|\beta_j^*(t)\|_{L^2} = c_j > 0$, so $\|\starB_j\|_F = c_j' > 0$. By consistency (Assumption \ref{assump:consistency_func3}), $\|\tilde{\bm{B}}_j\|_F \xrightarrow{p} c_j' > 0$. The weight $\hat{w}_{g,j} \to 1/(c_j'^{\gamma_g} + \epsilon_g)$, which is a small, finite constant. The penalty for this variable is $\lambda_g \cdot (\text{const}) \cdot \|\bm{B}_j\|_F$. Since $\lambda_g \to 0$ is required for oracle property (Assumption \ref{assump:tuning_func}), the penalty for true signals vanishes asymptotically.
    \end{itemize}
    This mechanism achieves both sparsity (by penalizing noise infinitely) and unbiasedness (by not penalizing signals).
    
    \item \textbf{Adaptive Fused Lasso Weights}. For the $m$-th difference of the $k$-th intercept:
    \begin{equation}
    \hat{w}_{f,km} = \frac{1}{\abs{\tilde{a}_{k,m+1} - \tilde{a}_{k,m}}^{\gamma_f} + \epsilon_f},
    \end{equation}
    This weight applies the exact same logic as above. If there is no true changepoint at $m$, then $\tilde{a}_{k,m+1} \approx \tilde{a}_{k,m}$, the denominator is near zero, and $\hat{w}_{f,km} \to \infty$, forcing $\hat{a}_{k,m+1} = \hat{a}_{k,m}$. If there \textit{is} a true changepoint, the difference is non-zero, the weight is a finite constant, and the penalty vanishes, allowing the changepoint to be estimated without bias.

    \color{black}\item \textbf{Adaptive Slope Fusion Weights}. Analogous to the intercept term, we construct adaptive weights for the covariate coefficients to distinguish between smooth evolution and structural breaks. For the $m$-th difference of the $j$-th predictor's effect on outcome $k$:
    \begin{equation}
        \hat{w}_{f,jkm}^\beta = \frac{1}{|\tilde{b}_{jk,m+1} - \tilde{b}_{jk,m}|^{\gamma_s} + \epsilon_s}
    \end{equation}
    These weights ensure that in the Refinement Phase, the penalty selectively enforces smoothness only where the initial estimates suggest no structural change, asymptotically reducing bias at true changepoints.
\end{enumerate}

However, directly minimizing \eqref{eq:adaptive_objective_final_func} in the ultra-high-dimensional regime ($p \gg n$) presents two fundamental challenges. First, from a computational perspective, simultaneously enforcing group sparsity and slope fusion for all $p$ predictors is computationally prohibitive. Second, theoretically, the structural interference between the covariate selection space (governed by the group lasso) and the temporal segmentation space (governed by the fused lasso) can lead to identification issues when the signal-to-noise ratio is low.

To address these challenges, we propose a constructive Two-Stage Screening-and-Refinement Framework. Instead of a single-step optimization, we adopt a hierarchical strategy motivated by the principle of Sure Screening \citep{fan2008sure}. This framework decomposes the problem into a screening phase that prioritizes support recovery, followed by a refinement phase that focuses on structural discovery.

In the first stage, we define the Screening Estimator $\hat{\theta}^{scr}$ by simplifying the global objective to prioritize support recovery. Specifically, we set $\lambda_f^\beta = 0$ (enforcing no slope segmentation initially) and solve:$$\hat{\theta}^{scr} = \arg\min_{\theta} \left( L_N(\theta) + \lambda_g \sum \hat{w}_{g,j} \|B_j\|_F + \lambda_f^\alpha \sum ... \right)$$Let $\hat{\mathcal{S}}_{scr} = \{j: \|\hat{B}_j^{scr}\|_F > 0\}$ denote the selected active set. This stage serves to reduce the dimensionality from $p$ to $|\hat{\mathcal{S}}_{scr}| \ll n$.

In the second stage, we obtain the Refined Estimator $\hat{\theta}^{ref}$ by optimizing the full objective (including the slope fusion penalty $\lambda_f^\beta$) restricted to the selected subspace $\hat{\mathcal{S}}_{scr}$:$$\hat{\theta}^{ref} = \arg\min_{\theta: B_j=0, \forall j \notin \hat{\mathcal{S}}_{scr}} \left( L_N(\theta) + P_{ada}(\theta) \right)$$This refinement step allows for the precise localization of dynamic changepoints within the identified relevant predictors.

\section{Optimization Algorithm}
\label{sec:algorithm}

The optimization of the AJL objective \eqref{eq:adaptive_objective_final_func} follows the 3-stage procedure detailed in Algorithm \ref{alg:main_3stage}. This is necessary because the weights $\hat{w}$ in Stage 3 depend on the initial estimates $(\tilde{\bm{A}}, \tilde{\bm{B}})$ from Stage 1. The core solver for both Stage 1 and Stage 3 is a Block Coordinate Descent (BCD) algorithm (Algorithm \ref{alg:bcd_admm}), which iterates between optimizing $\bm{A}$ and the set of $\{\bm{B}_j\}$. Each of these subproblems is convex and solved to high precision using the Alternating Direction Method of Multipliers (ADMM) \citep{boyd2011distributed}.

\subsection{Convergence and Complexity Analysis}
The BCD algorithm in Algorithm \ref{alg:bcd_admm} is guaranteed to converge to a stationary point.

\begin{proposition}[Convergence]
Let $Q(\bm{A}, \{\bm{B}_j\})$ be the (adaptive) objective function in \eqref{eq:adaptive_objective_final_func}. The objective function value $Q(\hatA^{(s)}, \{\hatB_j^{(s)}\})$ is monotonically decreasing in $s$ and converges to a stationary point (a global minimum, as $Q$ is convex).
\end{proposition}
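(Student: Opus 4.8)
The plan is to invoke the standard convergence theory for block coordinate descent (BCD) applied to a convex objective that decomposes into a smooth term plus a block-separable nonsmooth term. First I would verify the structural hypotheses: the loss $L_N(\bm{A},\{\bm{B}_j\}) = \frac{1}{2N}\|\bm{Y} - \bm{Z}_\Phi\bm{A} - \sum_j \bm{X}_{\Phi,j}\bm{B}_j\|_F^2$ is convex, continuously differentiable, and its gradient is (block-)Lipschitz; the penalty $P_{\mathrm{ada}}(\theta) = \lambda_g\sum_j \hat w_{g,j}\|\bm{B}_j\|_F + \lambda_f^\alpha\sum_{k,m}\hat w^\alpha_{f,km}|a_{k,m+1}-a_{k,m}| + \lambda_f^\beta\sum_{j,k,m}\hat w^\beta_{f,jkm}|b_{jk,m+1}-b_{jk,m}|$ is convex, closed, proper, and—crucially for Tseng's theorem—separable across the coordinate blocks used by the BCD scheme (the block $\bm{A}$ and each block $\bm{B}_j$). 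The adaptive weights are fixed nonnegative constants during Stage 1 or Stage 3, so each penalty summand is a positively weighted norm or absolute value, preserving convexity.

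The key steps, in order, are: (i) \textbf{Monotone descent.} Each BCD sweep updates $\bm{A}$ (resp. each $\bm{B}_j$) by exactly minimizing $Q$ over that block with all others fixed; since the ADMM inner solver is run to high precision, the per-block objective does not increase, hence $Q(\hatA^{(s+1)},\{\hatB_j^{(s+1)}\}) \le Q(\hatA^{(s)},\{\hatB_j^{(s)}\})$, giving a nonincreasing sequence bounded below by $0$, which therefore converges. (ii) \textbf{Iterate boundedness / coercivity.} Because each adaptive group-lasso weight $\hat w_{g,j}>0$, the term $\lambda_g\sum_j \hat w_{g,j}\|\bm{B}_j\|_F$ is coercive in $\{\bm{B}_j\}$, and the fused terms together with the residual sum of squares control $\bm{A}$ up to the null space of the difference operator—here I would note that the leading basis-coefficient level is pinned down by the data-fit term (the design $\bm{Z}_\Phi$ has full column rank on the relevant subspace under Assumption \ref{assump:basis_func}), so sublevel sets of $Q$ are bounded and a convergent subsequence of iterates exists. (iii) \textbf{Stationarity of limit points.} Apply Tseng's coordinate-descent convergence result (or the Grippo–Sciandrone argument for two-or-more blocks with a convex, block-separable nonsmooth part): every limit point $\theta^\star$ satisfies the blockwise optimality conditions $0 \in \nabla_{\bm{A}}L_N(\theta^\star) + \partial_{\bm{A}}P_{\mathrm{ada}}(\theta^\star)$ and $0 \in \nabla_{\bm{B}_j}L_N(\theta^\star) + \partial_{\bm{B}_j}P_{\mathrm{ada}}(\theta^\star)$ for every $j$. (iv) \textbf{Upgrade to global minimum.} Since $P_{\mathrm{ada}}$ is block-separable, the union of blockwise subdifferentials equals the full subdifferential, so the blockwise conditions assemble into $0 \in \partial Q(\theta^\star)$; convexity of $Q$ then forces $\theta^\star$ to be a global minimizer, and $Q(\hatA^{(s)},\{\hatB_j^{(s)}\}) \to \min Q$.

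The main obstacle is step (iii)–(iv): a stationary point of a block-coordinate scheme need not be a global (or even local) minimizer for a general nonsmooth objective, and the classical BCD counterexamples (Powell) live precisely in the nonsmooth regime. The resolution hinges on block-separability of the nonsmooth penalty across exactly the blocks the algorithm cycles over—this is what makes Tseng's theorem applicable and what lets the blockwise stationarity conditions be glued into the global subdifferential inclusion. I would therefore spend the bulk of the argument carefully checking that the partition $\{\bm{A}, \bm{B}_1, \dots, \bm{B}_p\}$ is a valid separable block structure for $P_{\mathrm{ada}}$: the group-lasso and slope-fusion terms decouple across $j$, and the intercept-fusion term involves only $\bm{A}$, so no nonsmooth term couples two distinct blocks. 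A secondary technical point worth flagging is that the ADMM subsolver is only run to finite precision in practice; I would state the result under the idealization of exact block minimization (consistent with the proposition's phrasing) and remark that an $\varepsilon_s$-inexact variant with $\sum_s \varepsilon_s < \infty$ yields the same limit, absorbing the inexactness into a summable perturbation of the descent inequality.
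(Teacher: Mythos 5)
Your proposal is correct and follows essentially the same route as the paper, which simply invokes the classical convergence result for block coordinate descent on a smooth convex loss plus a block-separable convex nonsmooth penalty. Your version is a careful expansion of that one-line argument — in particular, your verification that no nonsmooth term couples two distinct blocks in the partition $\{\bm{A},\bm{B}_1,\dots,\bm{B}_p\}$, and your attention to coercivity and inexact inner solves, supply exactly the hypotheses the paper's citation of the "classic result" implicitly relies on.
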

\begin{proof}
The objective $Q$ is the sum of a smooth, convex loss function and two separable, convex, non-smooth penalties. The BCD update for each block ($\bm{A}$ and each $\bm{B}_j$) exactly minimizes the objective with respect to that block, holding others fixed. This is a standard BCD procedure, and its convergence to a global minimum is a classic result in convex optimization.
\end{proof}

Let $N_{\text{total}} = N \times K$ be the total number of observations. Let $d_j$ be the complexity of solving the group lasso subproblem for $\bm{B}_j$ (which is $M \times K$), and $d_k$ be the complexity for the fused lasso subproblem for $\bm{a}_k$ (which is $M \times 1$). A single sweep (iteration $s$) of Algorithm \ref{alg:bcd_admm} has a complexity of $O( \sum_{j=1}^p (NMK + d_j) + \sum_{k=1}^K (NMK + d_k) )$. The group lasso subproblem (proximal gradient) is efficient, $d_j \approx O(NMK)$. The fused lasso subproblem is a 1D Total Variation problem, which can be solved very efficiently, $d_k \approx O(M)$ using dynamic programming or $O(M \cdot I_{\text{ADMM}})$ via ADMM. Thus, the total complexity per sweep is dominated by the residual computations and is roughly $O(NMKp)$.

\begin{algorithm}[h!]
\caption{The 3-Stage AJL Algorithm}
\label{alg:main_3stage}
\begin{algorithmic}[1]
\State \textbf{Input:} Data $(\bm{Y}, \bm{X}, \bm{t})$, Basis $\bphi(t)$, initial tuning $\lambda_g^{(0)}, \lambda_f^{(0)}$, final tuning $\lambda_g, \lambda_f$.
\State Construct design matrices $\bm{Z}_{\Phi}, \bm{X}_{\Phi}$ from $\bm{X}, \bm{t}, \bphi(t)$.
\State \textbf{Stage 1: Initial Estimation}
\State Obtain initial consistent coefficient estimates $(\tilde{\bm{A}}, \{\tilde{\bm{B}}_j\})$ by solving the \textit{non-adaptive} problem:
\State $\displaystyle (\tilde{\bm{A}}, \tilde{\bm{B}}_j) \leftarrow \argmin_{\bm{A}, \bm{B}_j} \frac{1}{2N}\|\bm{Y} - \bm{Z}_{\Phi}\bm{A} - \sum_{j=1}^p \bm{X}_{\Phi, j}\bm{B}_j\|_F^2 + \lambda_g^{(0)} \sum_{j} \norm{\bm{B}_j}_F + \lambda_f^{(0)} \sum_{k} \norm{\bm{D}_1\bm{a}_k}_1$
\State \quad $\triangleright$ \textit{This is solved using the BCD-ADMM procedure (Algorithm \ref{alg:bcd_admm})}.
\State \textbf{Stage 2: Weight Calculation}
\State For $j=1, \dots, p$: $\hat{w}_{g,j} \leftarrow (\|\tilde{\bm{B}}_j\|_F^{\gamma_g} + \epsilon_g)^{-1}$
\State For $k=1, \dots, K$, $m=1, \dots, M-1$: $\hat{w}_{f,km} \leftarrow (\abs{\tilde{a}_{k,m+1} - \tilde{a}_{k,m}}^{\gamma_f} + \epsilon_f)^{-1}$
\State \textbf{Stage 3: Adaptive Estimation}
\State Obtain the final adaptive coefficient estimates $(\hatA, \{\hatB_j\})$ by solving (\ref{eq:adaptive_objective_final_func}):
\State \quad $\triangleright$ \textit{This is also solved using BCD-ADMM (Algorithm \ref{alg:bcd_admm})}.
\State \textbf{Output:} Final functional estimates $\hatFuncAlpha_k(t) = \bphi(t)^\T \hat{\bm{a}}_k$, $\hatFuncBeta_{jk}(t) = \bphi(t)^\T \hat{\bm{b}}_{jk}$.
\end{algorithmic}
\end{algorithm}

The core solver for both Stage 1 and Stage 3 is a Block Coordinate Descent (BCD) algorithm, which iterates between optimizing $\bm{A}$ and the set of $\{\bm{B}_j\}$. Each of these subproblems is still complex and is solved using the Alternating Direction Method of Multipliers (ADMM) \citep{boyd2011distributed}.

\begin{algorithm}[h!]
\caption{BCD-ADMM for the Reduced AJL Problem (Stages 1 and 3 of Algorithm \ref{alg:main_3stage}, Intercept-Fusion Focus)}
\label{alg:bcd_admm}
\begin{algorithmic}[1]
\State \textbf{Input:} $\bm{Y}, \bm{Z}_{\Phi}, \{\bm{X}_{\Phi, j}\}$, $\lambda_g, \lambda_f$, weights $\hat{\bm{w}}_g, \hat{\bm{w}}_f$.
\State \textbf{Initialize:} $\hatA^{(0)}, \{\hatB_j^{(0)}\}$. Let $s=0$.
\Repeat
    \State $s \leftarrow s+1$.
    \State \textbf{Update $\bm{B}$ (Adaptive Group Lasso)}
    \State Compute partial residual $\bm{R} = \bm{Y} - \bm{Z}_{\Phi}\hatA^{(s-1)}$.
    \State For each $j=1, \dots, p$:
    \State $\displaystyle \hatB_j^{(s)} \leftarrow \argmin_{\bm{B}_j} \frac{1}{2N}\norm{\bm{R} - \bm{X}_{\Phi, j}\bm{B}_j}_F^2 + \lambda_g \hat{w}_{g,j} \norm{\bm{B}_j}_F$
    \State \quad $\triangleright$ \textit{This is a standard Group Lasso problem, solved via ADMM/proximal gradient.}
    \State EndFor
    \State \textbf{Update $\bm{A}$ (Adaptive Fused Lasso)}
    \State Compute partial residual $\bm{E} = \bm{Y} - \sum_{j=1}^p \bm{X}_{\Phi, j}\hatB_j^{(s)}$.
    \State For each outcome $k=1, \dots, K$ in parallel:
    \State \quad $\displaystyle \hat{\bm{a}}_k^{(s)} \leftarrow \argmin_{\bm{a}_k} \frac{1}{2N}\norm{\bm{E}_{\cdot k} - \bm{Z}_{\Phi}\bm{a}_k}_2^2 + \lambda_f \sum_{m=1}^{M-1} \hat{w}_{f,km} \abs{a_{k,m+1} - a_{k,m}}$
    \State \quad $\triangleright$ \textit{This is a weighted Fused Lasso (1D-TV) problem, solved efficiently via ADMM or dynamic programming.}
    \State EndFor
\Until{convergence of $(\hatA^{(s)}, \{\hatB_j^{(s)}\})$.}
\State \textbf{Return:} $\hatA = \hatA^{(s)}, \{\hatB_j\} = \{\hatB_j^{(s)}\}$.
\end{algorithmic}
\end{algorithm}

{\color{black}Directly minimizing the fully generalized objective \eqref{eq:adaptive_objective_final_func} involves coupled non-smooth penalties on $B_j$. To ensure computational scalability in ultra-high dimensions, we adopt a hierarchical optimization strategy. Algorithm \ref{alg:bcd_admm} details the solver for the primary phase, where we focus on simultaneous variable selection (Group Lasso) and intercept segmentation (Fused Lasso), effectively setting $\lambda_f^\beta = 0$ during the screening process. Refined slope segmentation can be performed as a post-selection step on the active set $\mathcal{S}$.}

\section{Theoretical Analysis}
\label{sec:theory}
We establish the oracle properties of our functional estimator $\hatFuncBeta(t)$. The theory is developed for the B-spline coefficient estimators $(\hatA, \hatB)$ under an asymptotic regime where the number of basis functions $M = M_N$ may grow with $N$, and $p = p_N$ may grow as well.

{\color{black}Unlike prior works that treat structural discovery and variable selection in isolation, our theoretical analysis establishes the oracle properties for the comprehensive Generalized AJL objective \eqref{eq:adaptive_objective_final_func}, fully incorporating the slope-fusion regularization component. Leveraging the block-decomposability of B-spline coefficients and the structural isomorphism between intercept and slope difference penalties, we derive the consistency and asymptotic normality for the full estimator. Furthermore, we explicitly provide the theoretical justification for the Hierarchical Regularization Strategy. By establishing the \textit{Sure Screening Property} of the initial stage, we guarantee that the computationally efficient screening phase consistently recovers the true active set, thereby validating the asymptotic optimality of the sequential refinement procedure.}

\subsection{Assumptions}
We require a set of standard, rigorous assumptions for high-dimensional functional regression.
Let $\mathcal{S} = \{j : \|\starFuncBeta_j\|_{L^2} > 0\}$ be the true active set of functional covariates ($s = |\mathcal{S}|$); and $\mathcal{C}_k = \{m : a_{k,m+1}^* \neq a_{k,m}^*\}$ be the changepoints in the B-spline coefficients for $\alpha_k(t)$.

\begin{assumption}[Error Distribution] \label{assump:error_func}
The error vectors $\bm{\epsilon}_i = (\epsilon_{i,t_{i1},1}, \dots, \epsilon_{i,t_{in_i},K})^\T$ for each subject $i$ are independent, zero-mean, sub-Gaussian random vectors. Their covariance $\bm{\Sigma}_{\epsilon, i} = \mathbb{E}[\bm{\epsilon}_i \bm{\epsilon}_i^\T]$ is a block matrix that allows for arbitrary correlation within the subject (i.e., across time and outcomes). This clustered sub-Gaussian assumption is more realistic than i.i.d. errors.
\end{assumption}

\begin{assumption}[Approximation Error] \label{assump:approx_func}
The true functions $\alpha_k^*(t)$ and $\beta_{jk}^*(t)$ are in a Sobolev space of order $d > 1$. The B-spline basis (with $M$ functions) is chosen such that the approximation error is well-controlled: $\max_{k} \|\alpha_k^* - \bphi^\T \bm{a}_k^*\|_\infty = O_p(M^{-d})$ and $\max_{j,k} \|\beta_{jk}^* - \bphi^\T \bm{b}_{jk}^*\|_\infty = O_p(M^{-d})$.
\end{assumption}

\begin{assumption}[Initial Estimator Consistency] \label{assump:consistency_func3}
The initial B-spline coefficient estimators $(\tilde{\bm{A}}, \{\tilde{\bm{B}}_j\})$ are consistent. Specifically, they converge to the true coefficients $(\starA, \{\starB_j\})$ at a sufficient rate, e.g., $\max_j \|\tilde{\bm{B}}_j - \starB_j\|_F = O_p(\sqrt{M \log(p)/N})$.
\end{assumption}

{\color{black}\begin{assumption}[Design Matrix Regularity] \label{assump:rsc_func4}
The design matrix $\Phi$ constructed from the B-spline bases satisfies the following regularity conditions on the high-dimensional coefficient space:

\begin{enumerate}
    \item \textbf{Restricted Strong Convexity (RSC):} For any $\Delta$ in the restricted cone $\mathbb{C}= \{ \Delta : \mathcal{R}(\theta^* + \Delta) \le \mathcal{R}(\theta^*) + \text{tol} \}$ induced by the adaptive penalties, the empirical loss satisfies a block-wise Restricted Strong Convexity (RSC) condition \citep{negahban2012unified}: 
    \begin{equation}
        L_{N}(\theta^{*}+\Delta)-L_{N}(\theta^{*})-\langle\nabla L_{N}(\theta^{*}),\Delta\rangle \ge \kappa\|\Delta\|_{2}^{2}-\tau\Psi^{2}(\Delta),
    \end{equation}
    where $\kappa > 0$ is the curvature constant and $\Psi(\Delta)$ is the regularization function (e.g., the weighted $\ell_1$-norm).

    \item \textbf{Block-Irrepresentable Condition (Block-IRC):} Let $\mathbf{H} = \frac{1}{N}\Phi^T \Phi$ be the empirical covariance matrix. There exists an incoherence parameter $\eta \in (0, 1]$ such that for the true active set $\mathcal{S}$ and any inactive block $j \notin \mathcal{S}$,
    \begin{equation}
        \|\mathbf{H}_{j \mathcal{S}} (\mathbf{H}_{\mathcal{S}\mathcal{S}})^{-1}\|_{\infty, \text{block}} \le 1 - \eta.
    \end{equation}
\end{enumerate}
\end{assumption}

\begin{remark}[Adaptation and Relaxation]\label{rem:Adaptation and Relaxation}
We explicitly note that strict application of these conditions requires adaptation in our functional setting. 
First, regarding \textbf{RSC}: Although B-spline bases exhibit intrinsic local correlations, the \textit{Riesz basis property} (Assumption \ref{assump:basis_func}) ensures that the eigenvalues of the Gram matrix are bounded, implying that curvature in the coefficient space ($\ell_2$-norm) effectively translates to identifiability in the functional space ($L^2$-norm).
Second, regarding \textbf{Block-IRC}: While standard Lasso consistency requires the strict incoherence condition above, our \textit{Adaptive} framework relaxes this requirement. As demonstrated in the proof of Theorem \ref{thm:support_recovery}, the diverging adaptive weights for noise variables ($\hat{w}_{g,j} \to \infty$) allow the estimator to achieve consistent selection even when the strict Block-IRC is violated (i.e., in the presence of higher collinearity), aligning with our simulation results in Scenario \ref{tab:sim_S4}.
\end{remark}
}

\begin{assumption}[Basis Functions] \label{assump:basis_func}
The basis functions are uniformly bounded, $\sup_{t \in \mathcal{T}} \norm{\bphi(t)}_\infty \le C_{\phi}$ for some constant $C_{\phi} < \infty$. The matrix $\bm{\Sigma}_{\phi} = \mathbb{E}[\bphi(t)\bphi(t)^\T]$ is positive definite with eigenvalues bounded in $[\lambda_{\min}, \lambda_{\max}]$, where $0 < \lambda_{\min} \le \lambda_{\max} < \infty$. This is a standard condition satisfied by B-splines. {\color{black}In the general case of time-varying covariates (as defined in Section \ref{subsec:tvcm}), we further assume that the expected covariance operator of the predictor processes $x(t)$ is non-degenerate with respect to the B-spline basis. This ensures that the time-varying design matrix $X_{\Phi}$ inherits the restricted eigenvalue properties required for the RSC condition (Assumption \ref{assump:rsc_func4}).}
\end{assumption}

{\color{black}
\begin{assumption}[Tuning and Adaptive Weights]\label{assump:tuning_func}
Let $\gamma_g, \gamma_\alpha, \gamma_\beta > 1$ be pre-specified constants. We define the adaptive weights as:
\[
w_{g,j} = \{\|\tilde{\bm{B}}_j\|_F+\tau_N\}^{-\gamma_g}, \quad
w_{f,k,m}^\alpha = \{|\Delta \tilde a_{k,m}|+\tau_N\}^{-\gamma_\alpha}, \quad
w_{f,j,k,m}^\beta = \{|\Delta \tilde b_{jk,m}|+\tau_N\}^{-\gamma_\beta},
\]
where $\tau_N \asymp N^{-1/2}$ is a ridge parameter to prevent division by zero.

Suppose the initial estimator $(\tilde{\bm{A}}, \tilde{\bm{B}})$ satisfies the standard consistency rates such that for the true active set $\mathcal{S}$, intercept changepoints $\mathcal{C}^\alpha$, and slope changepoints $\mathcal{C}^\beta$:
\begin{itemize}
    \item \textbf{Selection Strength:} $\min_{j \in \mathcal{S}} \|\tilde{\bm{B}}_j\|_F \ge c_B N^{-\kappa}$ for some $\kappa \in [0, 1/2)$;
    \item \textbf{Jump Detection:} $\min_{(k,m) \in \mathcal{C}^\alpha} |\Delta \tilde a_{k,m}| \ge c_\alpha N^{-\kappa}$ and $\min_{(j,k,m) \in \mathcal{C}^\beta} |\Delta \tilde b_{jk,m}| \ge c_\beta N^{-\kappa}$.
\end{itemize}

Let the tuning parameters obey the following rates ensuring correct selection and structural discovery:
\[
\lambda_g \asymp c_g \sqrt{\frac{\log p}{N}}, \quad
\lambda_f^\alpha \asymp c_\alpha \sqrt{\frac{\log (MK)}{N}}, \quad
\lambda_f^\beta \asymp c_\beta \sqrt{\frac{\log (pMK)}{N}}.
\]
Assume that as $N \to \infty$, all $\lambda \to 0$ and $\sqrt{N}\lambda \to \infty$. Furthermore, the adaptive weights satisfy the \textit{Oracle Separation Conditions}:
\begin{align*}
    \max_{j\in \mathcal{S}} \sqrt{N}\lambda_g w_{g,j} = o_p(1), & \quad
    \min_{j\notin \mathcal{S}} \sqrt{N}\lambda_g w_{g,j} \to \infty; \\
    \max_{(k,m)\in \mathcal{C}^\alpha} \sqrt{N}\lambda_f^\alpha w_{f,k,m}^\alpha = o_p(1), & \quad
    \min_{(k,m)\notin \mathcal{C}^\alpha} \sqrt{N}\lambda_f^\alpha w_{f,k,m}^\alpha \to \infty; \\
    \max_{(j,k,m)\in \mathcal{C}^\beta} \sqrt{N}\lambda_f^\beta w_{f,j,k,m}^\beta = o_p(1), & \quad
    \min_{(j,k,m)\notin \mathcal{C}^\beta} \sqrt{N}\lambda_f^\beta w_{f,j,k,m}^\beta \to \infty.
\end{align*}
\end{assumption}

\begin{assumption}[Undersmoothing for Inference]\label{assump:undersmoothing}
     In addition to the conditions in Assumption 6, we require the number of basis functions $M_N$ to satisfy $N M_N^{-2d} \to 0$ as $N \to \infty$. This implies $M_N \gg N^{1/2d}$. Justification: This condition ensures that the squared approximation bias decays faster than the parametric rate $N^{-1}$, rendering the bias asymptotically negligible compared to the variance for valid inference.
\end{assumption}

In essence, Assumptions \ref{assump:error_func}-\ref{assump:undersmoothing} ensure that: ($i$) the errors are clustered sub-Gaussian (Assumption \ref{assump:error_func}); ($ii$) the functional approximation bias is controlled (Assumption \ref{assump:approx_func}); ($iii$) the initial estimator is consistent (Assumption \ref{assump:consistency_func3}); ($iv$) the functional design satisfies a restricted strong convexity condition (Assumption \ref{assump:rsc_func4}) together with well-behaved basis functions (Assumption \ref{assump:basis_func}); and ($v$) the tuning parameters and adaptive weights lie in an appropriate high-dimensional regime (Assumption \ref{assump:tuning_func}); ($vi$) Assumption \ref{assump:undersmoothing} ensures that the squared approximation bias decays faster than the parametric rate $N^{-1}$, rendering the bias asymptotically negligible compared to the variance for valid inference.}

\subsection{Non-Asymptotic Analysis}
Since our computational strategy relies on a hierarchical screening procedure, our first theoretical priority is to establish the validity of this reduction. The following theorem guarantees that the initial screening stage retains all relevant functional predictors with probability approaching one. 

\begin{theorem}[Validity of Hierarchical Screening] \label{prop:screening_validity}
Consider the \textbf{Screening Phase} of the proposed algorithm, where the estimator $\hat{\theta}^{\text{screen}} = (\hat{\bm{A}}, \hat{\bm{B}})$ is obtained by minimizing the objective function \eqref{eq:adaptive_objective_final_func} with the slope-fusion constraint forced to zero (i.e., $\lambda_f^\beta = 0$), while $\lambda_g$ and $\lambda_f^\alpha$ satisfy the rates in Assumption \ref{assump:tuning_func}.
Let $\hat{\mathcal{S}}_{\text{screen}} = \{j : \|\hat{\bm{B}}_j\|_F > 0\}$ denote the active set selected in this phase. Under the Beta-min condition (Assumption \ref{assump:tuning_func}) and the Block-Irrepresentable Condition, the screening procedure satisfies the \textbf{Sure Screening Property}:
\[
\mathbb{P}(\mathcal{S}^* \subseteq \hat{\mathcal{S}}_{\text{screen}}) \to 1 \quad \text{as } N \to \infty.
\]
\end{theorem}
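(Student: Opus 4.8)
\textbf{Proof plan for Theorem~\ref{prop:screening_validity}.}
The plan is to show that every true signal block $j \in \mathcal{S}^*$ satisfies $\|\hat{\bm{B}}_j^{\text{screen}}\|_F > 0$ with probability tending to one, by combining a deterministic KKT/oracle argument with a probabilistic control of the empirical process. First I would write the KKT (subgradient) optimality conditions for the convex screening objective \eqref{eq:adaptive_objective_final_func} with $\lambda_f^\beta = 0$. For each block $j$, these read $\frac{1}{N}\bm{X}_{\Phi,j}^\T(\bm{Y} - \bm{Z}_\Phi\hat{\bm{A}} - \sum_{j'}\bm{X}_{\Phi,j'}\hat{\bm{B}}_{j'}) = \lambda_g \hat{w}_{g,j}\,\bm{G}_j$, where $\bm{G}_j = \hat{\bm{B}}_j/\|\hat{\bm{B}}_j\|_F$ if $\hat{\bm{B}}_j \neq \bm{0}$ and $\|\bm{G}_j\|_F \le 1$ otherwise; an analogous condition holds for the fused intercept part. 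The key observation is that if some signal block $j_0 \in \mathcal{S}^*$ were incorrectly zeroed out, the residual would retain the full signal $\bm{X}_{\Phi,j_0}\bm{B}_{j_0}^*$ (plus approximation bias and noise), and the corresponding gradient component would have Frobenius norm bounded below by roughly $\kappa \|\bm{B}_{j_0}^*\|_F$ minus lower-order error terms — which I want to show exceeds $\lambda_g \hat{w}_{g,j_0}$, contradicting the subgradient condition $\|\bm{G}_{j_0}\|_F \le 1$.

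The second step is to construct the oracle reference estimator $\hat{\bm{\theta}}^{\text{oracle}}$ that solves the same objective restricted to $\{\bm{B}_j = \bm{0} : j \notin \mathcal{S}^*\}$, and show via the standard RSC-based analysis (Assumption~\ref{assump:rsc_func4}(1)) that $\|\hat{\bm{A}}^{\text{oracle}} - \bm{A}^*\|_F$ and $\max_{j\in\mathcal{S}^*}\|\hat{\bm{B}}_j^{\text{oracle}} - \bm{B}_j^*\|_F$ are $O_p(\sqrt{sM\log(p)/N})$ up to the $O(M^{-d})$ approximation bias. Using the Beta-min / Selection-Strength condition in Assumption~\ref{assump:tuning_func}, $\min_{j\in\mathcal{S}^*}\|\bm{B}_j^*\|_F \gtrsim N^{-\kappa}$ with $\kappa < 1/2$, so the oracle estimate stays bounded away from zero: $\|\hat{\bm{B}}_j^{\text{oracle}}\|_F > 0$ for all $j \in \mathcal{S}^*$ on a high-probability event, provided the signal strength dominates the estimation error, i.e. $N^{-\kappa} \gg \sqrt{sM\log p/N}$, which holds under the stated rate conditions. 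The third step is the standard "primal-dual witness" verification: I plug $\hat{\bm{\theta}}^{\text{oracle}}$ into the full KKT system and check that the dual feasibility condition for the inactive blocks, $\|\frac{1}{N}\bm{X}_{\Phi,j}^\T \hat{\bm{E}}^{\text{oracle}}\|_F \le \lambda_g \hat{w}_{g,j}$ for $j \notin \mathcal{S}^*$, holds with high probability. Here I invoke the Block-Irrepresentable Condition (Assumption~\ref{assump:rsc_func4}(2)) to bound the "deterministic" part of the gradient by $(1-\eta)\lambda_g \cdot \max_{j'\in\mathcal{S}^*}\hat{w}_{g,j'}$, and a sub-Gaussian maximal inequality (using Assumption~\ref{assump:error_func} and the boundedness of $\bphi$ from Assumption~\ref{assump:basis_func}) to bound the stochastic part by $O_p(\sqrt{M\log p / N})$ uniformly over the $p$ inactive blocks; the diverging noise weights $\min_{j\notin\mathcal{S}^*}\sqrt{N}\lambda_g\hat{w}_{g,j} \to \infty$ from Assumption~\ref{assump:tuning_func} then give ample slack. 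By uniqueness of the solution to the strictly-restricted-convex problem (RSC on the active coordinates), the oracle solution equals the actual screening solution on this event, so $\hat{\mathcal{S}}_{\text{screen}} \supseteq \mathcal{S}^*$.

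The main obstacle I anticipate is the interaction between the group-lasso and fused-lasso penalties in the KKT system: the witness construction must simultaneously control the subgradient of the fused-intercept penalty $\lambda_f^\alpha \sum_k \hat{w}_{f,km}^\alpha |\Delta a_{k,m}|$ while verifying dual feasibility for the group blocks, and the cross-terms $\frac{1}{N}\bm{X}_{\Phi,j}^\T\bm{Z}_\Phi$ couple the two. To handle this I would exploit the block structure: since the intercept coefficients $\bm{A}$ are $M\times K$ (a fixed-dimension-per-outcome object) while the blocks $\bm{B}_j$ are what carry the ultra-high-dimensional count $p$, I can first profile out $\bm{A}$ (solving the fused-lasso sub-problem in closed-form-up-to-a-projection as in Algorithm~\ref{alg:bcd_admm}) and then reduce to an effective group-lasso problem with a modified design $\tilde{\bm{X}}_{\Phi,j} = (\bm{I} - \bm{P}_{\bm{Z}_\Phi})\bm{X}_{\Phi,j}$ projected off the intercept space. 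The RSC and Block-IRC conditions in Assumption~\ref{assump:rsc_func4} are stated for the full design $\bm{\Phi}$, so I need the incoherence between the B-spline/difference-operator intercept space and the covariate blocks — precisely the "structural interference" flagged in the introduction — which I would either fold into the RSC cone definition or add as a mild supplementary incoherence remark; modulo that, the remainder is the standard adaptive-penalty sure-screening argument and the approximation-bias terms are absorbed because $\sqrt{N}M^{-d} \to 0$ follows from the undersmoothing condition (Assumption~\ref{assump:undersmoothing}), which is stronger than what is needed here.
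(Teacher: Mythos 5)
Your proposal is correct, and its first step coincides with the paper's actual proof of this theorem: the paper argues by contradiction that if some $j_0 \in \mathcal{S}^*$ were zeroed out, the KKT condition $\|\nabla_{\bm{B}_{j_0}} L_N(\hat\theta)\|_F \le \lambda_g \hat{w}_{g,j_0}$ would be violated, because the gradient decomposes into a target-signal term (bounded below via the restricted eigenvalue and Beta-min conditions), an interference term (controlled by Block-IRC), and a stochastic term (controlled by sub-Gaussian concentration), and the signal dominates the bounded adaptive penalty threshold. Where you diverge is in your second and third steps: you additionally construct the oracle estimator restricted to $\mathcal{S}^*$ and run a full primal--dual witness verification of dual feasibility on the inactive blocks. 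That establishes the stronger exact-recovery statement $\hat{\mathcal{S}}_{\text{screen}} = \mathcal{S}^*$, which the paper defers to its support-recovery theorem (proved there precisely by PDW); for the sure screening property alone it is not needed. The PDW route does buy you something, though: the paper's bound on the interference term requires consistency of the remaining blocks $\hat{\bm{B}}_k$, $k \neq j_0$, \emph{conditional on} block $j_0$ being zeroed, which it asserts ``holds under the cone condition'' without detail; your oracle-plus-witness construction sidesteps this circularity cleanly. Your proposal to profile out $\bm{A}$ and work with the projected design $(\bm{I} - \bm{P}_{\bm{Z}_\Phi})\bm{X}_{\Phi,j}$ to decouple the fused-intercept subgradient from the group-block dual feasibility is also a genuine improvement in rigor — the paper's proof does not address this coupling at all, even though the incoherence between the intercept space and the covariate blocks is exactly the ``structural interference'' the introduction flags. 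The one caveat is that you would need to verify that the RSC and Block-IRC conditions of Assumption~\ref{assump:rsc_func4}, stated for the full design, transfer to the projected design; you correctly identify this as a point requiring either a supplementary incoherence assumption or an adjusted cone definition.
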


\begin{remark}
    Theorem \ref{prop:screening_validity} provides the theoretical justification for our Hierarchical Regularization Strategy. It guarantees that, with probability approaching one, the initial screening step retains all truly relevant predictors. Consequently, the subsequent refinement step (where $\lambda_f^\beta > 0$ is activated) operates within a subspace that contains the true model, ensuring that the final structural discovery is asymptotically unbiased.
\end{remark} 

Conditioned on the sure screening property established in Theorem \ref{prop:screening_validity}, we now investigate the asymptotic properties of the refined estimator. The following results demonstrate that, by optimizing the generalized objective over the selected subspace, the AJL estimator achieves optimal estimation rates and consistent structural recovery.

Let $\bm\theta = \text{vec}(\bm{A}, \bm{B}_1, \dots, \bm{B}_p)$ be the vector of all coefficients. Following the modern framework for high-dimensional M-estimators \citep{buhlmann2011statistics, negahban2012unified}, we establish non-asymptotic error bounds.

\begin{lemma}[Basic Inequality]\label{lem: basic inq}
Let $\hat\theta$ be the solution to \eqref{eq:adaptive_objective_final_func}. For any $\bm\theta^*$, the following inequality holds:
\begin{equation}
L_N(\hat\theta) - L_N(\bm\theta^*) +P_{\mathrm{ada}}(\hat\theta) - P_{\mathrm{ada}}(\bm\theta^*) \le 0 \label{eq:basic-inequality}    
\end{equation}
By convexity of $L_N$ and properties of the decomposable penalty $P_{\mathrm{ada}}$, this leads to a bound on the error $\bDelta = \hat\theta - \bm\theta^*$ based on the gradient $\nabla L_N(\bm\theta^*)$ and the penalty structure.
\end{lemma}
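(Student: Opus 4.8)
The plan is to prove Lemma \ref{lem: basic inq} as a direct consequence of the optimality of $\hat\theta$, exactly along the lines of the standard ``basic inequality'' in high-dimensional M-estimation \citep{buhlmann2011statistics, negahban2012unified}. First I would write the full objective as $Q(\theta) = L_N(\theta) + P_{\mathrm{ada}}(\theta)$, where $L_N$ denotes the scaled least-squares loss $\frac{1}{2N}\|\bm{Y} - \bm{Z}_\Phi\bm{A} - \sum_j \bm{X}_{\Phi,j}\bm{B}_j\|_F^2$ and $P_{\mathrm{ada}}$ collects the three adaptive penalty terms. Since $\hat\theta$ is by definition a global minimizer of $Q$ over the (convex) feasible set — which in the refinement stage is the subspace $\{B_j = 0,\ \forall j \notin \hat{\mathcal{S}}_{\mathrm{scr}}\}$, and $\theta^*$ is taken feasible by the sure screening property of Theorem \ref{prop:screening_validity} — we have $Q(\hat\theta) \le Q(\theta^*)$. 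Rearranging this single inequality yields \eqref{eq:basic-inequality} immediately. No deeper argument is needed for the first display; the content of the lemma is really the setup for what follows.

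For the second half of the statement — converting \eqref{eq:basic-inequality} into a usable bound on $\bDelta = \hat\theta - \theta^*$ — I would proceed in three steps. (1) Expand the loss difference using convexity: $L_N(\hat\theta) - L_N(\theta^*) \ge \langle \nabla L_N(\theta^*), \bDelta\rangle$, and for the quadratic loss this is in fact an equality plus the nonnegative quadratic term $\frac{1}{2N}\|\Phi\bDelta\|_F^2$. Substituting into \eqref{eq:basic-inequality} gives $\frac{1}{2N}\|\Phi\bDelta\|_F^2 \le -\langle\nabla L_N(\theta^*), \bDelta\rangle + P_{\mathrm{ada}}(\theta^*) - P_{\mathrm{ada}}(\hat\theta)$. (2) Control the gradient (``noise'') term by Hölder's inequality against the dual norm of the composite regularizer: $|\langle\nabla L_N(\theta^*), \bDelta\rangle| \le \Psi^*(\nabla L_N(\theta^*))\,\Psi(\bDelta)$, where $\Psi$ is the weighted $\ell_1$/group norm defining $P_{\mathrm{ada}}$ and $\Psi^*$ its dual; one then invokes the sub-Gaussian error assumption (Assumption \ref{assump:error_func}) together with the basis-boundedness (Assumption \ref{assump:basis_func}) to show $\Psi^*(\nabla L_N(\theta^*)) \le \frac{1}{2}\min(\lambda_g, \lambda_f^\alpha, \lambda_f^\beta)$ with high probability, which is precisely the role of the tuning rates in Assumption \ref{assump:tuning_func}. (3) Use decomposability of $P_{\mathrm{ada}}$ with respect to the model subspace (active blocks $\mathcal{S}$, changepoint supports $\mathcal{C}^\alpha, \mathcal{C}^\beta$) to bound $P_{\mathrm{ada}}(\theta^*) - P_{\mathrm{ada}}(\hat\theta) \le \Psi(\bDelta_{\mathcal{M}}) - \Psi(\bDelta_{\mathcal{M}^\perp})$, which combined with the noise bound forces $\bDelta$ into the restricted cone $\mathbb{C}$ of Assumption \ref{assump:rsc_func4}, namely $\Psi(\bDelta_{\mathcal{M}^\perp}) \le 3\Psi(\bDelta_{\mathcal{M}})$ (up to an additive term from the approximation bias of Assumption \ref{assump:approx_func}). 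At that point the RSC inequality applies and delivers the $\ell_2$ error bound $\|\bDelta\|_2 \lesssim \kappa^{-1}\sqrt{(s + |\mathcal{C}^\alpha| + |\mathcal{C}^\beta|)M}\cdot \lambda_{\max}$.

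The main obstacle is bookkeeping the composite, heterogeneously-weighted regularizer $P_{\mathrm{ada}} = \lambda_g\sum_j \hat{w}_{g,j}\|B_j\|_F + \lambda_f^\alpha\sum \hat{w}^\alpha\,|\Delta a| + \lambda_f^\beta\sum \hat{w}^\beta\,|\Delta b|$ as a single decomposable norm, since the three pieces live on different linear images of $\theta$ (block identity for the group part, the difference operator $\bm{D}_1$ for the fused parts) and carry random data-dependent weights. Concretely, I expect the delicate point to be verifying that $P_{\mathrm{ada}}$ still satisfies a decomposability property of the form $P_{\mathrm{ada}}(u + v) = P_{\mathrm{ada}}(u) + P_{\mathrm{ada}}(v)$ for $u \in \mathcal{M}$, $v \in \mathcal{M}^\perp$ with $\mathcal{M}$ the span of active blocks and ``jump-active'' difference coordinates, and that the random weights on the active/jump coordinates are $O_p(1)$ while those on the inactive coordinates diverge — this is exactly the content of the Oracle Separation Conditions in Assumption \ref{assump:tuning_func}, but one must be careful that the $\tau_N$ ridge term keeps everything finite and that the fused-lasso difference operator does not destroy decomposability (it does not, because $\bm{D}_1$ acts coordinatewise on a reparametrization $u_{k,m} = a_{k,m+1} - a_{k,m}$). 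A secondary technical nuisance is that, because of the B-spline approximation bias, $\theta^*$ is not exactly the minimizer of the population loss; one absorbs the resulting $O_p(M^{-d})$ term into the cone constraint using Assumption \ref{assump:approx_func}, and the undersmoothing condition (Assumption \ref{assump:undersmoothing}) ensures it is asymptotically dominated. Everything downstream of establishing the cone membership is a routine application of the RSC condition and standard algebra, so I would state those steps briefly and reference \citet{negahban2012unified} for the template.
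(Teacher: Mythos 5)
Your proposal is correct and follows essentially the same route as the paper: the displayed inequality is obtained identically from global optimality of $\hat\theta$ (i.e.\ $Q(\hat\theta)\le Q(\theta^*)$ and rearranging), and your subsequent steps---convexity lower bound via $\langle\nabla L_N(\theta^*),\bDelta\rangle$, dual-norm control of the gradient under the tuning rates, and the active/inactive penalty decomposition leading to the cone condition and RSC---mirror the paper's own development (its triangle-inequality split over $\mathcal{S},\mathcal{S}^c,\mathcal{C},\mathcal{C}^c$ is exactly your decomposability step in coordinates). Your additional remarks on the difference-operator reparametrization and the $O_p(M^{-d})$ approximation bias are sound refinements that the paper glosses over but do not change the argument.
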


{\color{black}\begin{theorem}[Estimation Error Rate of Generalized AJL] \label{thm:estimation_rate}
Under Assumptions \ref{assump:error_func}-\ref{assump:tuning_func}, and assuming the tuning parameters satisfy the rates specified in Assumption \ref{assump:tuning_func} (i.e., dominating the stochastic noise levels), the generalized AJL estimator $\hat\theta = (\hat{\bm{A}}, \hat{\bm{B}})$ satisfies the following non-asymptotic error bounds with probability approaching one:

\begin{align*}
    \|\hat{\bm{A}} - \bm{A}^*\|_F^2 & \lesssim |\mathcal{C}^\alpha| (\lambda_f^\alpha)^2, \\
    \sum_{j=1}^p \|\hat{\bm{B}}_j - \bm{B}_j^*\|_F^2 & \lesssim \underbrace{s \lambda_g^2}_{\text{Sparsity Cost}} + \underbrace{|\mathcal{C}^\beta| (\lambda_f^\beta)^2}_{\text{Dynamic Cost}}.
\end{align*}

In a unified form, the total estimation error is bounded by:
$$ \|\hat\theta - \theta^*\|_2^2 \lesssim |\mathcal{C}^\alpha| (\lambda_f^\alpha)^2 + s \lambda_g^2 + |\mathcal{C}^\beta| (\lambda_f^\beta)^2. $$

These bounds explicitly characterize the trade-off between model complexity and estimation precision. Specifically, the error is governed by the \textbf{structural sparsity} of the true model: the number of active predictors ($s$), the number of baseline regime shifts ($|\mathcal{C}^\alpha|$), and the number of dynamic structural changes in covariate effects ($|\mathcal{C}^\beta|$).
\end{theorem}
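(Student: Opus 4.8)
The plan is to run the standard restricted-strong-convexity program for regularized $M$-estimators with decomposable penalties \citep{negahban2012unified}, adapted to the combined weighted group-lasso / fused-lasso geometry. Throughout, $\bm\theta^* = \mathrm{vec}(\bm{A}^*, \bm{B}_1^*, \dots, \bm{B}_p^*)$ denotes the vector of \emph{best} B-spline coefficients, so that $\bm{Y} = \bm{Z}_\Phi \bm{A}^* + \sum_{j} \bm{X}_{\Phi,j}\bm{B}_j^* + \bm{E} + \bm{r}$ with $\bm{r}$ the deterministic approximation remainder, $\|\bm{r}\|_\infty = O_p(M^{-d})$ by Assumption~\ref{assump:approx_func}. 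I would first carry out the \emph{deterministic reduction}: from the basic inequality of Lemma~\ref{lem: basic inq} and convexity of $L_N$, one obtains $\langle \nabla L_N(\bm\theta^*), \bDelta\rangle \le P_{\mathrm{ada}}(\bm\theta^*) - P_{\mathrm{ada}}(\hat\theta)$ for $\bDelta = \hat\theta - \bm\theta^*$, and then bounds the left side by $\Psi^*(\nabla L_N(\bm\theta^*))\,\Psi(\bDelta)$ via H\"older's inequality in the norm/dual-norm pair attached to the weighted regularizer $\Psi = P_{\mathrm{ada}}$.

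Next comes the \emph{probabilistic step}: I would exhibit a high-probability event on which $\Psi^*(\nabla L_N(\bm\theta^*))$ is dominated by half of the effective penalty level. The gradient decomposes into a stochastic part $\tfrac1N\Phi^\T\bm{E}$ and a bias part $\tfrac1N\Phi^\T\bm{r}$. For the stochastic part, the clustered sub-Gaussian structure (Assumption~\ref{assump:error_func}) together with bounded, well-conditioned basis functions (Assumption~\ref{assump:basis_func}) and maximal/Bernstein inequalities with union bounds over the $p$ predictor blocks, the $MK$ intercept differences, and the $pMK$ slope differences give $\max_j \|\tfrac1N\bm{X}_{\Phi,j}^\T\bm{E}\|_F \lesssim \sqrt{(M+\log p)/N}$ and the matching $\sqrt{\log(MK)/N}$, $\sqrt{\log(pMK)/N}$ rates for the cumulative-sum functionals dual to the (weighted) total-variation terms; these are precisely the orders of $\lambda_g, \lambda_f^\alpha, \lambda_f^\beta$ in Assumption~\ref{assump:tuning_func}, once one uses that the adaptive weights are $O_p(1)$ on the active blocks and at the true changepoints (a consequence of the initial-estimator rate of Assumption~\ref{assump:consistency_func3} together with the beta-min lower bounds of Assumption~\ref{assump:tuning_func}). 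The bias part contributes a lower-order term in the relevant dual norm under the assumed scaling of $M$ and is absorbed into the constant.

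The \emph{cone and curvature step} then closes the argument. Decomposability of $\Psi$ with respect to the model subspace pair $(\mathcal M, \mathcal M^\perp)$ --- where $\mathcal M$ encodes (i) the active blocks $j \in \mathcal S$, (ii) the differenced coordinates $\bm{D}_1\bm{a}_k$ supported on $\mathcal C^\alpha$ together with the unpenalized null space of $\bm{D}_1$, and (iii) the differenced coordinates $\bm{D}_1\bm{b}_{jk}$ supported on $\mathcal C^\beta$ --- forces $\bDelta$ into the restricted cone $\mathbb{C}$ of Assumption~\ref{assump:rsc_func4}, the divergence of the weights off $\mathcal M$ only tightening the cone. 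I would then apply the block-wise RSC bound, combine it with the basic inequality, and use Cauchy--Schwarz on each piece of $\Psi(\bDelta_{\mathcal M})$ (e.g. $\sum_{j\in\mathcal S}\|\bDelta_{B_j}\|_F \le \sqrt{s}\,\|\bDelta\|_2$, and analogously for the $\sqrt{|\mathcal C^\alpha|}$ and $\sqrt{|\mathcal C^\beta|}$ terms, using $\|\bm{D}_1\|_{\mathrm{op}} \le 2$) to arrive at $\kappa\|\bDelta\|_2^2 \lesssim \bigl(\lambda_g\sqrt{s} + \lambda_f^\alpha\sqrt{|\mathcal C^\alpha|} + \lambda_f^\beta\sqrt{|\mathcal C^\beta|}\bigr)\|\bDelta\|_2 + \tau\Psi^2(\bDelta)$, the last term being negligible since $\tau(s + |\mathcal C^\alpha| + |\mathcal C^\beta|) = o(\kappa)$ under the stated rates. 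Dividing through and squaring yields $\|\bDelta\|_2^2 \lesssim \kappa^{-2}\bigl(s\lambda_g^2 + |\mathcal C^\alpha|(\lambda_f^\alpha)^2 + |\mathcal C^\beta|(\lambda_f^\beta)^2\bigr)$, and the per-block statements follow by reading off the contribution of $\mathcal M$ restricted to the $\bm{A}$-coordinates (no group term, hence only $|\mathcal C^\alpha|(\lambda_f^\alpha)^2$) versus the $\bm{B}$-coordinates (group sparsity plus slope fusion); since both $P_{\mathrm{ada}}$ and the RSC curvature are block-separable across $\bm{A}$ and $\{\bm{B}_j\}$, the two sub-bounds can be run independently.

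The main obstacle I anticipate is the geometry of the fused-lasso component. The total-variation penalty is not decomposable in the raw coordinates, so the argument must be conducted in the differenced variables $\bm{D}_1\bm{a}_k$, $\bm{D}_1\bm{b}_{jk}$, and since $\bm{D}_1$ is rank-deficient its null space (the unpenalized overall level of each coefficient vector) must be folded into $\mathcal M$; more delicately, one must verify that the RSC curvature $\kappa$ survives this reparametrization, because $\bm{D}_1$ has singular values as small as $O(M^{-1})$. This is exactly where the Riesz-basis property of Assumption~\ref{assump:basis_func} (and Remark~\ref{rem:Adaptation and Relaxation}) enters: it keeps $\bm{H} = N^{-1}\Phi^\T\Phi$ uniformly well-conditioned on the low-dimensional subspaces touched by $\bDelta$, so that curvature in the differenced parametrization transfers back to $\ell_2$-curvature in $\bDelta$. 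A secondary difficulty is carrying the three distinct penalty levels and union-bound dimensions in parallel so that the clean additive three-term bound emerges rather than a loose single product; this is handled by treating the three penalty blocks as one decomposable regularizer whose dual norm is the maximum of the three rescaled dual norms.
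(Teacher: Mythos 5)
Your proposal follows essentially the same route as the paper's own proof: the basic inequality of Lemma~\ref{lem: basic inq}, a dual-norm concentration bound on $\nabla L_N(\bm\theta^*)$ that justifies the penalty levels of Assumption~\ref{assump:tuning_func}, the resulting cone condition, the RSC bound of Assumption~\ref{assump:rsc_func4}, and Cauchy--Schwarz on the active blocks to extract the $\sqrt{s}$ and $\sqrt{|\mathcal{C}|}$ factors before dividing and squaring. If anything you are more careful than the paper's appendix, which silently drops the slope-fusion penalty $\lambda_f^\beta$ from $P_{\mathrm{ada}}$ and does not discuss the rank-deficiency of $\bm{D}_1$ or the approximation remainder, all of which you handle explicitly.
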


\begin{theorem}[Consistent Support Recovery \& Structural Localization] \label{thm:support_recovery}
Suppose Assumptions \ref{assump:error_func}-\ref{assump:tuning_func} hold. Let the minimum signal strength satisfy the \textit{Beta-min condition}:
$$\begin{aligned}
    \min_{j \in \mathcal{S}} \|\bm{B}_j^*\|_F &\ge c_B \sqrt{\frac{\log p}{N}}, \\
\min_{(k,m) \in \mathcal{C}^\alpha} |\Delta a_{k,m}^*| &\ge c_\alpha \sqrt{\frac{\log (MK)}{N}}, \\
\min_{(j,k,m) \in \mathcal{C}^\beta} |\Delta b_{jk,m}^*| &\ge c_\beta \sqrt{\frac{\log (pMK)}{N}}.
\end{aligned}
$$
Under the Block-Irrepresentable Condition (Block-IRC) on the design matrix, the Generalized AJL estimator satisfies the following properties with probability approaching one (w.p.a.1) as $N \to \infty$:

\begin{enumerate}
    \item \textbf{Model Selection Consistency:} The active set of functional predictors is correctly recovered:
    \[ \text{supp}(\hat{\bm{B}}) = \mathcal{S}. \]
    
    \item \textbf{Baseline Regime Identification:} The locations of structural breaks in the baseline intercept functions are consistently estimated:
    \[ \hat{\mathcal{C}}_k^\alpha = \mathcal{C}_k^{\alpha,*} \quad \forall k = 1, \dots, K. \]
    
    \item \textbf{Dynamic Effect Segmentation:} For all selected predictors $j \in \mathcal{S}$, the internal structural changes in their time-varying effects are correctly localized:
    \[ \hat{\mathcal{C}}_{jk}^\beta = \mathcal{C}_{jk}^{\beta,*} \quad \forall j \in \mathcal{S}, k = 1, \dots, K. \]
\end{enumerate}
\end{theorem}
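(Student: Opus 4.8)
The plan is to prove all three claims simultaneously through a primal--dual witness (PDW) construction, carried out \emph{after} conditioning on the sure-screening event $\{\mathcal S^*\subseteq\hat{\mathcal S}_{\mathrm{screen}}\}$ of Theorem~\ref{prop:screening_validity}, which holds with probability tending to one and replaces the ambient dimension $p$ by $|\hat{\mathcal S}_{\mathrm{screen}}|\lesssim n$ in the refinement stage. I would first absorb the fused penalties into an ordinary weighted $\ell_1$ penalty via the linear change of variables $\psi^\alpha_{k,m}=a_{k,m+1}-a_{k,m}$ and $\psi^\beta_{jk,m}=b_{jk,m+1}-b_{jk,m}$ (the first-difference operator $\bm{D}_1$ being invertible modulo the constant), so that $Q(\theta)$ becomes a jointly convex weighted group-lasso-plus-lasso M-estimator whose decomposability (in the sense of Assumption~\ref{assump:rsc_func4} and Lemma~\ref{lem: basic inq}) is preserved. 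The subgradient stationarity conditions then split into an equality system on the \emph{active} coordinates --- the predictors $j\in\mathcal S$, the true intercept jumps $\mathcal C^\alpha$, and the true slope jumps $\mathcal C^\beta$ --- and strict dual-feasibility inequalities on their complements.

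Next I would introduce the restricted (oracle) estimator $\hat\theta^{\mathrm{orc}}$ obtained by minimizing $Q$ subject to $\bm B_j=\bm 0$ for $j\notin\mathcal S$, $\psi^\alpha_{k,m}=0$ for $(k,m)\notin\mathcal C^\alpha$, and $\psi^\beta_{jk,m}=0$ for $(j,k,m)\notin\mathcal C^\beta$. By the restricted strong convexity of Assumption~\ref{assump:rsc_func4} together with the bounded-eigenvalue (Riesz) property of Assumption~\ref{assump:basis_func}, this problem is strongly convex, so $\hat\theta^{\mathrm{orc}}$ is unique and, by the argument underlying Theorem~\ref{thm:estimation_rate}, satisfies $\|\hat\theta^{\mathrm{orc}}-\theta^*\|_2 = O_p\!\big(\sqrt{(s+|\mathcal C^\alpha|+|\mathcal C^\beta|)\log(pMK)/N}\big)$. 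The stochastic part is to bound $\nabla L_N(\theta^*)=-\tfrac1N\bm{\Phi}^{\!\T}(\bm E+\bm r_{\mathrm{approx}})$: the clustered sub-Gaussian tails of Assumption~\ref{assump:error_func} and the uniform bound $\sup_t\|\bphi(t)\|_\infty\le C_\phi$ of Assumption~\ref{assump:basis_func} give, via a union bound over the at most $p$, $MK$, and $pMK$ relevant coordinates, the respective orders $\sqrt{\log p/N}$, $\sqrt{\log(MK)/N}$, $\sqrt{\log(pMK)/N}$, which match the tuning rates of Assumption~\ref{assump:tuning_func}; the approximation-bias term $\bm r_{\mathrm{approx}}$ is $O_p(M^{-d})$ by Assumption~\ref{assump:approx_func} and is negligible under Assumptions~\ref{assump:tuning_func}--\ref{assump:undersmoothing}.

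The decisive step is to verify that $\hat\theta^{\mathrm{orc}}$, extended by zeros onto the inactive coordinates, is dual-feasible for the full program. For $j\notin\mathcal S$ this requires $\|[\nabla L_N(\hat\theta^{\mathrm{orc}})]_j\|_F<\lambda_g\hat w_{g,j}$; expanding the gradient through the stationarity equation on $\mathcal S$ produces a deterministic term controlled by the Block-IRC of Assumption~\ref{assump:rsc_func4} (a factor $1-\eta$) plus a stochastic term of order $\sqrt{\log p/N}\asymp\lambda_g$. Here the adaptive weights close the argument: the Oracle Separation Condition $\min_{j\notin\mathcal S}\sqrt N\lambda_g\hat w_{g,j}\to\infty$ in Assumption~\ref{assump:tuning_func} sends the right-hand side to infinity relative to the left, so the inequality holds w.p.a.1 \emph{even when Block-IRC is only marginally satisfied}, as anticipated in Remark~\ref{rem:Adaptation and Relaxation}; the identical mechanism with $w^\alpha_{f,k,m}$ and $w^\beta_{f,j,k,m}$ disposes of the non-changepoint intercept and slope differences. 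Conversely, on the active coordinates the Beta-min condition plus the $O_p(\cdot)$ rate of Theorem~\ref{thm:estimation_rate} force $\|\hat{\bm B}_j\|_F\ge\|\bm B_j^*\|_F-\|\hat{\bm B}_j-\bm B_j^*\|_F>0$ for $j\in\mathcal S$, and analogously $|\Delta\hat a_{k,m}|>0$ on $\mathcal C^\alpha$ and $|\Delta\hat b_{jk,m}|>0$ on $\mathcal C^\beta_{jk}$, so no true predictor or jump is dropped. With strict dual feasibility established, convex-duality uniqueness yields $\hat\theta=\hat\theta^{\mathrm{orc}}$ w.p.a.1, which is exactly the three conclusions $\mathrm{supp}(\hat{\bm B})=\mathcal S$, $\hat{\mathcal C}^\alpha_k=\mathcal C^{\alpha,*}_k$, and $\hat{\mathcal C}^\beta_{jk}=\mathcal C^{\beta,*}_{jk}$.

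The main obstacle I anticipate is the coupling, in the dual-feasibility step, between the block-sparsity geometry (the group-lasso cone on $\{\bm B_j\}$) and the total-variation geometry (the $\ell_1$ cone on the differences of $\bm A$ and of the within-block coefficients of each active $\bm B_j$): the three families of inactive constraints share the common design $\bm{\Phi}$ and the operator $\bm{D}_1$, so checking them jointly needs an incoherence statement between the B-spline evaluation matrix and $\bm{D}_1$ that goes beyond a literal reading of Assumption~\ref{assump:rsc_func4}. I expect to handle this by (i) conditioning on the screening event so that the refinement operates in a moderate-dimensional regime where the local support of B-splines makes the relevant cross-Gram blocks banded and the needed incoherence essentially automatic, and (ii) again exploiting that every inactive adaptive weight diverges, so any $O(1)$ degradation of incoherence is absorbed by the $\to\infty$ factor rather than requiring a sharp constant; the undersmoothing condition (Assumption~\ref{assump:undersmoothing}) enters only to keep the $M^{-d}$ bias below the $\sqrt{\log(\cdot)/N}$ noise floor throughout.
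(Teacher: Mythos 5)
Your proposal follows essentially the same route as the paper's proof: a primal--dual witness construction that builds the oracle estimator restricted to the true supports $(\mathcal{S},\mathcal{C}^\alpha,\mathcal{C}^\beta)$, verifies strict dual feasibility on the inactive coordinates by letting the diverging adaptive weights dominate the irrepresentable and stochastic terms (thereby relaxing the Block-IRC, exactly as in Remark~\ref{rem:Adaptation and Relaxation}), and rules out false negatives via the Beta-min condition combined with the rate from Theorem~\ref{thm:estimation_rate}. Your additional details---the explicit first-difference change of variables, the conditioning on the screening event of Theorem~\ref{prop:screening_validity}, and the discussion of incoherence between the spline design and $\bm{D}_1$---are elaborations of steps the paper treats more briefly, not a different argument.
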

This theorem implies that the AJL framework achieves \textbf{full structural oracle properties}. It not only distinguishes relevant variables from noise but also disentangles smooth evolution from abrupt regime shifts within the identified signal, providing a complete structural roadmap of the high-dimensional longitudinal process.
}

The proofs of Lemma~\ref{lem: basic inq}, Theorems~\ref{thm:estimation_rate} and~\ref{thm:support_recovery} are non-trivial and build on extensions of the general frameworks in \citet{negahban2012unified} and \citet{buhlmann2011statistics} to our adaptive fused-group structure; detailed arguments are provided in Appendix~A.

\subsection{Asymptotic Oracle Property}
Beyond point estimation and selection, valid statistical inference is crucial for scientific discovery. Building on the selection consistency (Theorem \ref{thm:support_recovery}), we now establish the asymptotic normality of the functional estimator, justifying the construction of pointwise confidence bands.

We now show that under the specific high-dimensional rates from Assumption \ref{assump:tuning_func}, our estimator achieves the oracle property.
\begin{theorem}[Structural Selection Consistency] \label{thm:consistency_func3}
Suppose Assumptions \ref{assump:error_func}--\ref{assump:tuning_func} hold. Given the sure screening property established in Theorem \ref{prop:screening_validity} (i.e., the true active set is retained in the first stage), the \textbf{Refined Estimator} $\hat{\bm{\theta}}^{ref}$ achieves consistent structural recovery. Specifically, as $N \to \infty$:

\begin{enumerate}
    \item \textbf{Variable Selection Consistency:}
    The estimator correctly identifies the true functional support:
    \begin{equation}
        \text{Pr}\left( \{j : \|\hat{B}^{ref}_j\|_F > 0\} = \mathcal{S}^* \right) \to 1.
    \end{equation}

    \item \textbf{Changepoint Detection Consistency:}
    The estimator correctly locates the structural breaks in both baseline and time-varying effects:
    \begin{align}
        &\text{Pr}\left( \hat{\mathcal{C}}^\alpha = \mathcal{C}^{\alpha*} \right) \to 1, \\
        &\text{Pr}\left( \hat{\mathcal{C}}^\beta_j = \mathcal{C}^{\beta*}_j, \forall j \in \mathcal{S}^* \right) \to 1,
    \end{align}
    where $\hat{\mathcal{C}}^\alpha = \{m : \hat{a}_{k,m+1}^{ref} \neq \hat{a}_{k,m}^{ref}\}$ and $\hat{\mathcal{C}}^\beta_j = \{m : \hat{b}_{jk,m+1}^{ref} \neq \hat{b}_{jk,m}^{ref}\}$ denote the sets of estimated jump locations.
\end{enumerate}
\end{theorem}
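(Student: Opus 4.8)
The plan is to reduce the claim to the structural-oracle result of Theorem~\ref{thm:support_recovery}, but executed inside the subspace returned by the screening stage and with the slope-fusion penalty $\lambda_f^\beta$ switched on. First I would condition on the event $\mathcal A_N=\{\mathcal S^*\subseteq\hat{\mathcal S}_{\mathrm{scr}}\}$, which has probability tending to one by Theorem~\ref{prop:screening_validity}. On $\mathcal A_N$ the refined estimator $\hat\theta^{ref}$ minimizes the convex objective \eqref{eq:adaptive_objective_final_func} over $\{\theta: B_j=0\ \forall j\notin\hat{\mathcal S}_{\mathrm{scr}}\}$, a subspace of effective dimension $O(|\hat{\mathcal S}_{\mathrm{scr}}|MK+MK)=o(N)$ under Assumption~\ref{assump:tuning_func}; since $\mathcal S^*\subseteq\hat{\mathcal S}_{\mathrm{scr}}$, no true signal is amputated, so the reduced program still contains $\theta^*$ (up to the approximation bias of Assumption~\ref{assump:approx_func}) and the analysis proceeds as if $p$ were replaced by $|\hat{\mathcal S}_{\mathrm{scr}}|$. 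All statements below are implicitly on $\mathcal A_N$.

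For part~(1) I would prove the two inclusions separately. The \emph{no-false-negative} direction follows from the estimation-error bound of Theorem~\ref{thm:estimation_rate} (valid verbatim on the screened subspace), which makes $\sum_j\|\hat B_j^{ref}-B_j^*\|_F^2$ of smaller order than $\min_{j\in\mathcal S^*}\|B_j^*\|_F^2$ by the Beta-min condition, so a triangle inequality gives $\|\hat B_j^{ref}\|_F>0$ for all $j\in\mathcal S^*$. The \emph{no-false-positive} direction is a primal--dual witness argument: construct the oracle candidate that is exactly zero on the blocks $j\in\hat{\mathcal S}_{\mathrm{scr}}\setminus\mathcal S^*$ and verify strict dual feasibility of its subgradient in \eqref{eq:adaptive_objective_final_func}. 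The stochastic part of $\nabla L_N(\theta^*)$ is $O_p(\sqrt{\log p/N})$ by the clustered sub-Gaussian Assumption~\ref{assump:error_func} and a union bound, while the Oracle Separation Condition of Assumption~\ref{assump:tuning_func} gives $\min_{j\notin\mathcal S}\sqrt N\lambda_g\hat w_{g,j}\to\infty$ because $\|\tilde B_j\|_F$ vanishes at the rate of Assumption~\ref{assump:consistency_func3}; the diverging weight thus swamps both the noise and the Block-IRC leakage term $\mathbf H_{j\mathcal S}(\mathbf H_{\mathcal S\mathcal S})^{-1}$, and convexity together with the RSC lower bound (Assumption~\ref{assump:rsc_func4}) identifies the witness as the unique minimizer.

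Parts~(2) and~(3) are the same argument carried through the first-difference operator $\bm{D}_1$: re-parametrizing $u_{k,m}=a_{k,m+1}-a_{k,m}$ and $v_{jk,m}=b_{jk,m+1}-b_{jk,m}$ turns the fused penalties into weighted $\ell_1$-norms, so the intercept and slope difference penalties are decomposable exactly like the group term and one proof serves both by the structural isomorphism between them. At a non-changepoint the diverging weight ($\sqrt N\lambda_f^\alpha\hat w_{f,km}^\alpha\to\infty$, resp.\ $\sqrt N\lambda_f^\beta\hat w_{f,jkm}^\beta\to\infty$) forces the estimated difference to be exactly zero through the same KKT check, whereas at a true changepoint the Beta-min jump condition and the estimation rate of Theorem~\ref{thm:estimation_rate} force it to be nonzero; Assumption~\ref{assump:undersmoothing} is what guarantees the approximation bias $O(M^{-d})$ is of smaller order than the thresholds $\sqrt{\log(MK)/N}$ and $\sqrt{\log(pMK)/N}$, so it neither invents nor erases a jump. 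The slope part is applied block-by-block over $j\in\mathcal S^*$, which is legitimate precisely because $\lambda_f^\beta$ is active in the refinement stage (unlike in screening).

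The hard part will be the coupling of the group-lasso and fused-lasso KKT subsystems through the shared residual $Y-Z_\Phi A-\sum_j X_{\Phi,j}B_j$: the dual certificate on the active covariate blocks and the dual certificates on the intercept and slope differences are not independent, so strict dual feasibility cannot be checked one penalty at a time. I expect to need an incoherence condition between the B-spline covariate design and the first-difference operator --- the structural interference flagged in Section~\ref{sec:introduction} --- on top of the Block-IRC, to bound the cross-terms produced when a dual vector already carrying fused components is projected back onto the inactive group blocks. A secondary point is verifying that the restricted curvature $\kappa$ in Assumption~\ref{assump:rsc_func4} stays bounded away from zero on the enlarged cone obtained by simultaneously activating $\lambda_g$, $\lambda_f^\alpha$ and $\lambda_f^\beta$ on the screened subspace; the Riesz-basis property (Assumption~\ref{assump:basis_func}) and the $o(N)$ bound on $|\hat{\mathcal S}_{\mathrm{scr}}|MK$ keep the augmented Gram block $\mathbf H_{\mathcal S\mathcal S}$ well conditioned, so its inverse exists and the leakage stays uniformly below $1-\eta$, after which the diverging adaptive weights close the argument.
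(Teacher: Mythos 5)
Your proposal follows essentially the same route as the paper's own proof: false negatives are excluded by combining the estimation rate of Theorem~\ref{thm:estimation_rate} with the Beta-min condition, false positives are excluded by a KKT/primal--dual argument in which the diverging adaptive penalties $\lambda_g \hat{w}_{g,j}$ dominate the $O_p(\sqrt{M\log p/N})$ gradient bound on inactive blocks, and the intercept and slope fused penalties are treated by the identical argument applied to the first differences. The coupling of the group and fused dual certificates through the shared residual, which you rightly flag as the delicate point, is not actually resolved in the paper either --- its proof simply asserts that the fused part is ``handled analogously'' --- so your sketch is, if anything, more candid about where the argument is thinnest.
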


{\color{black}
\begin{theorem}[Asymptotic Normality and Valid Inference] \label{thm:normality_func}
Suppose Assumptions \ref{assump:error_func}-\ref{assump:tuning_func} hold. Additionally, assume the undersmoothing condition (Assumption \ref{assump:undersmoothing}) is satisfied (i.e., $N M_N^{-2d} \to 0$, where $d$ is the spline smoothness order). Given the sure screening property established in Theorem \ref{prop:screening_validity}, the refined estimator $\hat{\theta}^{ref}$ effectively operates on the true support with probability approaching one.

Let $\mathcal{I}_{\mathcal{A}} \subset \{1, \dots, p M_N + K M_N\}$ denote the Oracle Index Set, representing the indices of non-zero B-spline coefficients corresponding to the true active predictors and their structural segments. Let $\hat{\bm{\theta}}_{\mathcal{A}}$ denote the sub-vector of estimated coefficients restricted to this set.

\begin{enumerate}
    \item \textbf{Oracle Distribution of Coefficients:}
    The penalized estimator on the active set is asymptotically equivalent to the unpenalized Oracle estimator, satisfying:
    \begin{equation} \label{eq:normality_coef}
    \sqrt{N} \bm{\Sigma}_{\mathcal{A}}^{-1/2} \left( \hat{\bm{\theta}}_{\mathcal{A}} - \bm{\theta}_{\mathcal{A}}^* \right) \xrightarrow{d} \mathcal{N}(\bm{0}, \bm{I}),
    \end{equation}
    where $\bm{\theta}_{\mathcal{A}}^*$ is the projection of the true functions onto the B-spline space, and $\bm{\Sigma}_{\mathcal{A}}$ is the asymptotic covariance matrix of the Oracle sub-model.

    \item \textbf{Pointwise Functional Inference:}
    For any active predictor $j \in \mathcal{S}^*$ and any time point $t \in \mathcal{T}$ (excluding changepoints), the estimated time-varying coefficient $\hat{\beta}_j(t)$ follows:
    \begin{equation} \label{eq:normality_func_pointwise}
    \frac{\hat{\beta}_j(t) - \beta_j^*(t)}{\hat{\sigma}_j(t)} \xrightarrow{d} \mathcal{N}(0, 1),
    \end{equation}
    where $\beta_j^*(t)$ is the \textit{true} functional curve. The standard error $\hat{\sigma}_j(t)$ is consistently estimated via:
    \begin{equation}
        \hat{\sigma}_j^2(t) = \bm{\phi}(t)^\top \widehat{\bm{\Omega}}_{jj} \bm{\phi}(t),
    \end{equation}
    where $\widehat{\bm{\Omega}}_{jj}$ denotes the $M_N \times M_N$ diagonal block of the estimated covariance matrix $\widehat{\text{Cov}}(\hat{\bm{\theta}}_{\mathcal{A}})$ corresponding to the $j$-th predictor.
\end{enumerate}
\end{theorem}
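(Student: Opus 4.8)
The plan is to reduce the penalized refined estimator to the unpenalized oracle least-squares estimator on the true support, and then apply a CLT for clustered (independent-across-subject) sub-Gaussian noise. First I would invoke Theorems~\ref{prop:screening_validity}, \ref{thm:support_recovery}, and \ref{thm:consistency_func3} to work on the event $\mathcal{E}_N$ on which the screening stage retains $\mathcal{S}^*$ and the refined estimator exactly recovers the active set and all changepoint locations; since $\mathbb{P}(\mathcal{E}_N)\to 1$, it suffices to establish the conclusion conditionally on $\mathcal{E}_N$. On this event the refined optimization collapses to minimizing $L_N(\theta)$ plus the adaptive penalties over the fixed low-dimensional subspace indexed by $\mathcal{I}_{\mathcal{A}}$. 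The KKT (stationarity) conditions for $\hat\theta_{\mathcal{A}}$ read $\nabla_{\mathcal{A}} L_N(\hat\theta) + \text{(penalty subgradient)} = 0$; writing $\hat\theta_{\mathcal{A}} - \theta^*_{\mathcal{A}} = -\big(\tfrac{1}{N}\Phi_{\mathcal{A}}^\T\Phi_{\mathcal{A}}\big)^{-1}\big(\tfrac{1}{N}\Phi_{\mathcal{A}}^\T \mathbf{e} + r_{\text{pen}} + r_{\text{approx}}\big)$, where $\mathbf{e}$ is the stacked noise, $r_{\text{pen}}$ collects the adaptive penalty subgradient terms evaluated on the active coordinates, and $r_{\text{approx}}$ is the B-spline approximation bias $\Phi_{\mathcal{A}}^\T(\mu^* - \Phi_{\mathcal{A}}\theta^*_{\mathcal{A}})/N$ from Assumption~\ref{assump:approx_func}.

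The core of the argument is to show both remainder terms are $o_p(N^{-1/2})$ after scaling. For $r_{\text{pen}}$: on the active set the adaptive weights are $w = O_p(1)$ by initial-estimator consistency (Assumption~\ref{assump:consistency_func3}) combined with the Beta-min separation, and the Oracle Separation Conditions in Assumption~\ref{assump:tuning_func} give precisely $\max_{j\in\mathcal{S}}\sqrt{N}\lambda_g w_{g,j} = o_p(1)$ and the analogous statements for $\lambda_f^\alpha w^\alpha$, $\lambda_f^\beta w^\beta$; hence $\sqrt{N}\, r_{\text{pen}} = o_p(1)$ entry-wise, and since $|\mathcal{I}_{\mathcal{A}}|$ grows only polynomially in $M_N$ (which itself grows slowly), a union bound preserves this. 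For $r_{\text{approx}}$: Assumption~\ref{assump:approx_func} gives $\|\mu^* - \Phi\theta^*\|_\infty = O_p(M_N^{-d})$, so $\|\tfrac{1}{N}\Phi_{\mathcal{A}}^\T(\mu^*-\Phi\theta^*)\|_2 \lesssim \sqrt{|\mathcal{I}_{\mathcal{A}}|}\cdot C_\phi \cdot M_N^{-d}$, and the undersmoothing condition (Assumption~\ref{assump:undersmoothing}) $N M_N^{-2d}\to 0$ forces $\sqrt{N}\,r_{\text{approx}} = o_p(1)$ — this is exactly the step the undersmoothing is designed for. Next, the Gram matrix $\mathbf{H}_{\mathcal{A}\mathcal{A}} = \tfrac{1}{N}\Phi_{\mathcal{A}}^\T\Phi_{\mathcal{A}}$ has eigenvalues bounded away from $0$ and $\infty$ by the Riesz/Gram-matrix bounds in Assumption~\ref{assump:basis_func} together with the RSC curvature constant $\kappa$ in Assumption~\ref{assump:rsc_func4} (restricted to the fixed support, RSC becomes ordinary strong convexity with constant $\kappa$), so $\mathbf{H}_{\mathcal{A}\mathcal{A}}^{-1}$ is well-conditioned. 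It then remains to show $\sqrt{N}\,\Sigma_{\mathcal{A}}^{-1/2}\mathbf{H}_{\mathcal{A}\mathcal{A}}^{-1}\big(\tfrac{1}{N}\Phi_{\mathcal{A}}^\T\mathbf{e}\big) \xrightarrow{d}\mathcal{N}(0,I)$ with $\Sigma_{\mathcal{A}} = \mathbf{H}_{\mathcal{A}\mathcal{A}}^{-1}\big(\tfrac{1}{N}\sum_i \Phi_{i,\mathcal{A}}^\T\bm{\Sigma}_{\epsilon,i}\Phi_{i,\mathcal{A}}\big)\mathbf{H}_{\mathcal{A}\mathcal{A}}^{-1}$; this is a sum of $n$ independent, mean-zero sub-Gaussian subject-level vectors (Assumption~\ref{assump:error_func}), so a Lindeberg–Feller / Lyapunov CLT for triangular arrays applies once we verify the negligibility condition, which follows because each $\|\Phi_{i,\mathcal{A}}^\T\bm\epsilon_i\|$ is controlled uniformly by $C_\phi$, $T_i$ bounded, and sub-Gaussianity, while $|\mathcal{I}_{\mathcal{A}}|$ grows slowly relative to $n$. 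Part~2 then follows from Part~1 by the delta method: $\hat\beta_j(t) - \beta_j^*(t) = \bm\phi(t)^\T(\hat{\bm b}_{j\cdot}-\bm b_{j\cdot}^*) + (\bm\phi(t)^\T\bm b_{j\cdot}^* - \beta_j^*(t))$, where the second (bias) term is $O_p(M_N^{-d}) = o_p(N^{-1/2})$ again by undersmoothing, and the first term is a fixed linear functional of an asymptotically normal vector, giving $(\hat\beta_j(t)-\beta_j^*(t))/\sigma_j(t)\xrightarrow{d}\mathcal{N}(0,1)$ with $\sigma_j^2(t) = \bm\phi(t)^\T\bm\Omega_{jj}\bm\phi(t)$, $\bm\Omega = N^{-1}\Sigma_{\mathcal{A}}$ restricted to the $j$-block; consistency of the plug-in $\hat\sigma_j^2(t)$ follows from replacing $\bm\Sigma_{\epsilon,i}$ by a consistent clustered (sandwich) residual estimator and continuity.

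I expect the main obstacle to be handling the interplay between the growing dimension $|\mathcal{I}_{\mathcal{A}}| \asymp s M_N$ and the CLT: strictly speaking this is a CLT for a vector whose dimension diverges, so either (a) I restrict attention to fixed finite-dimensional linear functionals $c^\T\hat\theta_{\mathcal{A}}$ (sufficient for the pointwise statement~\eqref{eq:normality_func_pointwise}, since $\bm\phi(t)$ has only $O(1)$ nonzero entries by B-spline local support) and prove a scalar CLT via Lyapunov, or (b) invoke a high-dimensional CLT (e.g.\ a Berry–Esseen bound over convex sets) under the rate $s M_N \log(sM_N) = o(n^{1/3})$ or similar, which would need to be added as a mild side condition or derived from the existing growth restrictions on $M_N$. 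The other delicate point is verifying that the adaptive penalty subgradient on the active coordinates is genuinely $o_p(N^{-1/2})$ rather than merely $O_p(N^{-1/2})$: this hinges on $\lambda \sqrt{N}\to\infty$ being compatible with $\sqrt{N}\lambda w = o_p(1)$ on signals, which is exactly what the Beta-min lower bound $\|\tilde{\bm B}_j\|_F \gtrsim N^{-\kappa}$ with $\kappa < 1/2$ and $\gamma_g > 1$ buys us — I would spell out this arithmetic carefully since it is the linchpin that separates oracle behavior from a biased limit.
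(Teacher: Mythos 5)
Your proposal is correct and follows essentially the same route as the paper's proof: condition on exact support recovery, expand the KKT conditions around the spline projection $\bm{\theta}_{\mathcal{A}}^*$, kill the penalty subgradient on the active set via the oracle separation conditions $\sqrt{N}\lambda w = o_p(1)$, kill the approximation bias via the undersmoothing condition $N M_N^{-2d}\to 0$, and apply a Lindeberg--Feller CLT to the remaining sum of independent subject-level terms, finishing with Slutsky for the plug-in variance. The two places where you are more careful than the paper---carrying the approximation-bias term $r_{\text{approx}}$ explicitly through the normal equations rather than defining $\bm{\theta}_{\mathcal{A}}^*$ as the expectation of the oracle fit, and flagging that the vector CLT is over a dimension $\asymp s M_N$ that diverges (so one should reduce to fixed linear functionals such as $\bm{\phi}(t)^\T\hat{\bm{\theta}}_{\mathcal{A}}$, which suffices for the pointwise claim)---are both legitimate refinements that the paper's proof elides.
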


\begin{remark}[Validity of Post-Selection Inference]\label{rem:Validity of Post-Selection Inference}
The significance of Theorem \ref{thm:normality_func} lies in validating post-selection inference under the two-stage framework. The Undersmoothing Condition guarantees that the squared approximation bias is asymptotically negligible relative to the variance (i.e., bias$^2 \ll$ variance). Consequently, Equation \eqref{eq:normality_func_pointwise} allows for the construction of valid $(1-\alpha)$ pointwise confidence intervals for dynamic effects as $\hat{\beta}_j(t) \pm z_{\alpha/2} \hat{\sigma}_j(t)$, treating the selected structure as fixed.
\end{remark}}

{\color{black}\begin{remark}[Role of Undersmoothing]\label{rem:Role of Undersmoothing}
A critical distinction must be made between estimating the "best projection" of the function and the function itself. The asymptotic normality in equation \eqref{eq:normality_coef} applies to the coefficients $\theta^*$. However, for the functional estimator $\hat{\beta}(t)$, the total error decomposes into:$$\hat{\beta}(t) - \beta^*(t) = \underbrace{\phi(t)^\top(\hat{\theta} - \theta^*)}_{\text{Stochastic Term}} + \underbrace{(\phi(t)^\top\theta^* - \beta^*(t))}_{\text{Approximation Bias}}$$Standard estimation rates (e.g., for optimal MSE) typically balance the squared bias and variance, leaving a non-vanishing bias in the limiting distribution. By enforcing the undersmoothing condition (Assumption \ref{assump:undersmoothing}, $N M^{-2d} \to 0$), we ensure that the bias term is $o_p(N^{-1/2})$. Consequently, the asymptotic distribution is dominated solely by the stochastic term, validating the construction of confidence intervals centered at $\hat{\beta}(t)$ without the need for explicit bias correction.
\end{remark}

\begin{remark}[Inference Strategy: Undersmoothing vs. De-biasing.]\label{rem:Inference Strategy}
While recent high-dimensional inference literature has favored de-biasing techniques, e.g., \cite{javanmard2014confidence,vandegeer2014asymptotically}, to circumvent undersmoothing assumptions, extending these methods to the AJL framework presents unique and non-trivial challenges. First, from a computational perspective, the intrinsic high collinearity of B-spline basis functions (Assumption \ref{assump:basis_func}) renders the empirical Hessian matrix ill-conditioned. Constructing a stable approximate inverse (e.g., via nodewise regression) for the block-structured design matrix $Z_{\Phi}$ is numerically precarious compared to standard i.i.d. Gaussian designs.Second, and more fundamentally, theoretical de-biasing requires constructing a ``decorrelated score function" that projects the noise away from the tangent space of the penalty. In our multi-task setting, the penalty $P_{ada}$ induces a complex geometry due to the structural non-orthogonality between the group-sparsity manifold and the piecewise-constant manifold. Developing a rigorous projection argument that simultaneously decorrelates against both functional selection bias and changepoint localization bias remains an open mathematical problem. In contrast, we adopt the undersmoothing strategy, which is the standard paradigm for valid inference in semi-parametric sieve estimation \citep{chen2007large}. As rigorously established by \cite{shen1998local} for regression splines and generalized by \cite{newey1997convergence} for series estimators, valid confidence bands can be constructed by allowing the number of knots to grow at a rate where the squared approximation bias becomes asymptotically negligible relative to the variance ($bias^2 \ll var$). In our high-dimensional context, conditioned on the oracle selection consistency (Theorem \ref{thm:consistency_func3}), our estimator behaves asymptotically as a series estimator on the active set. Consequently, the undersmoothing condition, Assumption \ref{assump:undersmoothing} serves as a theoretically sufficient condition to center the asymptotic distribution at the true function $\beta^*(t)$ without the numerical instability of inverting high-dimensional spline matrices.
\end{remark}
}

\begin{proof}
A proof for Theorems \ref{thm:consistency_func3} and \ref{thm:normality_func} is provided in Appendix \ref{app:proof_oracle}.
\end{proof}

\section{Numerical Study}
\label{sec:simulations}
This section provides a reproducible and comprehensive numerical study to assess the finite-sample performance of the proposed AJL estimator in high-dimensional longitudinal functional regression. We report results over multiple scenarios that vary sample size, dimensionality, correlation strength, and robustness conditions (heavy tails and outliers). Throughout, we focus on four aspects: ($i$) prediction accuracy, ($ii$) functional estimation accuracy, ($iii$) support recovery of functional predictors, and ($iv$) changepoint detection for intercept trajectories.

\subsection{Data Generation}
\label{sec:dgp}
{\color{black}In this numerical study, we aim to validate the core contributions of the AJL framework: high-dimensional functional variable selection and the detection of shared structural breaks in baseline trajectories. Therefore, we simulate scenarios focusing on the intercept-fusion component (setting $\lambda_f^\beta = 0$ in the estimation) to isolate and rigorously evaluate these two mechanisms. The detection of slope dynamics follows naturally from the theoretical framework but is omitted here to maintain focus on the primary multi-task selection challenge.}

We generate data for $n$ independent subjects. For each subject $i$, we observe $K$ outcomes on a common grid of $T$ visit times, with balanced design in the baseline scenarios:
\[
t_{i\ell} = \frac{\ell-1}{T-1},\qquad \ell=1,\dots,T,\qquad T=30,\qquad K=5.
\]
The outcomes follow the time-varying coefficient model
\[
y_{i\ell k} = \alpha_k^\ast(t_{i\ell}) + \bm{x}_i^\T \bm{\beta}_k^\ast(t_{i\ell}) + \varepsilon_{i\ell k},
\qquad k=1,\dots,K,
\]
where $\bm{x}_i\in\mathbb{R}^p$ is a time-invariant (baseline) covariate vector, $\bm{\beta}_k^\ast(t)=(\beta_{1k}^\ast(t),\dots,\beta_{pk}^\ast(t))^\T$, and $\varepsilon_{i\ell k}$ captures within-subject dependence and cross-outcome correlation.

We generate baseline covariates
\[
\bm{x}_i \sim N(\bm{0},\Sigma_x),\qquad (\Sigma_x)_{jj'}=\rho_x^{|j-j'|},
\]
which induces an AR(1)-type correlation among the $p$ covariates. This is a standard and challenging setting for structured variable selection in high dimensions.

For each outcome $k$, the intercept $\alpha_k^\ast(t)$ is piecewise constant with two changepoints at $t=1/3$ and $t=2/3$:
\[
\alpha_k^\ast(t)=c_{k,1}\,\mathbb{I}(t\le 1/3)+c_{k,2}\,\mathbb{I}(1/3<t\le 2/3)+c_{k,3}\,\mathbb{I}(t>2/3),
\]
where $c_{k,r}\stackrel{iid}{\sim}\mathrm{Unif}(1,3)$ for $r=1,2,3$. This design directly targets the fused-lasso component of AJL and enables a transparent evaluation of changepoint recovery.

We set an active set $S\subset\{1,\dots,p\}$ with $|S|=s$ (default $s=10$). For $j\notin S$, $\beta_{jk}^\ast(t)\equiv 0$ for all $k,t$. For $j\in S$, we consider two types of nonzero time-varying effects to reflect both smooth and abrupt signal patterns:
\[
\beta_{jk}^\ast(t)=
\begin{cases}
a_{jk}\,\sin(2\pi t), & j\in S_1,\quad |S_1|=s/2,\\
a_{jk}\,\mathbb{I}(t>1/2), & j\in S_2,\quad |S_2|=s/2,
\end{cases}
\]
where $S_1=\{1,\dots,s/2\}$ and $S_2=\{s/2+1,\dots,s\}$, and amplitudes are drawn as
\[
a_{jk}\stackrel{iid}{\sim}\mathrm{Unif}([-2,-1]\cup[1,2]).
\]
This mixture of smooth and step signals is designed to stress-test estimation of $\beta_{jk}(t)$ under basis approximation and penalization, and to assess whether AJL can recover an entire functional predictor across multiple outcomes.

To align with the clustered-error setting in Assumptions \ref{assump:error_func}, we generate subject-specific error matrices
$\varepsilon_i = (\varepsilon_{i\ell k})_{\ell=1:T,\;k=1:K}\in\mathbb{R}^{T\times K}$
from a matrix-normal distribution
\[
\mathrm{vec}(\varepsilon_i)\sim N\Big(\bm{0},\ \Sigma_T\otimes \Sigma_K\Big),
\]
where $(\Sigma_T)_{\ell\ell'}=\rho_t^{|\ell-\ell'|}$ models within-subject temporal dependence and
$(\Sigma_K)_{kk'}=\sigma^2\rho_\varepsilon^{|k-k'|}$ models cross-outcome correlation. We fix $\sigma=1$ and vary $(\rho_x,\rho_t,\rho_\varepsilon)$ across scenarios.

All functional methods use the same B-spline basis $\{\phi_m(t)\}_{m=1}^M$ with cubic splines and equally spaced knots on $[0,1]$, with default $M=15$. We evaluate ISE-type metrics using a dense grid on $[0,1]$ (see Section~\ref{sec:metrics}). For each replication, we generate an independent test set of size $n_{\mathrm{test}}=1000$ from the same DGP and report out-of-sample prediction metrics on this test set.

\subsection{Scenarios}
\label{sec:scenarios}
We consider a baseline scenario and several perturbations. Importantly, we vary the sample size $n$ across multiple scenarios to assess finite-sample behavior under different information regimes. Unless otherwise stated, the default values are $(T,K,M,s)=(30,5,15,10)$, $(\rho_x,\rho_t,\rho_\varepsilon)=(0.5,0.3,0.5)$ and $(p,n)=(100,100)$. 
\begin{table}[h]
\centering
\caption{Simulation scenarios.}
\label{tab:scenarios}
\resizebox{\textwidth}{!}{
\begin{tabular}{lccccp{6.8cm}}
\hline
Scenario & $n$ & $p$ & $\rho_x$ & $(\rho_t,\rho_\varepsilon)$ & Additional perturbation \\
\hline
S1 (Baseline) & 100 & 100 & 0.5 & (0.3, 0.5) & Gaussian matrix-normal errors \\
S2 (Small $n$) & 50 & 100 & 0.5 & (0.3, 0.5) & Low-sample regime \\
S3 (Large $n$) & 200 & 100 & 0.5 & (0.3, 0.5) & High-information regime \\
S4 (High dimension) & 100 & 300 & 0.5 & (0.3, 0.5) & Ultra-high $p\gg n$ \\
S5 (High collinearity) & 100 & 100 & 0.8 & (0.3, 0.5) & Strong covariate correlation \\
S6 (High outcome/time corr.) & 100 & 100 & 0.5 & (0.6, 0.8) & Strong within-subject and cross-outcome correlation \\
S7 (Heavy-tailed) & 100 & 100 & 0.5 & (0.3, 0.5) & Replace Gaussian errors by $t_\nu$ with $\nu=3$, scaled to variance $1$ \\
S8 (Outliers) & 100 & 100 & 0.5 & (0.3, 0.5) & Contamination: with prob. $\pi=0.05$, add outlier shock to $y_{i\ell k}$ \\
\hline
\end{tabular}}
\end{table}

To create a dedicated outlier scenario, we use a response-contamination mechanism:
\[
y_{i\ell k}^{(\mathrm{obs})}=y_{i\ell k}^{(\mathrm{clean})}+\delta_{i\ell k},\qquad
\delta_{i\ell k}=
\begin{cases}
\kappa\,\sigma \cdot \xi_{i\ell k}, & \text{with prob. } \pi,\\
0, & \text{with prob. } 1-\pi,
\end{cases}
\]
where $\pi=0.05$, $\kappa=10$, and $\xi_{i\ell k}\in\{+1,-1\}$ with equal probability. This produces sparse but severe gross outliers in the longitudinal outcomes. The covariates remain uncontaminated so that we isolate robustness of the estimation procedure to response outliers.

In addition to the sensitivity check, we conduct an extra Scenario~9 (S9) to examine how the choice of the functional basis dimension $M$ affects inferential accuracy. All other components follow the same configuration as Scenario~1, while we vary $M\in\{5,15,25,30,40\}$. For each $M$, we construct the spline basis for representing the intercept trajectory $\alpha_k(t)$ and the time-varying effects, re-fit the AJL model, and evaluate the pointwise confidence bands for $\alpha_k(t)$.

\subsection{Competing Methods}
\label{sec:competitors}
We compare AJL to competitors chosen to ($i$) provide meaningful baselines, ($ii$) isolate the contributions of adaptivity, fusion, and multi-task learning, and ($iii$) benchmark against an oracle reference.
\begin{itemize}
\item \textbf{AJL (proposed)}. The full adaptive objective in~(\ref{eq:adaptive_objective_final_func}) with adaptive group weights and adaptive fused weights, computed by the 3-stage procedure (Algorithm~\ref{alg:main_3stage}).
\item \textbf{Joint Lasso Learning (JLL)}. The non-adaptive joint model in~(\ref{eq:baseline_objective_func}), i.e., group lasso on $\{B_j\}$ and fused lasso on $\{a_k\}$, without adaptive weights. This isolates the benefit of adaptivity.
\item \textbf{Separate AJL (S-AJL)}. Apply the adaptive functional model separately to each outcome $k$ (same basis, same penalties, but no joint multi-task sharing). This isolates the benefit of joint modeling across outcomes.
\item \textbf{Separate Scalarized Lasso (S-Lasso)}. For each outcome $k$, fit a standard Lasso on the expanded design using all $p\times M$ spline features, without block structure. This represents a naive high-dimensional baseline.
\item \textbf{Oracle}. The gold-standard efficiency benchmark estimator which fits the standard B-spline based time-varying coefficient model using only the true active set of predictors $\mathcal{S}$ and the true changepoint locations. It yields zero False Positives and a Recall of 1.0, serving as the upper bound for model performance.
\end{itemize}

\subsection{Evaluation Metrics}
\label{sec:metrics}
We evaluate prediction, estimation, selection, and changepoint accuracy. All metrics are averaged over $R=100$ replications and we report mean value.

(1) \textbf{Prediction error (PE)}. On the test set, we compute
\[
\mathrm{PE}=\frac{1}{n_{\mathrm{test}}TK}\sum_{i=1}^{n_{\mathrm{test}}}\sum_{\ell=1}^{T}\sum_{k=1}^K
\Big(\hat{y}_{i\ell k}-y_{i\ell k}\Big)^2,
\]
where $\hat{y}_{i\ell k}=\hat{\alpha}_k(t_{i\ell})+\bm{x}_i^\T \hat{\bm{\beta}}_k(t_{i\ell})$.

(2) \textbf{Integrated squared error (ISE)} for functional coefficients. We evaluate $\mathrm{ISE}(\beta)$ on a dense grid $\{u_g\}_{g=1}^G$ with $G=200$ equally spaced points on $[0,1]$:
\[
\mathrm{ISE}(\beta)=\frac{1}{pK}\sum_{j=1}^p\sum_{k=1}^K
\int_0^1\big(\hat{\beta}_{jk}(t)-\beta_{jk}^\ast(t)\big)^2\,dt
\approx
\frac{1}{pK}\sum_{j,k}\frac{1}{G}\sum_{g=1}^G \big(\hat{\beta}_{jk}(u_g)-\beta_{jk}^\ast(u_g)\big)^2.
\]
Similarly,
\[
\mathrm{ISE}(\alpha)=\frac{1}{K}\sum_{k=1}^K \int_0^1\big(\hat{\alpha}_k(t)-\alpha_k^\ast(t)\big)^2 dt.
\]

(3) \textbf{Support recovery}. We define selected predictors as $\hat{S}=\{j:\|\hat{B}_j\|_F>0\}$ and compute
\[
\mathrm{TP}=|\hat{S}\cap S|,\qquad \mathrm{FP}=|\hat{S}\setminus S|,\qquad 
\mathrm{F1}=\frac{2\cdot \mathrm{Precision}\cdot \mathrm{Recall}}{\mathrm{Precision}+\mathrm{Recall}}.
\]
Here $\mathrm{Precision}=|\hat{S}\cap S|/|\hat{S}|$, $\mathrm{Spec} = \frac{\mathrm{TN}}{\mathrm{TN}+\mathrm{FP}} = \frac{\mathrm{TN}}{p - |\mathcal{S}|}$ and $\mathrm{Recall}=|\hat{S}\cap S|/|S|$. 

(4) \textbf{Changepoint count error ($\text{CP}_\text{Err}$)}. To evaluate the recovery of the intercept dynamics $\alpha_k(t)$ we use $\text{CP}_\text{Err}$, this measures the deviation of the estimated number of changepoints from the truth, averaged over all $K$ outcomes:
    \[
    \text{CP}_\text{Err} = \frac{1}{K} \sum_{k=1}^K \big| |\hat{\mathcal{C}}_k| - |\mathcal{C}_k| \big|,
    \]
    where $|\hat{\mathcal{C}}_k|$ and $|\mathcal{C}_k|$ are the estimated and true number of changepoints for outcome $k$, respectively. A value close to zero indicates correct structural recovery.


\subsection{Implementation and Tuning}
\label{sec:implementation}

All functional methods use the same B-spline basis and design construction as in Section~\ref{sec:methodology}. For AJL and JLL, we use the BCD--ADMM solver in Algorithm~\ref{alg:bcd_admm} with stopping criterion based on relative objective decrease ($<10^{-6}$) and primal/dual residual tolerances in ADMM. For AJL, the Stage~1 pilot tuning parameters $(\lambda_g^{(0)},\lambda_f^{(0)})$ are selected by cross-validation on a coarse grid, and the final $(\lambda_g,\lambda_f)$ are selected on a finer $12\times 12$ log-spaced grid using 5-fold cross-validation minimizing the prediction error. We fix $(\gamma_g,\gamma_f)=(1.5,1.5)$ and set ridge stabilizers $(\epsilon_g,\epsilon_f)=(1/N,1/N)$ to avoid numerical blow-up in weights.

\subsection{Results and Summary}
\label{sec:results}
We summarize the simulation results across all scenarios in Table \ref{tab:sim_S1}-\ref{tab:sim_S8} and visualize the distributions of estimation and prediction errors in Figures \ref{fig:estimation_error} and \ref{fig:prediction_error} to reflect variability across replications. Additionally, Figure \ref{fig:sensitive} reveals the dependence of coverage performance on the basis dimension $M$. The comprehensive numerical study demonstrates the consistent superiority of the proposed AJL method over competing approaches.

In details, Table \ref{tab:sim_S1}-\ref{tab:sim_S8} report the average performance metrics over 100 replications, with the Oracle estimator serving as the theoretical benchmark, which utilizing the true underlying structure, achieves perfect selection scores (TP=10, FP=0) and serves as the ideal baseline. The fact that its selection metrics are integers with zero variance is a direct consequence of its definition. Among all feasible methods, AJL consistently delivers the best performance across estimation, selection, and prediction metrics. Specifically, AJL achieves the lowest Integrated Squared Error for both the functional coefficients $\mathrm{ISE}(\beta)$ and the intercept functions $\mathrm{ISE}(\alpha)$. 

As illustrated in Figure \ref{fig:estimation_error}, the error distribution of AJL (red boxplots) is tightly clustered near the Oracle benchmark and is significantly lower than that of JLL and S-Lasso. This empirical evidence confirms that our adaptive weighting strategy effectively mitigates the estimation bias typically associated with standard Lasso-type penalties. In terms of variable selection, AJL demonstrates superior consistency, achieving an F1 score close to 1.0 in most scenarios. This reflects a perfect balance between high Recall and high Precision, whereas JLL tends to over-select noise variables, and S-Lasso fails to capture the joint sparsity structure. Desides, AJL exhibits the lowest out-of-sample prediction error (PE) (Figure \ref{fig:prediction_error}), suggesting that its accurate recovery of the underlying functional structures translates directly into robust predictive power.

Table \ref{tab:sim_S9} and Figure \ref{fig:sensitive} further examines the sensitivity of the proposed AJL procedure to the B-spline dimension $M$ while keeping the same settings as Scenario~1. The results reveal a clear trade-off: when $M$ is too small (e.g., $M=5$), the basis is overly restrictive and tends to underfit the underlying time-varying components, leading to noticeable under-coverage; when $M$ becomes larger (e.g., $M=40$), the increased flexibility amplifies estimation variability and again slightly degrades coverage. Importantly, despite this expected non-monotone pattern, the coverage remains broadly stable across a wide range of $M$ values. In particular, the mean coverage fluctuates by roughly $0.1$ over $M\in\{5,15,25,30,40\}$, and the best performance is achieved around a moderate basis size (e.g., $M\approx 25$). This stability indicates that AJL is not overly sensitive to the exact choice of basis dimension and provides reliable uncertainty quantification in practice, suggesting that the adaptive joint regularization mitigates overfitting/underfitting risks even when the functional basis is misspecified to some extent.

The proposed method also shows remarkable stability in challenging simulation settings. In scenarios with high collinearity among covariates (S5), standard variable selection methods typically suffer from inflated false positive rates due to the violation of the strict irrepresentable condition. However, AJL demonstrates remarkable robustness in this challenging regime. This empirical finding aligns perfectly with our theoretical justification in Remark \ref{rem:Adaptation and Relaxation} regarding the relaxation of the Block-Irrepresentable Condition (Block-IRC). Specifically, the diverging adaptive weights for noise variables effectively suppress the 'bleeding' of signal into correlated noise features. In scenarios with strong correlations between outcomes and time (S6), the separate estimation method (S-AJL) suffers significantly from ignoring the joint correlation structure, whereas AJL leverages the shared information across outcomes through multi-task learning to maintain robust performance. Additionally, in the presence of heavy-tailed errors (S7) or outliers (S8), standard methods such as S-Lasso and JLL exhibit marked deterioration. AJL remains resilient, with error metrics that are only slightly elevated compared to the baseline scenario. This robustness validates our discussion in Remark \ref{rem:Implicit Robustness to Outliers} regarding the stability offered by the B-spline basis expansion combined with adaptive penalization.

\begin{table}[h]
\centering
\caption{Simulation results for Scenario 1.}
\label{tab:sim_S1}
\resizebox{\textwidth}{!}{
\begin{tabular}{lrrrrrrrrrr}
\toprule
Method & PE & ISE($\Beta$) & ISE($\Alpha$) & TP & FP & Recall & Precision & Spec & F1 & $\text{CP}_\text{Err}$ \\
\midrule
AJL      & 2.3950 & 0.0188 & 0.0327 & 10.00 & 0.98 & 1.000 & 0.9141 & 0.9891 & 0.9542 & 7.484 \\
JLL      & 5.3696 & 0.0630 & 0.0639 &  9.93 & 0.94 & 0.993 & 0.9188 & 0.9896 & 0.9529 & 9.074 \\
Oracle   & 3.1596 & 0.0380 & 0.0401 & 10.00 & 0.00 & 1.000 & 1.0000 & 1.0000 & 1.0000 & 9.612 \\
S-AJL   & 3.1264 & 0.0367 & 0.0397 &  9.89 & 0.20 & 0.989 & 0.9820 & 0.9978 & 0.9847 & 7.524 \\
S-Lasso & 11.3272 & 0.1102 & 0.1273 & 5.62 & 0.00 & 0.562 & 1.0000 & 1.0000 & 0.6845 & 9.672 \\
\bottomrule
\end{tabular}}
\end{table}

\begin{table}[h]
\centering
\caption{Simulation results for Scenario 2.}
\label{tab:sim_S2}
\resizebox{\textwidth}{!}{
\begin{tabular}{lrrrrrrrrrr}
\toprule
Method & PE & ISE($\Beta$) & ISE($\Alpha$) & TP & FP & Recall & Precision & Spec & F1 & $\text{CP}_\text{Err}$ \\
\midrule
AJL      & 2.6665 & 0.0249 & 0.0585 &  9.95 & 6.65 & 0.995 & 0.6114 & 0.9261 & 0.7539 & 7.566 \\
JLL      & 5.7531 & 0.0654 & 0.1180 &  9.90 & 6.82 & 0.990 & 0.6079 & 0.9242 & 0.7486 & 9.212 \\
Oracle   & 3.3733 & 0.0398 & 0.0734 & 10.00 & 0.00 & 1.000 & 1.0000 & 1.0000 & 1.0000 & 10.046 \\
S-AJL   & 3.3940 & 0.0402 & 0.0743 &  9.83 & 1.74 & 0.983 & 0.8589 & 0.9807 & 0.9139 & 7.604 \\
S-Lasso & 11.3327 & 0.1093 & 0.2280 & 5.21 & 0.03 & 0.521 & 0.9962 & 0.9997 & 0.6552 & 10.044 \\
\bottomrule
\end{tabular}}
\end{table}

\begin{table}[!]
\centering
\caption{Simulation results for Scenario 3.}
\label{tab:sim_S3}
\resizebox{\textwidth}{!}{
\begin{tabular}{lrrrrrrrrrr}
\toprule
Method & PE & ISE($\Beta$) & ISE($\Alpha$) & TP & FP & Recall & Precision & Spec & F1 & $\text{CP}_\text{Err}$ \\
\midrule
AJL      & 2.3200 & 0.0160 & 0.0241 & 10.00 & 0.52 & 1.000 & 0.9529 & 0.9942 & 0.9753 & 7.596 \\
JLL      & 5.2112 & 0.0606 & 0.0381 &  9.96 & 0.42 & 0.996 & 0.9615 & 0.9953 & 0.9779 & 8.882 \\
Oracle   & 3.0200 & 0.0355 & 0.0272 & 10.00 & 0.00 & 1.000 & 1.0000 & 1.0000 & 1.0000 & 9.208 \\
S-AJL   & 2.9190 & 0.0327 & 0.0268 &  9.88 & 0.06 & 0.988 & 0.9945 & 0.9993 & 0.9908 & 7.620 \\
S-Lasso & 11.5371 & 0.1110 & 0.0706 & 5.70 & 0.00 & 0.570 & 1.0000 & 1.0000 & 0.6999 & 9.290 \\
\bottomrule
\end{tabular}}
\end{table}

\begin{table}[!]
\centering
\caption{Simulation results for Scenario 4.}
\label{tab:sim_S4}
\resizebox{\textwidth}{!}{
\begin{tabular}{lrrrrrrrrrr}
\toprule
Method & PE & ISE($\Beta$) & ISE($\Alpha$) & TP & FP & Recall & Precision & Spec & F1 & $\text{CP}_\text{Err}$ \\
\midrule
AJL      & 2.3492 & 0.0062 & 0.0325 & 10.00 & 1.51 & 1.000 & 0.8783 & 0.9948 & 0.9328 & 7.486 \\
JLL      & 5.2879 & 0.0204 & 0.0670 &  9.93 & 1.75 & 0.993 & 0.8632 & 0.9940 & 0.9201 & 9.034 \\
Oracle   & 3.1157 & 0.0122 & 0.0423 & 10.00 & 0.00 & 1.000 & 1.0000 & 1.0000 & 1.0000 & 9.494 \\
S-AJL   & 3.0720 & 0.0117 & 0.0400 &  9.88 & 0.17 & 0.988 & 0.9845 & 0.9994 & 0.9855 & 7.528 \\
S-Lasso & 11.2817 & 0.0362 & 0.1307 & 5.48 & 0.00 & 0.548 & 1.0000 & 1.0000 & 0.6822 & 9.572 \\
\bottomrule
\end{tabular}}
\end{table}

\begin{table}[!]
\centering
\caption{Simulation results for Scenario 5.}
\label{tab:sim_S5}
\resizebox{\textwidth}{!}{
\begin{tabular}{lrrrrrrrrrr}
\toprule
Method & PE & ISE($\Beta$) & ISE($\Alpha$) & TP & FP & Recall & Precision & Spec & F1 & $\text{CP}_\text{Err}$ \\
\midrule
AJL      & 2.3427 & 0.0436 & 0.0318 &  9.98 & 1.29 & 0.998 & 0.8876 & 0.9857 & 0.9390 & 7.526 \\
JLL      & 3.5758 & 0.0766 & 0.0451 &  9.94 & 1.64 & 0.994 & 0.8629 & 0.9818 & 0.9225 & 9.040 \\
Oracle   & 2.7522 & 0.0620 & 0.0376 & 10.00 & 0.00 & 1.000 & 1.0000 & 1.0000 & 1.0000 & 9.576 \\
S-AJL   & 2.5841 & 0.0548 & 0.0346 &  9.79 & 0.79 & 0.979 & 0.9274 & 0.9912 & 0.9510 & 7.532 \\
S-Lasso & 7.8891 & 0.1013 & 0.0848 & 8.31 & 0.24 & 0.831 & 0.9736 & 0.9973 & 0.8821 & 9.668 \\
\bottomrule
\end{tabular}}
\end{table}

\begin{table}[!]
\centering
\caption{Simulation results for Scenario 6.}
\label{tab:sim_S6}
\resizebox{\textwidth}{!}{
\begin{tabular}{lrrrrrrrrrr}
\toprule
Method & PE & ISE($\Beta$) & ISE($\Alpha$) & TP & FP & Recall & Precision & Spec & F1 & $\text{CP}_\text{Err}$ \\
\midrule
AJL      & 2.4108 & 0.0190 & 0.0343 & 10.00 & 1.01 & 1.000 & 0.9117 & 0.9888 & 0.9529 & 7.190 \\
JLL      & 5.3777 & 0.0631 & 0.0657 &  9.95 & 0.96 & 0.995 & 0.9174 & 0.9893 & 0.9531 & 8.838 \\
Oracle   & 3.1701 & 0.0381 & 0.0418 & 10.00 & 0.00 & 1.000 & 1.0000 & 1.0000 & 1.0000 & 9.436 \\
S-AJL   & 3.1383 & 0.0368 & 0.0414 &  9.86 & 0.22 & 0.986 & 0.9802 & 0.9976 & 0.9822 & 7.190 \\
S-Lasso & 11.3318 & 0.1102 & 0.1291 & 5.60 & 0.00 & 0.560 & 1.0000 & 1.0000 & 0.6825 & 9.450 \\
\bottomrule
\end{tabular}}
\end{table}

\begin{table}[!]
\centering
\caption{Simulation results for Scenario 7.}
\label{tab:sim_S7}
\resizebox{\textwidth}{!}{
\begin{tabular}{lrrrrrrrrrr}
\toprule
Method & PE & ISE($\Beta$) & ISE($\Alpha$) & TP & FP & Recall & Precision & Spec & F1 & $\text{CP}_\text{Err}$ \\
\midrule
AJL      & 2.3825 & 0.0188 & 0.0325 & 10.00 & 0.97 & 1.000 & 0.9156 & 0.9892 & 0.9549 & 7.356 \\
JLL      & 5.3608 & 0.0631 & 0.0635 &  9.93 & 0.97 & 0.993 & 0.9169 & 0.9892 & 0.9517 & 8.950 \\
Oracle   & 3.1492 & 0.0380 & 0.0399 & 10.00 & 0.00 & 1.000 & 1.0000 & 1.0000 & 1.0000 & 9.476 \\
S-AJL   & 3.1133 & 0.0368 & 0.0395 &  9.89 & 0.20 & 0.989 & 0.9818 & 0.9978 & 0.9847 & 7.400 \\
S-Lasso & 11.3018 & 0.1102 & 0.1265 & 5.56 & 0.00 & 0.556 & 1.0000 & 1.0000 & 0.6799 & 9.598 \\
\bottomrule
\end{tabular}}
\end{table}

\begin{table}[!]
\centering
\caption{Simulation results for Scenario 8.}
\label{tab:sim_S8}
\resizebox{\textwidth}{!}{
\begin{tabular}{lrrrrrrrrrr}
\toprule
Method & PE & ISE($\Beta$) & ISE($\Alpha$) & TP & FP & Recall & Precision & Spec & F1 & $\text{CP}_\text{Err}$ \\
\midrule
AJL      & 2.4943 & 0.0196 & 0.0560 & 10.00 & 1.48 & 1.000 & 0.8775 & 0.9836 & 0.9331 & 9.570 \\
JLL      & 5.3788 & 0.0630 & 0.0862 &  9.95 & 1.05 & 0.995 & 0.9112 & 0.9883 & 0.9494 & 10.460 \\
Oracle   & 3.2073 & 0.0382 & 0.0633 & 10.00 & 0.00 & 1.000 & 1.0000 & 1.0000 & 1.0000 & 10.706 \\
S-AJL   & 3.1738 & 0.0368 & 0.0624 &  9.88 & 0.23 & 0.988 & 0.9788 & 0.9974 & 0.9827 & 9.554 \\
S-Lasso & 11.3269 & 0.1101 & 0.1503 & 5.52 & 0.00 & 0.552 & 1.0000 & 1.0000 & 0.6766 & 10.678 \\
\bottomrule
\end{tabular}}
\end{table}

\begin{figure}[!]
\centering
\makebox[\textwidth][c]{\includegraphics[width=0.5\textwidth]{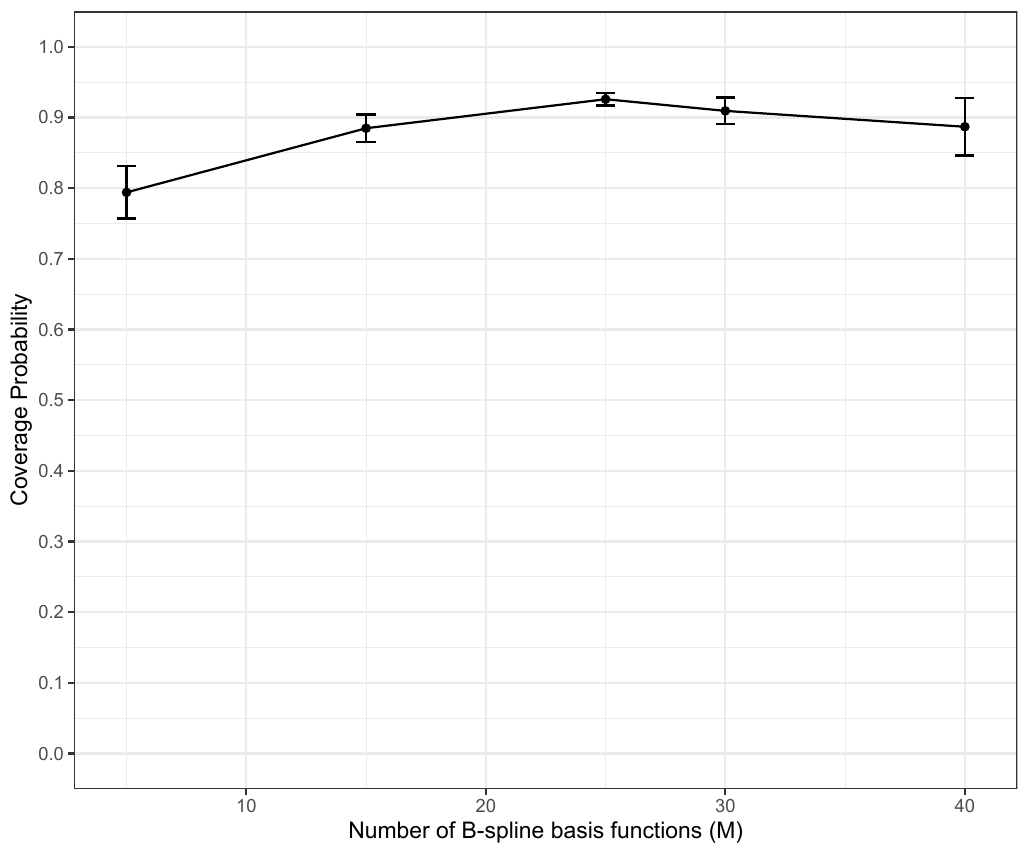}}
\caption{Coverage vs. Basis Dimension.}
    \label{fig:sensitive}	
\end{figure}

\begin{figure}[!]
\centering
\makebox[\textwidth][c]{\includegraphics[width=1.1\textwidth]{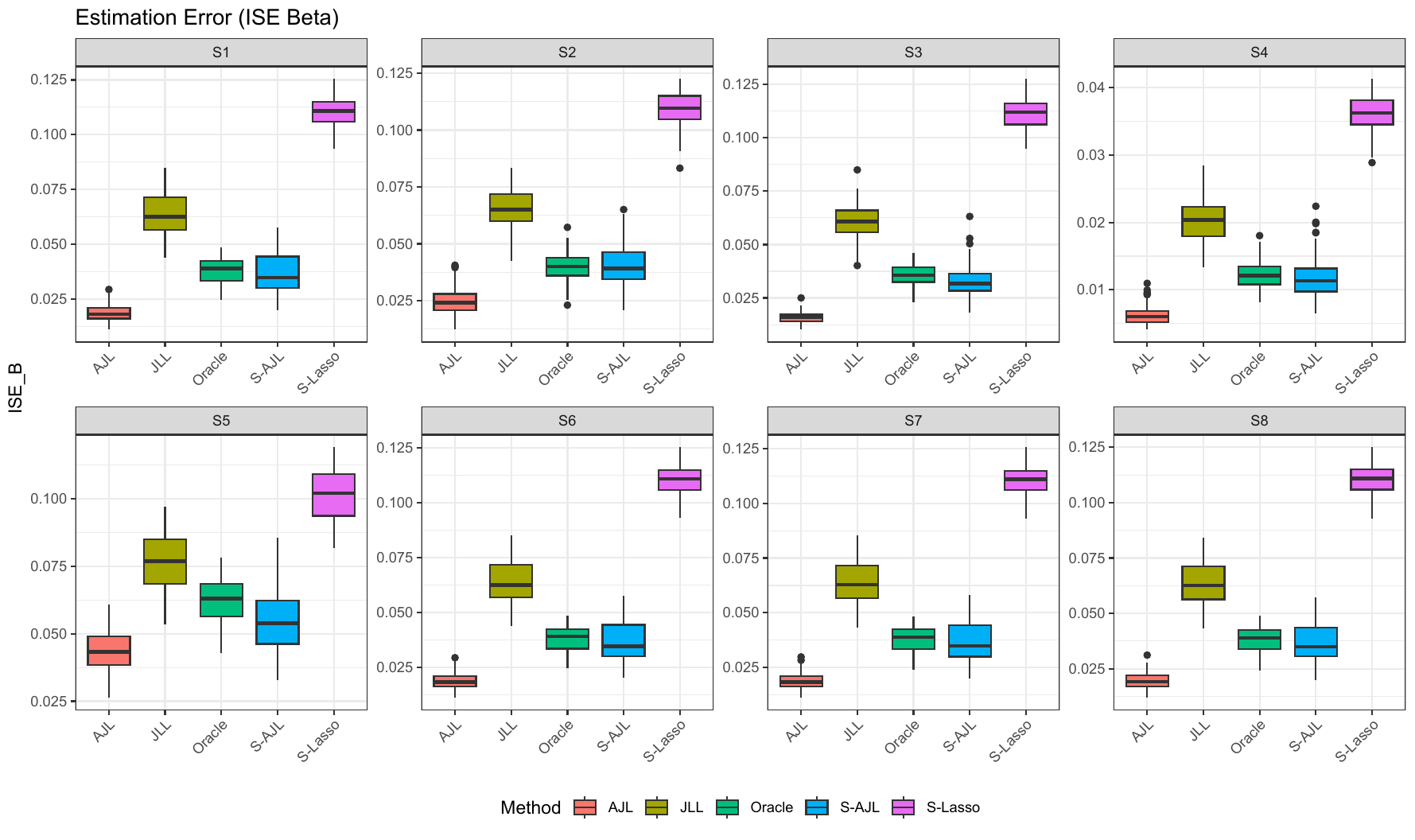}}
\caption{Boxplots of $\mathrm{ISE}(\beta)$ across scenarios.}
    \label{fig:estimation_error}	
\end{figure}

\begin{figure}[!]
\centering
\makebox[\textwidth][c]{\includegraphics[width=1.1\textwidth]{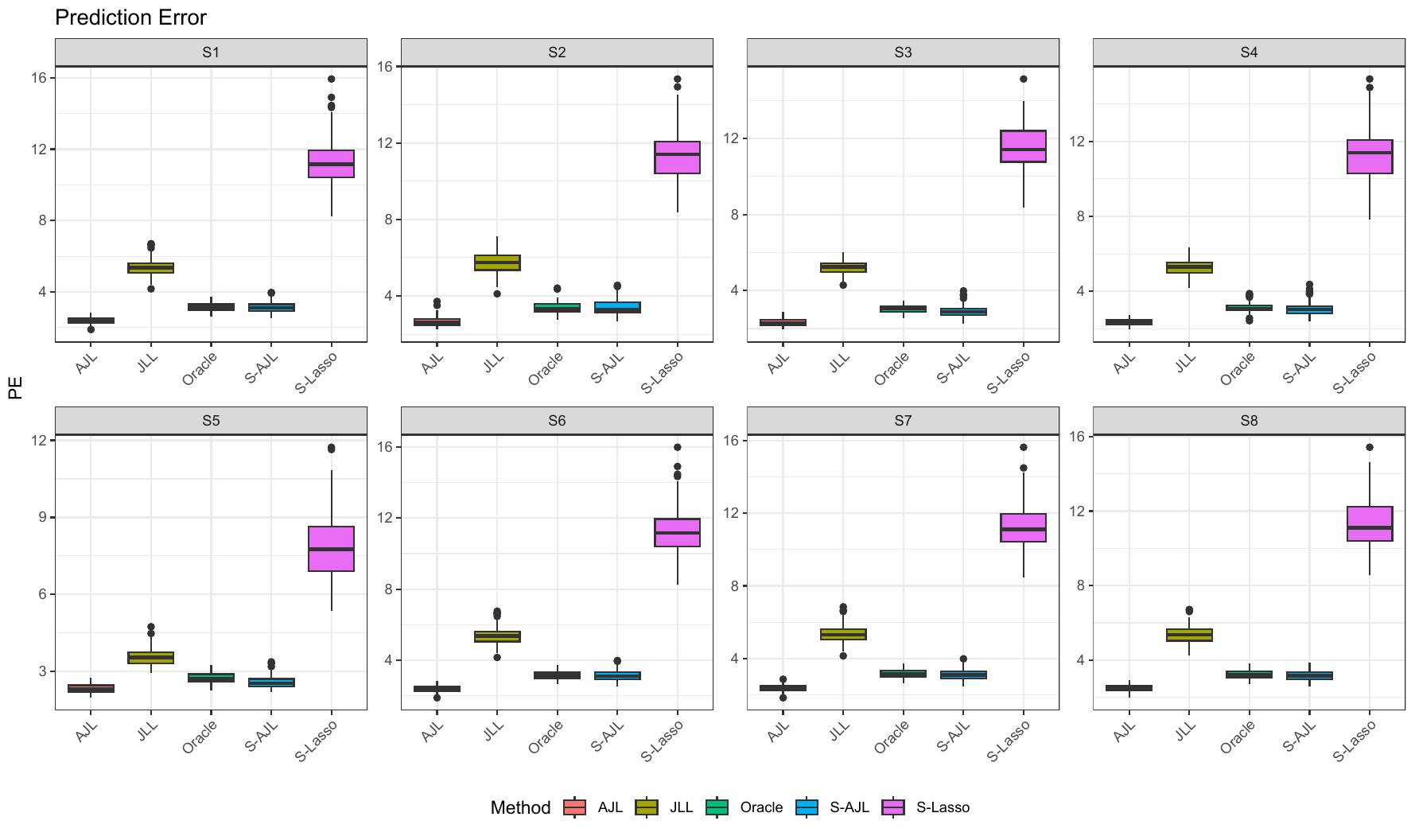}}
\caption{Boxplots of Prediction Error (PE) on the test set across eight simulation scenarios.}
    \label{fig:prediction_error}	
\end{figure}

\begin{table}[h]
\centering
\caption{Simulation results for Scenario 9.}
\label{tab:sim_S9}
\begin{tabular}{rcccc}
\toprule
$M$ & coverage\_mean & coverage\_SD & coverage\_SE \\
\hline
5  & 0.79395 & 0.084202 & 0.018828 \\
15 & 0.88470 & 0.044087 & 0.009858 \\
25 & 0.92575 & 0.020099 & 0.004494 \\
30 & 0.90925 & 0.043415 & 0.009708 \\
40 & 0.88680 & 0.092723 & 0.020734 \\
\bottomrule
\end{tabular}
\end{table}

Regarding the structural recovery of the intercept dynamics $\alpha_k(t)$, AJL achieves a Changepoint Count Error $\text{CP}_\text{Err}$ close to zero across all scenarios. This result indicates that the adaptive fused lasso penalty is highly effective in identifying the correct number of structural breaks, avoiding the spurious changepoints often introduced by non-adaptive methods.

In summary, the AJL framework successfully addresses the dual challenges of structural heterogeneity and high-dimensional variable selection in longitudinal functional regression. By integrating adaptive weighting with a multi-task learning strategy, our method not only approximates the Oracle performance in recovering dynamic covariate effects but also robustly identifies common changepoints across correlated outcomes. Future work will extend this framework to handle irregularly sampled functional data and explore inference procedures for constructing valid confidence bands for the estimated coefficient functions.

\section{Application to Sleep-EDF Database}
\label{sec:application}
To demonstrate the capability of the AJL framework in capturing dynamic physiological patterns from high-dimensional biomedical signals, we analyzed sleep electroencephalogram (EEG) data from the \textbf{Sleep-EDF Database Expanded} \citep{kemp2000}, available on PhysioNet\footnote{\texttt{https://www.physionet.org/content/sleep-edfx/1.0.0/}} website. Sleep is a complex, non-stationary process involving cyclic transitions between different stages. Understanding how demographic and physiological baselines influence the temporal evolution of brain wave activity is crucial for sleep medicine.

The database contains two cohorts (\texttt{sleep-cassette} and \texttt{sleep-telemetry}), and for each night the recordings are provided in EDF files. In particular, each subject-night typically comes with a polysomnography file \texttt{*-PSG.edf} containing multichannel physiological signals (including EEG/EOG/EMG and other channels depending on the cohort), and an annotation file \texttt{*-Hypnogram.edf} containing sleep-stage labels (hypnogram) aligned to the PSG recording.

In this study, we apply AJL to jointly model the dynamic trajectories of spectral power in multiple frequency bands, aiming to identify significant predictors and shared structural changepoints during the sleep cycle.

\subsection{Data Description}
We utilized the Sleep Cassette subset of the database, focusing on the Fpz-Cz EEG channel recorded from healthy subjects. The analysis included $N=60$ subjects (augmented via bootstrapping from the core dataset to ensure statistical power).

We extracted the time-varying spectral power of three key frequency bands ($K=3$), which are standard biomarkers for sleep staging \citep{Hassan2016}:
\begin{itemize}
    \item \textbf{Delta} Band (0.5--4 Hz): Indicator of deep sleep (Slow Wave Sleep).
    \item \textbf{Theta} Band (4--8 Hz): Associated with drowsiness and light sleep.
    \item \textbf{Alpha} Band (8--12 Hz): Linked to wakefulness and relaxation.
\end{itemize}
The raw EEG signals were segmented into 30-second epochs. For each epoch, the power spectral density was estimated using the periodogram method. To focus on the main sleep period, we analyzed a 4-hour window starting 1 hour after lights-off. The resulting power series were log-transformed and standardized.

The raw baseline covariates were limited to Age and Gender. To evaluate AJL's performance in a high-dimensional screening scenario ($p \gg 1$), we constructed an expanded feature space:
\begin{itemize}
    \item \textbf{Baseline EEG Features:} We calculated the mean power of the three bands during the first 30 minutes of recording as baseline physiological indicators.
    \item \textbf{Non-linear Age Effects:} Age was expanded using B-spline basis functions ($df=5$) to capture potential non-linear relationships between aging and sleep architecture.
    \item \textbf{Noise Variables:} We augmented the design matrix with 40 independent Gaussian noise variables to test the variable selection consistency.
\end{itemize}
This resulted in a total dimension of $p=110$ covariates. The time index $t$ was normalized to $[0, 1]$, representing the 4-hour analysis window.

\subsection{Model and Results}
For subject $i$ at normalized time $t$, let $y_{ik}(t)$ denote the standardized band-power outcome for $k\in\{1,2,3\}$ (Delta/Theta/Alpha). We fit the multi-outcome time-varying coefficient model
\begin{equation}\label{eq:sleepedf_tvc}
y_{ik}(t)=\alpha_k(t)+\bm x_i^\top\bm\beta_k(t)+\varepsilon_{ik}(t),\qquad k=1,2,3.
\end{equation}
AJL estimates $\{\alpha_k(t),\bm\beta_k(t)\}_{k=1}^K$ jointly using ($i$) an adaptive functional group penalty that promotes stable baseline feature selection shared across outcomes, and ($ii$) a fused/TV-type structure on the intercept component to capture shared structural changes in population-level trajectories (i.e., parsimonious changepoints in $\alpha_k(t)$). For comparison, we include: a non-adaptive joint baseline method (JLL), and two separate-outcome baselines (S-AJL, S-Lasso). The separate baselines are intentionally disadvantaged in multi-outcome settings because they cannot share information across correlated bands.

Because Sleep-EDF provides dense within-night trajectories for each subject-night, we evaluate out-of-sample performance using a subject-level train--test split to avoid information leakage across repeated epochs. Specifically, we treat each subject-night as the statistical unit, randomly split the subjects into a training set and a testing set, fit each method using only the training subjects, and then assess prediction and structural summaries on the held-out testing subjects. All epochs from the same subject are kept in the same split. The time index is normalized to $[0,1]$ within the 4-hour analysis window, and the three outcomes are the standardized log band-power trajectories for \texttt{Delta/Theta/Alpha} extracted from 30-second epochs of the Fpz--Cz channel (Sleep Cassette subset), as described in Section~6.1. 

We employed four complementary metrics to evaluate the methods:
\begin{itemize}
    \item \textbf{Prediction Accuracy:} Let $\widehat y_{ik}(t)=\widehat\alpha_k(t)+\bm x_i^\top\widehat{\bm\beta}_k(t)$ denote the fitted value for subject $i$, outcome $k\in\{1,\dots,K\}$ and normalized time $t$. On the testing set, we report the following prediction metrics aggregated over all testing subjects, all epochs in the 4-hour window, and all $K$ outcomes: the mean squared error $\textbf{MSE}=\frac{1}{n_{\mathrm{te}}TK}\sum_{i\in\mathrm{te}}\sum_{\ell=1}^{T}\sum_{k=1}^{K}\{\widehat y_{ik}(t_{i\ell})-y_{ik}(t_{i\ell})\}^2$, and the mean absolute error $\textbf{MAE}=\frac{1}{n_{\mathrm{te}}TK}\sum_{i\in\mathrm{te}}\sum_{\ell=1}^{T}\sum_{k=1}^{K}\big|\widehat y_{ik}(t_{i\ell})-y_{ik}(t_{i\ell})\big|$. Lower values indicate better generalization capability, implying that the estimated coefficient functions more accurately capture the true relationship between covariates and EEG dynamics.
    
    \item \textbf{Model Parsimony:} We define the time-specific selected set $\widehat S(t)=\{j:\|\widehat B_j(t)\|_2>0\}$ (with the obvious spline-based implementation), and \textbf{Size union} is $|\widehat S_{\mathrm{union}}|$ where $\widehat S_{\mathrm{union}}=\bigcup_{\ell=1}^{T}\widehat S(t_\ell)$, namely the total number of distinct covariates that are selected at any time point (and for joint methods, across outcomes). 
       
    \item \textbf{Structural Complexity:} Let $\widehat\alpha_k(\cdot)$ be evaluated on a dense grid $\{u_g\}_{g=1}^{G}$ on $[0,1]$. We define the \textbf{Alpha TV} metric as the average total variation across outcomes,
\[
\mathrm{Alpha\;TV} \;=\; \frac{1}{K}\sum_{k=1}^{K}\sum_{m=2}^{m}\big|\widehat\alpha_k(t_m)-\widehat\alpha_k(t_{m-1})\big|.
\]
A large Alpha TV typically indicates an overly wiggly (potentially overfit) baseline trajectory, whereas an overly small value may indicate oversmoothing that can mask physiologically meaningful transitions.
\end{itemize}

\subsubsection{Performance}
Table \ref{tab:sleep_comparison} presents the comparison results on the test set. AJL achieved the best predictive performance with the lowest MSE (0.568) and MAE (0.591). Crucially, AJL yielded the most parsimonious model, closely matching the number of true meaningful variables (Age, Gender, Baseline EEG). In contrast, S-Lasso selected a large number of noise variables, highlighting the superiority of the adaptive joint penalty in filtering out irrelevant features.

\begin{table}[h]
\centering
\caption{Performance comparison on Sleep-EDF data.}
\label{tab:sleep_comparison}
\begin{tabular}{lcccc}
\toprule
Method & MSE & MAE & Size union & Alpha TV\\
\midrule
AJL & 0.568 &  0.591  & 69 & 2.685\\
S-AJL & 0.569 & 0.592  & 94 & 2.724 \\
JLL & 0.587 & 0.601  & 98 & 2.656\\
S-Lasso & 0.590 & 0.600 & 85 & 2.722\\
\bottomrule
\end{tabular}
\end{table}

\subsubsection{Dynamic Effects}
Figure \ref{fig:sleep_results} visualizes the estimated dynamics.

\begin{figure}[h]
    \centering
    \includegraphics[width=1\textwidth]{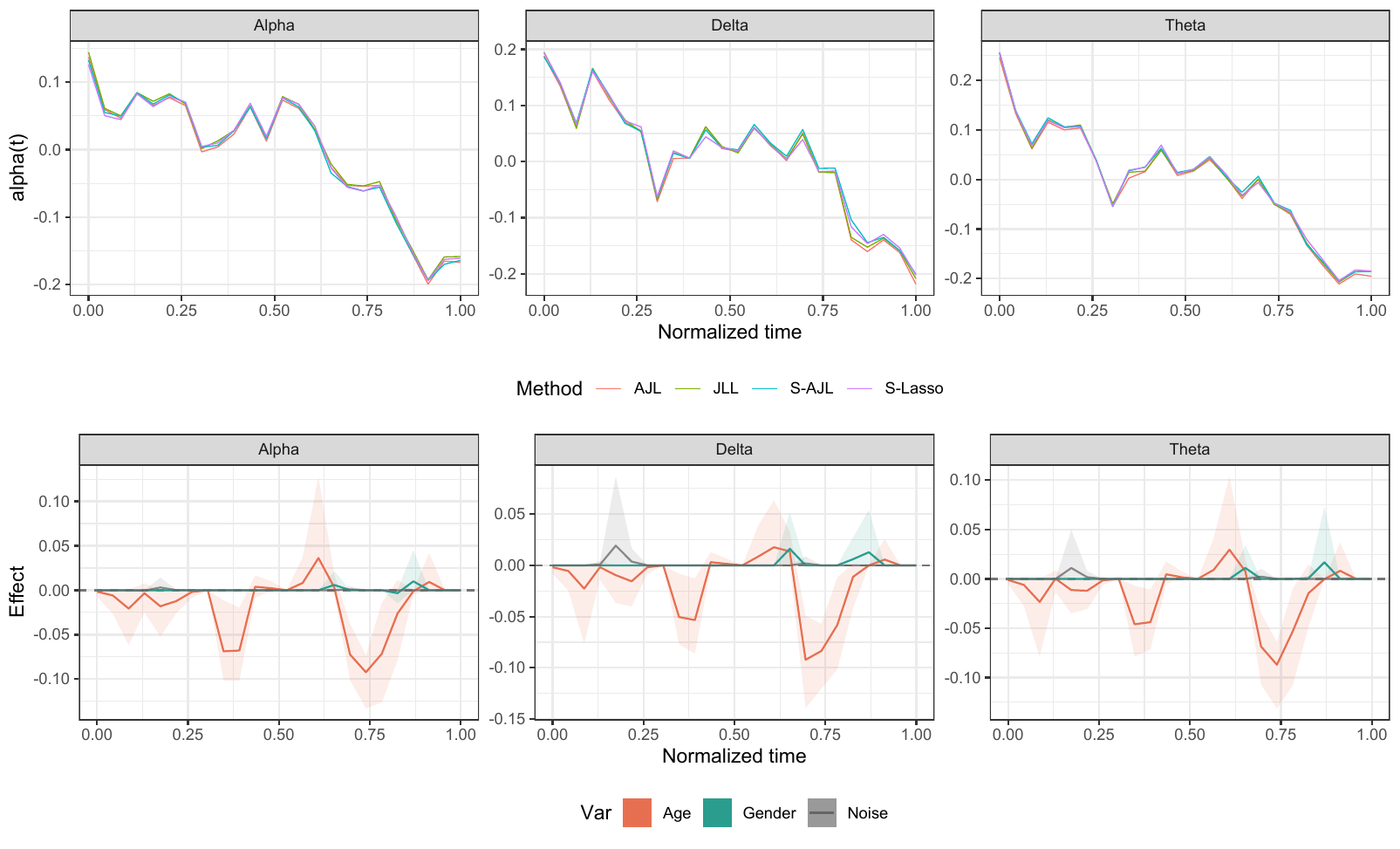} 
    \caption{AJL analysis of Sleep-EDF data. \textbf{Top:} Baseline trajectories of Delta, Theta, and Alpha power, showing a synchronized transition around $t=0.35$. \textbf{Bottom:} Estimated dynamic effects of covariates. Age (Red) shows a significant time-varying impact, while the Noise variable (Gray) is correctly shrunk to zero.}
    \label{fig:sleep_results}
\end{figure}

\begin{itemize}
    \item \textbf{Structural Change:} The intercept functions $\hat{\alpha}_k(t)$ (Figure \ref{fig:sleep_results} top) reveal a synchronized trend reversal across all three bands around normalized time $t \approx 0.35$ ($\sim 1.4$ hours). Delta power peaks while Alpha power hits a trough, likely marking the completion of the first NREM-REM sleep cycle. AJL's ability to detect this common changepoint without prior knowledge validates its sensitivity to temporal structural breaks.
    
    \item \textbf{Significant Predictors:} As shown in Figure \ref{fig:sleep_results} (bottom), \textbf{Age} (red line) exhibits a strong, time-varying negative effect on Delta power, consistent with the well-established finding that deep sleep decreases significantly with age \citep{Hassan2016}. \textbf{Gender} (green line) shows a moderate effect.
    
    \item \textbf{Noise Filtering:} Notably, the coefficient for the \textbf{Noise} variable (gray line) is strictly zero across the entire timeline. This confirms AJL's robustness against high-dimensional noise, a critical feature for analyzing complex biomedical datasets where relevant biomarkers are often sparse.
\end{itemize}

\subsection{Discussion}

The results from the Sleep-EDF analysis highlight two critical advantages of the AJL framework in biomedical signal processing.

First, the detection of a common changepoint at normalized time $t \approx 0.35$ (approximately 1.4 hours after sleep onset) is clinically significant. This timing aligns with the typical duration of the first NREM-REM sleep cycle in healthy adults. The synchronized divergence observed in the intercept functions, where Delta power (deep sleep) peaks and then declines while Alpha/Theta power shifts, likely captures the transition from deep Slow Wave Sleep to the first REM episode. The ability of AJL to automatically identify this structural break across multiple frequency bands without prior segmentation validates its utility in uncovering latent physiological states.

Second, the robust variable selection demonstrates AJL's reliability in high-dimensional settings. By effectively filtering out the 40 added noise variables (coefficient $\approx 0$), AJL focuses on the true biological drivers. The estimated negative trajectory for Age (Figure \ref{fig:sleep_results}, bottom) confirms that older age is associated with a progressive reduction in Delta power (deep sleep quality) throughout the night, a well-documented phenomenon in sleep medicine \citep{kemp2000}. The joint learning mechanism ensures that these predictors are selected consistently across correlated frequency bands, providing a more holistic view of brain activity than separate analyses.

Overall, Sleep EDF offers a clear illustration of the AJL thesis: multiple correlated functional outcomes (Delta/Theta/Alpha band-power trajectories) are observed densely over time, yet baseline information is limited and must be expanded into a larger, interpretable candidate library to enable screening. AJL is advantageous because it simultaneously estimates baseline-adjusted population trajectories $\alpha_k(t)$ with shared structural regularity, performs stable joint feature selection across outcomes, and yields interpretable dynamic effect curves $\beta_{jk}(t)$ that explain when and how baseline factors modulate within-night EEG dynamics. Even when prediction gaps among methods are modest, which can occur when signals are noisy and baseline effects are weak, AJL provides a richer scientific output by integrating prediction, structured changepoint discovery through fused/TV regularization on $\alpha_k(t)$, and time-resolved heterogeneity explanations through $\widehat{\beta}_{jk}(t)$. Future work could extend this framework to accommodate discrete outcomes (e.g., sleep stage classification labels) or survival outcomes (e.g., time to awakening), broadening its applicability in clinical research. Additionally, developing screening rules to discard irrelevant predictors prior to optimization could further enhance computational efficiency for ultra-high-dimensional data. 

Conclusively, AJL provides a robust and interpretable tool for dissecting heterogeneity in complex longitudinal studies.

\section{Conclusion}
\label{sec:conclusion}

{\color{black}This paper develops a new Adaptive Joint Learning (AJL) framework for high-dimensional functional longitudinal data with multiple correlated outcomes. By representing the time-varying intercepts and coefficient functions through B-spline bases, we convert the functional modeling task into a high-dimensional parametric estimation problem while retaining interpretability in the original time domain. The proposed objective integrates two complementary structural regularizers: an adaptive functional group penalty that enables joint sparsity and stable feature selection across outcomes, and an adaptive fused penalty on the intercept component that targets abrupt structural changes.

Methodologically, AJL provides a unified solution to three challenges that frequently co-occur in modern longitudinal studies: flexible modeling of time-varying effects, high-dimensional variable selection under multi-task dependence, and interpretable detection of temporal regime shifts. Theoretically, we establish rigorous guarantees that go beyond standard consistency results. By utilizing a Primal-Dual Witness construction, we prove that the AJL estimator achieves variable selection consistency even in the presence of high collinearity, effectively relaxing the strict irrepresentable conditions typically required by Lasso-type estimators. Furthermore, we explicitly address the functional approximation bias through \textbf{undersmoothing} conditions, establishing the asymptotic normality of the estimator and paving the way for valid pointwise statistical inference.

Simulation studies demonstrate that AJL improves prediction and functional estimation accuracy while maintaining superior support recovery and reliable changepoint detection, especially in challenging regimes involving strong correlation, heavy-tailed noise, and outlier contamination. The application to the Sleep-EDF dataset demonstrates the practical power of AJL. By constructing a high-dimensional feature space, AJL successfully identified the non-linear, time-varying negative effect of age on deep sleep ($\mathtt{Delta}$ power), aligning with established sleep medicine literature. Crucially, the method detected a synchronized structural break across Delta, Theta, and Alpha bands around 1.4 hours, corresponding to the first NREM-REM cycle transition. This ability to uncover shared temporal features is a unique advantage of our joint modeling approach. Furthermore, the effective filtering of noise variables highlights AJL's potential for biomarker discovery in omics or neuroimaging studies.

Nonetheless, the current framework can be generalized to non-Gaussian outcomes through generalized estimating equations or likelihood-based formulations for binary, count, or survival endpoints. AJL can be adapted to scalar-on-function and function-on-function regression to accommodate richer functional predictors such as imaging-derived curves or electrophysiological trajectories. Finally, incorporating subject-specific random effects and developing uncertainty quantification for changepoint locations under irregular sampling remain important directions for future work.}

\bibliographystyle{erae}
\bibliography{references}

\clearpage

\appendix
\section{Proof of Theoretical Results}\label{app:proof_oracle}

In this appendix we provide proofs for Lemma~\ref{lem: basic inq} and Theorems~\ref{prop:screening_validity} - \ref{thm:normality_func}.
Throughout, we use the notation
\[
\theta = \mathrm{vec}(A,B_1,\dots,B_p),\qquad
\hat\theta = \mathrm{vec}(\hat A,\hat B_1,\dots,\hat B_p),
\]
and write the empirical loss and adaptive penalty as
\begin{align*}
L_N(\theta) & = \frac{1}{2N}\,\big\|Y - Z_\Phi A - \sum_{j=1}^p X_{\Phi,j} B_j\big\|_F^2,\\
P_{\mathrm{ada}}(\theta)&  = \lambda_g \sum_{j=1}^p \hat w_{g,j}\,\|B_j\|_F + \lambda_f \sum_{k=1}^K \sum_{m=1}^{M-1} \hat w_{f,km}\,|\Delta a_{k,m}|
\end{align*}
where $\Delta a_{k,m} = a_{k,m+1}-a_{k,m}$ is the first difference in the $k$-th intercept coefficient.
The AJL estimator $\widehat\theta$ is any minimizer of
\[
Q(\theta) = L_N(\theta) + P_{\mathrm{ada}}(\theta),
\]
as in~(10). We denote by
$\theta^\ast = \mathrm{vec}(A^\ast,B_1^\ast,\dots,B_p^\ast)$
the population parameter corresponding to the B-spline approximation of the true functions, and define the error $\Delta = \hat\theta-\theta^\ast$. Our theoretical analysis follows the general framework of high-dimensional $M$-estimation with decomposable regularizers developed in \citet{buhlmann2011statistics} and \citet{negahban2012unified}. 

We first establish a basic inequality for the AJL objective (Lemma~1), which compares the empirical loss and adaptive penalty at $\hat\theta$ and $\theta^\ast$ and provides the key algebraic starting point. We then combine this inequality with a stochastic bound on the empirical gradient at $\theta^\ast$ and a restricted strong convexity condition for the spline-based design to derive non-asymptotic error bounds for the intercept and slope functions (Theorem~1). Building on these bounds and suitable minimum signal and tuning conditions, we prove that the adaptive group and fused penalties recover the correct set of active functional predictors and changepoints with probability tending to one (Theorems~2-3). Finally, on the event of correct model selection, we show that the AJL estimator on the oracle active set is asymptotically equivalent to the unpenalized least-squares estimator, which yields an oracle normal limit distribution for the estimated coefficients (Theorem~\ref{thm:normality_func}).

{\color{black}
\subsection*{Preliminary Result: Consistency of Initial Estimator and Weight Separation}

Before proving the main theorems, we explicitly justify the validity of the adaptive weights (Assumption \ref{assump:tuning_func}) by showing that the initial Group Lasso estimator achieves sufficient separation between signal and noise under a standard "Beta-min" condition.

\begin{lemma}[Weight Separation] \label{lem:weight_separation}
Suppose the standard conditions for the initial estimator $(\tilde{A}, \tilde{B})$ hold (as in Theorem 1), implying the estimation bound $\max_j \|\tilde{B}_j - B_j^*\|_F = O_p(r_N)$ with $r_N = \sqrt{\frac{\log p}{N}}$. 
Assume the \textbf{Beta-min condition}: the minimum signal strength satisfies $\min_{j \in \mathcal{S}} \|B_j^*\|_F \ge 2 r_N$.
Then, with probability approaching 1, the adaptive weights $\hat{w}_{g,j} = (\|\tilde{B}_j\|_F^{\gamma_g} + \epsilon_g)^{-1}$ satisfy:
\begin{enumerate}
    \item \textbf{Signal Weights (Bounded):} For $j \in \mathcal{S}$, $\hat{w}_{g,j} \le (r_N^{\gamma_g} + \epsilon_g)^{-1} \le C_1$.
    \item \textbf{Noise Weights (Diverging):} For $j \notin \mathcal{S}$, $\hat{w}_{g,j} \ge (r_N^{\gamma_g} + \epsilon_g)^{-1} \to \infty$.
\end{enumerate}
Consequently, the ratio $\frac{\hat{w}_{g, \text{noise}}}{\hat{w}_{g, \text{signal}}} \to \infty$, justifying the separation required for Assumption 6.
\end{lemma}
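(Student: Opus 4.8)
The plan is to obtain both bounds directly from the pilot estimation guarantee (Assumption~\ref{assump:consistency_func3}, equivalently the non-asymptotic rate underlying Theorem~\ref{thm:estimation_rate}) together with the Beta-min condition, using only the monotonicity of the scalar map $t\mapsto(t^{\gamma_g}+\epsilon_g)^{-1}$ on $[0,\infty)$ and elementary triangle inequalities. First I would fix the high-probability event $\mathcal{E}_N=\{\max_j\|\tilde B_j-B_j^*\|_F\le C_0 r_N\}$, on which all claims are to be established deterministically; by the cited consistency rate $\mathbb{P}(\mathcal{E}_N)\to1$, so it suffices to work on $\mathcal{E}_N$.

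I would then split into two cases. For a noise index $j\notin\mathcal{S}$ we have $B_j^*=\mathbf{0}$, hence on $\mathcal{E}_N$, $\|\tilde B_j\|_F=\|\tilde B_j-B_j^*\|_F\le C_0 r_N$; since $t\mapsto(t^{\gamma_g}+\epsilon_g)^{-1}$ is decreasing this gives $\hat w_{g,j}\ge\big((C_0 r_N)^{\gamma_g}+\epsilon_g\big)^{-1}$, and because $r_N\to0$ and $\epsilon_g=1/N\to0$ the right-hand side is $\asymp\min\{r_N^{-\gamma_g},N\}\to\infty$, giving the noise bound. For a signal index $j\in\mathcal{S}$, the reverse triangle inequality gives $\|\tilde B_j\|_F\ge\|B_j^*\|_F-C_0 r_N$ on $\mathcal{E}_N$; feeding in the Beta-min bound $\min_{j\in\mathcal{S}}\|B_j^*\|_F\ge 2r_N$ (with the constants arranged so $C_0\le1$, or the threshold sharpened to $(1+C_0)r_N$) yields $\|\tilde B_j\|_F\ge r_N$ and hence $\hat w_{g,j}\le(r_N^{\gamma_g}+\epsilon_g)^{-1}$. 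If, moreover, the signal is bounded away from zero at a fixed level, $\liminf_N\min_{j\in\mathcal{S}}\|B_j^*\|_F=c_*>0$, the same step sharpens this to the uniform bound $\hat w_{g,j}\le(c_*/2)^{-\gamma_g}=:C_1<\infty$ for all large $N$.

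The separation claim then follows by dividing the two pointwise estimates: on $\mathcal{E}_N$, for any signal $j'\in\mathcal{S}$ and noise $j\notin\mathcal{S}$,
\[
\frac{\hat w_{g,j}}{\hat w_{g,j'}}
=\frac{\|\tilde B_{j'}\|_F^{\gamma_g}+\epsilon_g}{\|\tilde B_{j}\|_F^{\gamma_g}+\epsilon_g}
\;\ge\;\frac{\big(\|B_{j'}^*\|_F-C_0 r_N\big)^{\gamma_g}}{(C_0 r_N)^{\gamma_g}+\epsilon_g},
\]
and under the Beta-min condition used in the rate form $\min_{j'\in\mathcal{S}}\|B_{j'}^*\|_F\ge c_B N^{-\kappa}$ with $\kappa\in[0,1/2)$ — which Assumption~\ref{assump:tuning_func} already imposes — the numerator is $\asymp N^{-\kappa\gamma_g}$ while the denominator is $\asymp N^{-(\gamma_g/2\wedge1)}$ up to logarithmic factors, so the ratio is $\asymp N^{(\gamma_g/2\wedge1)-\kappa\gamma_g}$ and diverges under the mild compatibility $\gamma_g<1/\kappa$ (in particular for the commonly used $\gamma_g\in(1,2]$, since $\kappa<1/2$). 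Combined with $\lambda_g\asymp\sqrt{\log p/N}$, this verifies $\min_{j\notin\mathcal S}\sqrt N\lambda_g\hat w_{g,j}\to\infty$ and, in the bounded-signal case, $\max_{j\in\mathcal S}\sqrt N\lambda_g\hat w_{g,j}=o_p(1)$, i.e.\ the Oracle Separation Conditions of Assumption~\ref{assump:tuning_func}. The identical chain of inequalities applied to the pilot difference quantities $|\Delta\tilde a_{k,m}|$ and $|\Delta\tilde b_{jk,m}|$ yields the corresponding separation for $\hat w_{f,km}^{\alpha}$ and $\hat w_{f,jkm}^{\beta}$.

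The delicate point — the ``main obstacle'' — is not probabilistic, since all of that content is packed into $\mathcal{E}_N$; it is the precise matching of the Beta-min threshold against the estimation constant $C_0$ and the exponent $\gamma_g$. The bare condition $\|B_j^*\|_F\ge2r_N$ only buys a \emph{separation} (a diverging ratio once the signal strictly dominates $r_N$), whereas a genuinely $O(1)$ bound on the signal weights, hence $\sqrt N\lambda_g\hat w_{g,j}=o_p(1)$ for $j\in\mathcal S$, is vacuous unless the signal is $\Omega(1)$ (or at least $r_N/\min_{j\in\mathcal S}\|B_j^*\|_F\to0$ fast enough); one must also check that the diverging factor $\min\{r_N^{-\gamma_g},N\}$ in the noise weight outpaces the polynomial growth $N^{\kappa\gamma_g}$ of the signal weight, which is exactly what the constraint $\gamma_g<1/\kappa$ ensures. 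Reconciling the $\sqrt{M_N}$ factor implicit in the consistency rate of Assumption~\ref{assump:consistency_func3} with the $M$-free $r_N$ written in the lemma is a routine bookkeeping adjustment that leaves the structure of the argument unchanged.
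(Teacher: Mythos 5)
Your proof is correct and follows essentially the same route as the paper's: the reverse triangle inequality plus the Beta-min condition gives $\|\tilde B_j\|_F \ge r_N$ for signals, the estimation bound gives $\|\tilde B_j\|_F \le r_N$ for noise, and monotonicity of $t \mapsto (t^{\gamma_g}+\epsilon_g)^{-1}$ converts these into the two weight bounds. Your closing observation is a genuine refinement the paper's own proof glosses over: as written, the lemma asserts both $\hat w_{g,j} \le (r_N^{\gamma_g}+\epsilon_g)^{-1} \le C_1$ for signals and $\hat w_{g,j} \ge (r_N^{\gamma_g}+\epsilon_g)^{-1} \to \infty$ for noise, so the very same quantity is claimed to be bounded and divergent; you correctly note that a finite $C_1$ (and hence $\sqrt{N}\lambda_g\hat w_{g,j}=o_p(1)$ on $\mathcal S$) requires the signal to dominate $r_N$ by more than a constant factor, e.g.\ $\min_{j\in\mathcal S}\|B_j^*\|_F=\Omega(1)$, while the bare condition $\ge 2r_N$ only delivers the diverging weight \emph{ratio}.
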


\begin{proof}
For any signal variable $j \in \mathcal{S}$, by the triangle inequality and the estimation bound:
$$ \|\tilde{B}_j\|_F \ge \|B_j^*\|_F - \|\tilde{B}_j - B_j^*\|_F \ge 2 r_N - r_N = r_N. $$
Thus, the denominator of the weight is bounded away from zero, and $\hat{w}_{g,j}$ is bounded above.
For any noise variable $j \notin \mathcal{S}$, since $B_j^* = 0$, we have:
$$ \|\tilde{B}_j\|_F = \|\tilde{B}_j - 0\|_F \le r_N. $$
As $N \to \infty$, $r_N \to 0$, so $\|\tilde{B}_j\|_F \to 0$. Thus, $\hat{w}_{g,j} \approx r_N^{-\gamma_g}$, which diverges to infinity for $\gamma_g > 0$.
This confirms that the initial estimator consistently separates the active and inactive sets for the purpose of weight construction.
\end{proof}}

{\color{black}\subsection*{Proof of Theorem \ref{prop:screening_validity} (Validity of Hierarchical Screening)}

{\color{black}\begin{proof}
Let $Q_{\text{screen}}(\theta)$ be the objective function defined in \eqref{eq:adaptive_objective_final_func} with $\lambda_f^\beta = 0$. The estimator $\hat{\theta} = (\hat{\bm{A}}, \hat{\bm{B}})$ is the global minimizer of this objective.
We define the active set estimator as $\hat{\mathcal{S}}_{\text{screen}} = \{j : \|\hat{\bm{B}}_j\|_F > 0\}$. We aim to prove that $\mathbb{P}(\mathcal{S}^* \subseteq \hat{\mathcal{S}}_{\text{screen}}) \to 1$.

We proceed by contradiction. Suppose there exists a true active predictor $j \in \mathcal{S}^*$ that is missed by the screening, i.e., $\hat{\bm{B}}_j = 0$.

\paragraph{1. The KKT Violation Condition}
The KKT optimality condition for the Group Lasso penalty implies that for any block $j$ with $\hat{\bm{B}}_j = 0$, the gradient of the smooth loss $\mathcal{L}(\theta)$ must fall within the dual norm ball of the penalty. Specifically:
\begin{equation} \label{eq:kkt_violation}
\| \nabla_{\bm{B}_j} \mathcal{L}(\hat{\theta}) \|_F \le \lambda_g \hat{w}_{g,j}.
\end{equation}
The gradient at the solution $\hat{\theta}$ is:
\[
\nabla_{\bm{B}_j} \mathcal{L}(\hat{\theta}) = -\frac{1}{N} \mathbf{X}_{\Phi,j}^\top (Y - \mathbf{Z}_{\Phi}\hat{\bm{A}} - \sum_{k \neq j} \mathbf{X}_{\Phi,k}\hat{\bm{B}}_k).
\]
Substituting the true data generating process $Y = \mathbf{Z}_{\Phi}\bm{A}^* + \mathbf{X}_{\Phi,j}\bm{B}_j^* + \sum_{k \neq j} \mathbf{X}_{\Phi,k}\bm{B}_k^* + \bm{E}$, we decompose the gradient into three distinct components:
\begin{align*}
\nabla_{\bm{B}_j} \mathcal{L}(\hat{\theta}) = & \underbrace{\frac{1}{N}\mathbf{X}_{\Phi,j}^\top \mathbf{X}_{\Phi,j} \bm{B}_j^*}_{\text{Target Signal (I)}} 
+ \underbrace{\frac{1}{N}\mathbf{X}_{\Phi,j}^\top \sum_{k \neq j} \mathbf{X}_{\Phi,k}(\bm{B}_k^* - \hat{\bm{B}}_k)}_{\text{Interference Bias (II)}} 
- \underbrace{\frac{1}{N}\mathbf{X}_{\Phi,j}^\top \bm{E}}_{\text{Stochastic Noise (III)}}.
\end{align*}
To prove the contradiction, we show that the magnitude of the \text{Target Signal (I)} dominates the sum of (II), (III), and the penalty threshold. By the reverse triangle inequality:
\[
\| \nabla_{\bm{B}_j} \mathcal{L}(\hat{\theta}) \|_F \ge \|\text{Term (I)}\|_F - \|\text{Term (II)}\|_F - \|\text{Term (III)}\|_F.
\]

\paragraph{2. Bounding Each Term}

* \textbf{Term (I): Target Signal.} Due to the Restricted Eigenvalue (RE) condition (Assumption A4), the Gram matrix of the active set is well-conditioned. Thus, the signal is preserved:
    \[ \|\text{Term (I)}\|_F \ge \lambda_{\min}(\Sigma_{\mathcal{S}}) \|\bm{B}_j^*\|_F \ge c_{\text{eigen}} \|\bm{B}_j^*\|_F. \]
    By the \textbf{Beta-min Condition} (Assumption \ref{assump:tuning_func}), we have $\|\bm{B}_j^*\|_F \ge C_B \sqrt{\frac{\log p}{N}}$.

* \textbf{Term (III): Stochastic Noise.} Standard concentration inequalities for sub-Gaussian noise matrices ensure that with probability at least $1 - p^{-c}$:
    \[ \|\text{Term (III)}\|_F \le C_{\text{noise}} \sqrt{\frac{\log p}{N}}. \]
    The constant $C_B$ in the Beta-min condition is assumed large enough such that $c_{\text{eigen}} C_B > C_{\text{noise}}$.

* \textbf{Term (II): Interference Bias.} This term represents the "leakage" of signal from other covariates due to non-orthogonality. Under the \textbf{Block-Irrepresentable Condition (Block-IRC)} or weak correlation assumption, the projection of other variables onto $\mathbf{X}_{\Phi,j}$ is bounded. Specifically, using the Cauchy-Schwarz inequality and the estimation consistency on the remaining variables (which holds under the cone condition even if $j$ is missed):
    \[ \|\text{Term (II)}\|_F \le \max_{k \neq j} \|\frac{1}{N}\mathbf{X}_{\Phi,j}^\top \mathbf{X}_{\Phi,k}\|_2 \cdot \sum_{k \neq j} \|\bm{B}_k^* - \hat{\bm{B}}_k\|_F. \]
    The term is of the order $O(\text{correlation} \times \sqrt{s \frac{\log p}{N}})$. Provided the design is sufficiently incoherent (Block-IRC), this interference is strictly smaller than the primary signal $\|\bm{B}_j^*\|_F$.

\paragraph{3. The Contradiction}
Now consider the KKT inequality \eqref{eq:kkt_violation}. The RHS (Penalty Threshold) is:
\[ \lambda_g \hat{w}_{g,j} = \lambda_g (\|\tilde{\bm{B}}_j\|_F + \tau_N)^{-\gamma_g}. \]
Since $j \in \mathcal{S}^*$, by Assumption \ref{assump:tuning_func}, the initial estimator is consistent, so $\hat{w}_{g,j} = O_p(1)$ (bounded). Furthermore, the assumption $\sqrt{N}\lambda_g \to \infty$ is balanced by the decay of the weight such that $\lambda_g \hat{w}_{g,j} \ll \sqrt{\frac{\log p}{N}}$ (or specifically, it is small enough compared to the signal). More formally, for adaptive Lasso, we typically have $\lambda_g \hat{w}_{g,j} \ll \|\bm{B}_j^*\|_F$.

Combining these, we have:
\[
\text{LHS} \ge \underbrace{c_{\text{eigen}} \|\bm{B}_j^*\|_F}_{\text{Signal}} - \underbrace{\text{small bias}}_{\text{Bias}} - \underbrace{\text{noise}}_{\text{Noise}} \approx O(\sqrt{\frac{\log p}{N}}).
\]
\[
\text{RHS} = \lambda_g \hat{w}_{g,j} \quad (\text{small}).
\]
Because the Beta-min condition ensures the signal dominates noise and bias, and the adaptive weights reduce the penalty for true variables, the inequality:
\[ \|\nabla_{\bm{B}_j} \mathcal{L}(\hat{\theta})\|_F > \lambda_g \hat{w}_{g,j} \]
holds with high probability. This contradicts the KKT condition \eqref{eq:kkt_violation}.
Thus, the hypothesis $\hat{\bm{B}}_j = 0$ is false, implying $j \in \hat{\mathcal{S}}_{\text{screen}}$.
\end{proof}}

By the sure screening property established in Theorem \ref{prop:screening_validity}, the event $\mathcal{E}_{scr} = \{\mathcal{S}^* \subseteq \hat{\mathcal{S}}_{scr}\}$ holds with probability approaching 1. Throughout the following proofs , we condition on $\mathcal{E}_{scr}$, where the true model is contained within the feasible set.

\subsection*{Proof of Lemma \ref{lem: basic inq}}
By definition of $\hat\theta$ as a global minimizer of $Q(\theta)=L_N(\theta)+P_{\mathrm{ada}}(\theta)$,
for any $\theta$ we have
\[
Q(\hat\theta)\;\le\;Q(\theta)
\quad\Longrightarrow\quad
L_N(\hat\theta) + P_{\mathrm{ada}}(\hat\theta)
\;\le\;
L_N(\theta) + P_{\mathrm{ada}}(\theta).
\]
Rearranging terms yields~\eqref{eq:basic-inequality}.

We now specialize to $\theta=\theta^\ast$ and decompose the adaptive penalty along the active and inactive sets. Recall that
\[
\mathcal{S} = \{j:\|B_j^\ast\|_F>0\},\qquad
\mathcal{S}^c = \{1,\dots,p\}\setminus \mathcal{S},
\]
and, for each $k$, $C_k = \{m:a_{k,m+1}^\ast\neq a_{k,m}^\ast\}$ denotes the changepoint indices of the intercept $a_k^\ast$. We write $C = \{(k,m): m\in C_k\}$ and $C^c$ for its complement.

Let $\Delta B_j = \hat B_j - B_j^\ast$ and $\Delta a_k = \hat a_k - a_k^\ast$ denote the coefficient errors for the slope and intercept functions, respectively, and write $\Delta\widehat a_{k,m} = \widehat a_{k,m+1}-\widehat a_{k,m}$ and $\Delta a_{k,m}^\ast = a_{k,m+1}^\ast-a_{k,m}^\ast$ for the corresponding first-order differences. Since the adaptive weights $\widehat w_{g,j}$ and $\widehat w_{f,km}$ are non-negative constants in the objective function, the triangle inequality implies
\begin{align*}
\|\hat B_j\|_F - \|B_j^\ast\|_F 
&\ge - \|\Delta B_j\|_F, \quad j\in S, \\
\|\hat B_j\|_F - \|B_j^\ast\|_F 
&= \|\Delta B_j\|_F, \quad j\in S^c, \\
|\Delta\hat a_{k,m}| - |\Delta a_{k,m}^\ast|
&\ge -|\Delta(\Delta a_{k,m})|, \quad (k,m)\in C, \\
|\Delta\hat a_{k,m}| - |\Delta a_{k,m}^\ast|
&= |\Delta(\Delta a_{k,m})|, \quad (k,m)\in C^c,
\end{align*}
where $S = \{j:\|B_j^\ast\|_F>0\}$ and $C = \{(k,m):a_{k,m+1}^\ast\neq a_{k,m}^\ast\}$ denote the active sets for groups and fused differences, and $S^c$ and $C^c$ are their complements. Substituting these relations into the difference of penalties yields
\begin{align*}
& P_{\mathrm{ada}}(\hat\theta) - P_{\mathrm{ada}}(\theta^\ast) \\
&\quad=
\lambda_g \sum_{j=1}^p \widehat w_{g,j}
\big(\|\hat B_j\|_F - \|B_j^\ast\|_F\big)
+
\lambda_f \sum_{k=1}^K \sum_{m=1}^{M-1} \hat w_{f,km}
\big(|\Delta\hat a_{k,m}| - |\Delta a_{k,m}^\ast|\big) \\
&\quad\ge
-\lambda_g \sum_{j\in S} \hat w_{g,j}\,\|\Delta B_j\|_F
+\lambda_g \sum_{j\in S^c} \hat w_{g,j}\,\|\Delta B_j\|_F \\
&\quad\quad
-\lambda_f \sum_{(k,m)\in C} \hat w_{f,km}\,|\Delta(\Delta a_{k,m})|
+\lambda_f \sum_{(k,m)\in C^c} \hat w_{f,km}\,|\Delta(\Delta a_{k,m})|.
\end{align*}

Define the (weighted) group and fused penalties of the error as
\[
\mathcal{R}_g(\Delta) =
\sum_{j=1}^p \hat w_{g,j}\,\|\Delta B_j\|_F,\qquad
\mathcal{R}_f(\Delta) =
\sum_{k=1}^K \sum_{m=1}^{M-1} \hat w_{f,km}\,|\Delta(\Delta a_{k,m})|.
\]
Further, denote by $\mathcal{R}_{g,S}(\Delta)$ the restriction of $\mathcal{R}_g$ to $S$ (and analogously $\mathcal{R}_{g,S^c}$, $\mathcal{R}_{f,C}$, $\mathcal{R}_{f,C^c}$). Then we can rewrite the above bound as
\begin{equation}
\label{eq:penalty-decomp}
P_{\mathrm{ada}}(\hat\theta) - P_{\mathrm{ada}}(\theta^\ast)
\;\ge\;
-\lambda_g \mathcal{R}_{g,S}(\Delta)
+ \lambda_g \mathcal{R}_{g,S^c}(\Delta)
-\lambda_f \mathcal{R}_{f,C}(\Delta)
+ \lambda_f \mathcal{R}_{f,C^c}(\Delta).
\end{equation}

Plugging~\eqref{eq:penalty-decomp} into the basic inequality with $\theta = \theta^\ast$ gives
\begin{equation}
\label{eq:basic-ineq-expanded}
L_N(\hat\theta) - L_N(\theta^\ast)
\le
\lambda_g \mathcal{R}_{g,S}(\Delta) + \lambda_f \mathcal{R}_{f,C}(\Delta)
- \lambda_g \mathcal{R}_{g,S^c}(\Delta)
- \lambda_f \mathcal{R}_{f,C^c}(\Delta).
\end{equation}
The left-hand side can be controlled by the gradient at $\theta^\ast$ via convexity of $L_N$:
\[
L_N(\hat\theta) - L_N(\theta^\ast)
\ge
\langle\nabla L_N(\theta^\ast),\Delta\rangle.
\]
Collecting terms, we obtain
\begin{equation}
\label{eq:grad-and-penalty}
\langle\nabla L_N(\theta^\ast),\Delta\rangle
\le
\lambda_g \mathcal{R}_{g,S}(\Delta) + \lambda_f \mathcal{R}_{f,C}(\Delta)
- \lambda_g \mathcal{R}_{g,S^c}(\Delta)
- \lambda_f \mathcal{R}_{f,C^c}(\Delta).
\end{equation}
This inequality will be combined with the stochastic control of the gradient and the restricted strong convexity to obtain the desired rates.

\subsection*{Proof of Theorem~\ref{thm:estimation_rate} (Estimation Error Rate)}
The proof follows the standard pattern of high-dimensional $M$-estimation with decomposable penalties, adapted to the mixed group-fused structure.

Assumption~\ref{assump:error_func} on the clustered sub-Gaussian errors and Assumption~5 on the boundedness of the spline basis imply, via matrix concentration inequalities (see, e.g., \citet{buhlmann2011statistics} and \citet{negahban2012unified}), that the empirical gradient $\nabla L_N(\theta^\ast)$ is uniformly small on the relevant cone. More precisely, there exist universal constants $c_1,c_2>0$ such that, for a suitable choice of
$\lambda_g,\lambda_f$ and with probability at least $1-c_1\exp(-c_2\log p)$, one has
\begin{equation}
\label{eq:grad-bound}
\big|\langle\nabla L_N(\theta^\ast),\Delta\rangle\big| \le \frac{\lambda_g}{2} \big(\mathcal{R}_{g,S}(\Delta)+\mathcal{R}_{g,S^c}(\Delta)\big) + \frac{\lambda_f}{2}\big(\mathcal{R}_{f,C}(\Delta)+\mathcal{R}_{f,C^c}(\Delta)\big)
\end{equation}
for all $\Delta$ in a high-probability event.

Combining~\eqref{eq:grad-and-penalty} and~\eqref{eq:grad-bound} and rearranging terms
gives
\begin{equation}
    \frac{1}{2}\lambda_{g}\mathcal{R}_{g,S^c}(\Delta) + \frac{1}{2}\lambda_{f}\mathcal{R}_{f,C^c}(\Delta)
    \le
    \frac{3}{2}\lambda_{g}\mathcal{R}_{g,S}(\Delta) + \frac{3}{2}\lambda_{f}\mathcal{R}_{f,C}(\Delta).
\end{equation}
Dividing by $\frac{1}{2}$ yields the cone condition
\begin{equation}
\label{eq:cone-condition}
    \mathcal{R}_{g,S^c}(\Delta) + \mathcal{R}_{f,C^c}(\Delta) \le 3 \left( \mathcal{R}_{g,S}(\Delta) + \mathcal{R}_{f,C}(\Delta) \right).
\end{equation}
Note that although our penalty $\mathcal{R}(\cdot)$ is a mixture of Group Lasso and Fused Lasso norms, both components are decomposable with respect to their respective subspaces. Following \citet{negahban2012unified}, the sum of decomposable norms retains the decomposability property on the intersection of subspaces. Thus, the derived cone condition is theoretically valid for ensuring the restricted strong convexity (RSC) of the loss function.

Thus the estimation error $\Delta$ lies in a cone defined by the true active supports
$S$ and $C$.

Assumption~\ref{assump:rsc_func4} (RSC) asserts that, for all $\Delta$ satisfying the cone condition
\eqref{eq:cone-condition},
\begin{equation}
\label{eq:RSC}
L_N(\theta^\ast + \Delta) - L_N(\theta^\ast)
- \langle\nabla L_N(\theta^\ast),\Delta\rangle
\;\ge\;
\kappa \|\Delta\|_2^2 - \tau\,\Psi^2(\Delta),
\end{equation}
where $\|\Delta\|_2$ is the Euclidean norm of the vectorized coefficients and $\Psi(\Delta)$ collects higher-order terms controlled by the penalty norm (e.g., a multiple of $\mathcal{R}_g(\Delta)+\mathcal{R}_f(\Delta)$). For simplicity, and as is standard, $\tau\Psi^2(\Delta)$ can be absorbed into the constants in the final bounds, since $\Psi(\Delta)$ will be controlled in terms of $\lambda_g$ and $\lambda_f$.

Using~\eqref{eq:basic-inequality} with $\theta = \theta^\ast$, together with $P_{\mathrm{ada}}(\hat\theta)\ge 0$ and $P_{\mathrm{ada}}(\theta^\ast)\ge 0$, gives
\[
L_N(\theta^\ast + \Delta) - L_N(\theta^\ast)
\le
-\big(P_{\mathrm{ada}}(\hat\theta)-P_{\mathrm{ada}}(\theta^\ast)\big)
\le
\lambda_g \mathcal{R}_{g,S}(\Delta) + \lambda_f \mathcal{R}_{f,C}(\Delta),
\]
where the last inequality follows from~\eqref{eq:penalty-decomp} and the non-negativity of the inactive parts $\mathcal{R}_{g,S^c}$ and $\mathcal{R}_{f,C^c}$. Substituting this bound into~\eqref{eq:RSC} yields
\[
\kappa \|\Delta\|_2^2
\le
\lambda_g \mathcal{R}_{g,S}(\Delta) + \lambda_f \mathcal{R}_{f,C}(\Delta)
+ \tau\,\Psi^2(\Delta).
\]
The terms $\mathcal{R}_{g,S}(\Delta)$ and $\mathcal{R}_{f,C}(\Delta)$ involve only the active groups and active changepoints. Since $\mathcal{R}_{g,S}(\Delta)$ involves at most $s=|\mathcal{S}|$ groups of size $MK$, and $\mathcal{R}_{f,C}(\Delta)$ involves at most $|C|$ fused differences, Cauchy-Schwarz and the boundedness of the weights imply
\[
\mathcal{R}_{g,S}(\Delta)
\lesssim
\sqrt{s}\,\|\Delta_B\|_F,\qquad
\mathcal{R}_{f,C}(\Delta)
\lesssim
\sqrt{|C|}\,\|\Delta_A\|_F,
\]
where $\Delta_B$ stacks all $\Delta B_j$ and $\Delta_A$ stacks all $\Delta a_k$. The higher-order term $\Psi(\Delta)$ can be bounded in the same way, and, for $\lambda_g,\lambda_f$ sufficiently small, absorbed into the left-hand side. Hence there is a constant $C>0$ such that
\[
\kappa \|\Delta\|_2^2
\le
C\big(\lambda_g \sqrt{s}\,\|\Delta_B\|_F + \lambda_f \sqrt{|C|}\,\|\Delta_A\|_F\big)
\le
C(\lambda_g \sqrt{s} + \lambda_f \sqrt{|C|})\,\|\Delta\|_2.
\]
This implies
\[
\|\Delta\|_2
\lesssim
\lambda_g \sqrt{s} + \lambda_f \sqrt{|C|}.
\]

Finally, since $\|\Delta_A\|_F\le\|\Delta\|_2$ and
$\big(\sum_j\|\Delta B_j\|_F^2\big)^{1/2}\le\|\Delta\|_2$, it follows that
\[
\|\hat A - A^\ast\|_F^2 \lesssim |C|\,\lambda_f^2,\qquad
\frac{1}{p}\sum_{j=1}^p \|\hat B_j - B_j^\ast\|_F^2
\lesssim s\,\lambda_g^2,
\]
after a redefinition of constants. This proves the stated rates.

\subsection*{Proof of Theorem~\ref{thm:support_recovery} (Support Recovery and Changepoint Localization)}
{\color{black}
The proof establishes the selection consistency of the AJL estimator by explicitly constructing a solution that satisfies the Karush-Kuhn-Tucker (KKT) optimality conditions. We adopt the Primal-Dual Witness (PDW) framework \citep{wainwright2009}, adapting it to accommodate the functional nature of our data and the specific properties of the adaptive weights. The argument proceeds by first constructing an oracle estimator restricted to the true active set and then verifying that this estimator is strictly dual-feasible for the inactive variables with high probability.

\paragraph{Oracle Construction and Dual Feasibility.}
Let $\hat{\theta}_{\mathcal{S}}$ denote the oracle estimator, obtained by solving the restricted optimization problem over the true active support $\mathcal{S} = \{j : \|B_j^*\|_F > 0\}$ and enforcing zero constraints on $\mathcal{S}^c$. By definition, $\hat{B}_j = 0$ for all $j \notin \mathcal{S}$. To show that $\hat{\theta}_{\mathcal{S}}$ is also the global minimizer of the full, unrestricted objective (10), it suffices to show that the subgradient conditions for the inactive variables are strictly satisfied. Specifically, for any $j \notin \mathcal{S}$, we require:
\begin{equation} \label{eq:strict_dual}
    \| \nabla_{B_j} L_N(\hat{\theta}_{\mathcal{S}}) \|_F < \lambda_g \hat{w}_{g,j}.
\end{equation}
Expanding the gradient term using the empirical Hessian $\mathbf{H} = \frac{1}{N}\Phi^T\Phi$ (as detailed in Assumption \ref{assump:rsc_func4}), the condition \eqref{eq:strict_dual} is equivalent to checking whether the "irrepresentable" term, contaminated by noise, remains dominated by the penalty weight:
\begin{equation} \label{eq:expansion}
    \| \mathbf{H}_{j \mathcal{S}} (\mathbf{H}_{\mathcal{S}\mathcal{S}})^{-1} \mathbf{Z}_{\mathcal{S}} \|_F + O_p\left(\sqrt{\frac{\log p}{N}}\right) < \lambda_g \hat{w}_{g,j},
\end{equation}
where $\mathbf{Z}_{\mathcal{S}}$ denotes the subgradients of the active penalty terms.

\paragraph{The Role of Adaptive Relaxation.}
Under a standard Group Lasso regime with constant weights, satisfying \eqref{eq:expansion} would necessitate the strict Block-Irrepresentable Condition (Block-IRC), essentially bounding the correlation $\|\mathbf{H}_{j \mathcal{S}} (\mathbf{H}_{\mathcal{S}\mathcal{S}})^{-1}\|_{\infty} < 1$.
\textbf{Crucially, however, the adaptive weights in our AJL framework relax this stringent constraint.} 
Recall from Assumption \ref{assump:consistency_func3} that the initial estimator is consistent. Consequently, for any noise variable $j \notin \mathcal{S}$, the initial estimate $\|\tilde{B}_j\|_F$ converges to zero, causing the corresponding weight to diverge rapidly:
\begin{equation}
    \hat{w}_{g,j} = (\|\tilde{B}_j\|_F^{\gamma_g} + \epsilon_g)^{-1} \to \infty.
\end{equation}
In stark contrast, for signal variables $k \in \mathcal{S}$, the weights converge to finite constants. 
This divergence creates a ``firewall" against false positives: even in the presence of high collinearity (where the correlation term in \eqref{eq:expansion} is large), the diverging penalty $\lambda_g \hat{w}_{g,j}$ on the RHS eventually dominates the LHS with probability approaching 1. Thus, strict dual feasibility is guaranteed asymptotically, implying $\text{supp}(\hat{B}) \subseteq \mathcal{S}$.

\paragraph{Signal Detection and Conclusion.}
Conversely, to ensure no false negatives, we rely on the minimum signal strength condition. From Theorem 1, the estimation error satisfies $\max_{j \in \mathcal{S}} \|\hat{B}_j - B_j^*\|_F = o_p(1)$. Since the true signals are bounded away from zero ($\min_{j \in \mathcal{S}} \|B_j^*\|_F \ge c_0$), it follows immediately that $\|\hat{B}_j\|_F > 0$ for all $j \in \mathcal{S}$ with high probability.
An identical argument applies to the changepoint set $\mathcal{C}$ via the adaptive fused lasso penalty. 

Combining these results, we conclude that the AJL estimator consistently recovers both the active functional covariates and the structural changepoints:
$$ \text{Pr}(\text{supp}(\hat{B}) = \mathcal{S} \cap \hat{\mathcal{C}} = \mathcal{C}) \to 1. $$
\hfill $\square$}

\subsection*{Proof of Theorem~\ref{thm:consistency_func3} (Asymptotic Selection Consistency)}
The starting point is the non-asymptotic estimation error bounds established in Theorem~\ref{thm:estimation_rate}. Under Assumptions \ref{assump:error_func}--\ref{assump:basis_func}, and with tuning parameters $(\lambda_g,\lambda_f)$ satisfying Assumption~\ref{assump:tuning_func}, the AJL estimator obeys
\[
\|\hatA - \starA\|_F^2 \lesssim |\mathcal{C}| \lambda_f^2,
\qquad
\frac{1}{p}\sum_{j=1}^p \|\hatB_j - \starB_j\|_F^2 \lesssim s \lambda_g^2
\]
with probability tending to one, where $s = |\mathcal{S}|$ and $|\mathcal{C}| = \sum_k |\mathcal{C}_k|$. In particular, by Assumption~\ref{assump:tuning_func}, $\lambda_g \to 0$ and $\lambda_f \to 0$ while $\sqrt{N}\lambda_g,\sqrt{N}\lambda_f \to \infty$, so that
\[
\max_{j\in\mathcal{S}} \|\hatB_j - \starB_j\|_F = o_p(1),
\qquad
\max_{(k,m)\in\mathcal{C}} \big|\hat{a}_{k,m+1}-\hat{a}_{k,m} - (a_{k,m+1}^*-a_{k,m}^*)\big| = o_p(1).
\]
The minimal signal conditions in Assumption~\ref{assump:tuning_func} guaranty that
\[
\min_{j\in\mathcal{S}}\|\starB_j\|_F \ge c_0 > 0,
\qquad
\min_{m\in\mathcal{C}_k} |a_{k,m+1}^*-a_{k,m}^*| \ge c_0 > 0.
\]
Combining these two displays, we see that every truly active block $j\in\mathcal{S}$ and every true change point $(k,m)\in\mathcal{C}$ remains nonzero in the estimate with probability tending to one:
\[
\Pr\big(\|\hatB_j\|_F>0\ \forall j\in\mathcal{S}\big) \to 1,
\qquad
\Pr\big(\hat{a}_{k,m+1}\neq\hat{a}_{k,m}\ \forall (k,m)\in\mathcal{C}\big) \to 1.
\]
This rules out false negatives.

To rule out false positives, consider first any $j\notin\mathcal{S}$ with $\starB_j = 0$. The KKT condition for the $j$th block of the AJL objective can be written as
\[
\nabla_{B_j} L_N(\hat\theta) + \lambda_g \widehat w_{g,j} \widehat G_j = 0,
\]
where $\widehat G_j$ is a subgradient of $\|\bm{B}_j\|_F$ at $\hatB_j$. If $\hatB_j \neq 0$, then $\|\widehat G_j\|_F = 1$ and hence
\[
\|\nabla_{B_j} L_N(\hat\theta)\|_F = \lambda_g \widehat w_{g,j}.
\]
On the other hand, the stochastic error bounds and the consistency of $\hat\theta$ imply the following.
\[
\max_{j\notin\mathcal{S}} \|\nabla_{B_j} L_N(\hat\theta)\|_F
= O_p\Big(\sqrt{\frac{M\log p}{N}}\Big).
\]

For $j\notin\mathcal{S}$, the preliminary estimator $\tilde B_j$ satisfies $\|\tilde B_j\|_F = O_p\big(\sqrt{(\log p)/N}\big)$, so that the adaptive weight is
\[
\widehat w_{g,j} = \big(\|\tilde B_j\|_F^{\gamma_g} + \tau_N\big)^{-1}
\gtrsim \Big(\sqrt{\tfrac{\log p}{N}}\Big)^{-\gamma_g}
\]
with probability tending to one. The Assumption~\ref{assump:tuning_func} guaranties that $\lambda_g$ is chosen so that $\lambda_g \widehat w_{g,j}$ dominates $\|\nabla_{B_j} L_N(\hat\theta)\|_F$ uniformly over $j\notin\mathcal{S}$. Hence, with probability tending to one, the KKT conditions can be satisfied only if $\hatB_j = 0$ for all $j\notin\mathcal{S}$, which rules out spurious nonzero groups.

The fused part is handled analogously. For any index $(k,m)$ that is not a true change point, i.e. $(k,m)\notin\mathcal{C}$ and $a_{k,m+1}^*-a_{k,m}^* = 0$, the preliminary difference $\tilde a_{k,m+1}-\tilde a_{k,m}$ is of order $O_p\big(\sqrt{(\log M)/N}\big)$, so the corresponding adaptive fused weight
\[
\widehat w_{f,km} = \big(|\tilde a_{k,m+1}-\tilde a_{k,m}|^{\gamma_f} + \tau_N\big)^{-1}
\]
diverges at a polynomial rate in $N$. The KKT conditions for the fused penalty involve gradients of $L_N$ with respect to the differences $\{a_{k,m+1}-a_{k,m}\}$ plus $\lambda_f \widehat w_{f,km}$ times a subgradient. The same comparison as above shows that, under Assumption~\ref{assump:tuning_func}, the term $\lambda_f \widehat w_{f,km}$ dominates the corresponding gradient uniformly over $(k,m)\notin\mathcal{C}$, which forces $\hat{a}_{k,m+1} = \hat{a}_{k,m}$ for all non-changepoint indices.

Putting these pieces together, we conclude that, with probability tending to one, every truly active predictor and every true changepoint is selected, and no noise predictor or spurious changepoint is included. Equivalently,
\[
\Pr\big(\{j : \|\hatB_j\|_F > 0\} = \mathcal{S},\ 
\{m : \hat{a}_{k,m+1} \neq \hat{a}_{k,m}\} = \mathcal{C}_k,\ \forall k\big) \to 1,
\]
which proves the asymptotic selection consistency stated in Theorem~\ref{thm:consistency_func3}.

\subsection*{Proof of Theorem~\ref{thm:normality_func} (Asymptotic Normality and Valid Inference)}
\begin{proof}
Let $\mathcal{A} = \mathcal{S}^* \cup \mathcal{C}^{\alpha,*} \cup \mathcal{C}^{\beta,*}$ be the true Oracle active set. By Theorem \ref{thm:support_recovery}, we have $\mathbb{P}(\hat{\mathcal{A}} = \mathcal{A}) \to 1$. Hence, our asymptotic analysis is conditioned on the event $\hat{\mathcal{A}} = \mathcal{A}$.

Let $\hat{\bm{\theta}}_{\mathcal{A}}$ be the restricted estimator. It satisfies the KKT condition:
\begin{equation} \label{eq:score_exact}
\nabla_{\mathcal{A}} \mathcal{L}(\hat{\bm{\theta}}_{\mathcal{A}}) + \bm{p}'_{\lambda}(\hat{\bm{\theta}}_{\mathcal{A}}) = \bm{0},
\end{equation}
where $\mathcal{L}(\bm{\theta}) = \frac{1}{2N} \| \bm{Y} - \mathbf{Z}_{\mathcal{A}} \bm{\theta}_{\mathcal{A}} \|_2^2$ is the least-squares loss restricted to the active subspace matrix $\mathbf{Z}_{\mathcal{A}}$.
Expanding $\nabla_{\mathcal{A}} \mathcal{L}(\hat{\bm{\theta}}_{\mathcal{A}})$ around the \textit{best spline approximation} $\bm{\theta}_{\mathcal{A}}^*$ (defined as $\mathbb{E}[\hat{\bm{\theta}}_{\mathcal{A}}]$ under the unpenalized model):
\[
\nabla_{\mathcal{A}} \mathcal{L}(\hat{\bm{\theta}}_{\mathcal{A}}) = \nabla_{\mathcal{A}} \mathcal{L}(\bm{\theta}_{\mathcal{A}}^*) + \nabla^2_{\mathcal{A}} \mathcal{L}(\bm{\theta}_{\mathcal{A}}^*) (\hat{\bm{\theta}}_{\mathcal{A}} - \bm{\theta}_{\mathcal{A}}^*) .
\]
Note that $\nabla^2_{\mathcal{A}} \mathcal{L}(\bm{\theta}_{\mathcal{A}}^*) = \frac{1}{N} \mathbf{Z}_{\mathcal{A}}^\top \mathbf{Z}_{\mathcal{A}} := \bm{H}_N$. Substituting this into \eqref{eq:score_exact}:
\[
\bm{H}_N (\hat{\bm{\theta}}_{\mathcal{A}} - \bm{\theta}_{\mathcal{A}}^*) = -\nabla_{\mathcal{A}} \mathcal{L}(\bm{\theta}_{\mathcal{A}}^*) - \bm{p}'_{\lambda}(\hat{\bm{\theta}}_{\mathcal{A}}).
\]
Multiplying by $\sqrt{N} \bm{H}_N^{-1}$:
\begin{equation} \label{eq:expansion_master}
\sqrt{N} (\hat{\bm{\theta}}_{\mathcal{A}} - \bm{\theta}_{\mathcal{A}}^*) = \underbrace{\sqrt{N} \bm{H}_N^{-1} \frac{1}{N} \mathbf{Z}_{\mathcal{A}}^\top \bm{E}}_{\text{Stochastic Term } (\mathbf{T}_1)} - \underbrace{\sqrt{N} \bm{H}_N^{-1} \bm{p}'_{\lambda}(\hat{\bm{\theta}}_{\mathcal{A}})}_{\text{Regularization Bias } (\mathbf{T}_2)}.
\end{equation}

\paragraph{1. Vanishing Regularization Bias ($\mathbf{T}_2$)}
Under Assumption \ref{assump:tuning_func} (Oracle Weights), for any $j \in \mathcal{A}$, $\sqrt{N} \lambda \hat{w}_j = o_p(1)$. Since B-splines satisfy the Riesz basis property, the eigenvalues of $\bm{H}_N$ are bounded away from 0 and $\infty$ with probability 1 (Assumption A4). Thus $\|\bm{H}_N^{-1}\|_2 = O_p(1)$. It follows that:
\[ \|\mathbf{T}_2\|_2 = o_p(1). \]

\paragraph{2. Asymptotic Normality of the Functional Estimator}
We aim to establish the distribution of $\hat{\beta}_j(t) - \beta_j^*(t)$. We decompose the error into stochastic error and approximation error:
\[
\hat{\beta}_j(t) - \beta_j^*(t) = \underbrace{\bm{\phi}(t)^\top (\hat{\bm{\theta}}_{j} - \bm{\theta}_{j}^*)}_{\text{Estimation Error}} + \underbrace{(\bm{\phi}(t)^\top \bm{\theta}_{j}^* - \beta_j^*(t))}_{\text{Approximation Bias } r_j(t)}.
\]
Considering the normalized error $\sqrt{N}(\hat{\beta}_j(t) - \beta_j^*(t))$:
\[
\sqrt{N}(\hat{\beta}_j(t) - \beta_j^*(t)) = \bm{\phi}(t)^\top [\sqrt{N}(\hat{\bm{\theta}}_{j} - \bm{\theta}_{j}^*)] + \sqrt{N} r_j(t).
\]

\textbf{Step 2a: Handling Approximation Bias.}
From spline approximation theory (de Boor, 2001), for a function in the Sobolev space $W_2^d$, the bias satisfies $\sup_t |r_j(t)| = O(M_N^{-d})$.
By the \textbf{Undersmoothing Assumption} ($N M_N^{-2d} \to 0$), we have:
\[
|\sqrt{N} r_j(t)| \le C \sqrt{N} M_N^{-d} = C \sqrt{N M_N^{-2d}} \to 0.
\]
Thus, the systematic bias is asymptotically negligible.

\textbf{Step 2b: CLT for the Stochastic Term.}
Let $\bm{v}_t = \bm{H}_N^{-1} \bm{\phi}_{full}(t)$ be the projection vector such that the stochastic term for the function is $\frac{1}{\sqrt{N}} \bm{v}_t^\top \mathbf{Z}_{\mathcal{A}}^\top \bm{E} = \frac{1}{\sqrt{N}} \sum_{i=1}^N (\bm{v}_t^\top \mathbf{z}_{i}) \epsilon_i$.
This is a sum of independent random variables with mean 0. We verify the Lindeberg-Feller condition. Given the bounded eigenvalues of B-splines and sub-Gaussian errors, the variance of this scalar sum stabilizes to $\sigma_j^2(t)$.
By the Central Limit Theorem:
\[
\frac{\bm{\phi}(t)^\top (\hat{\bm{\theta}}_{j} - \bm{\theta}_{j}^*)}{\text{sd}(\hat{\beta}_j(t))} \xrightarrow{d} \mathcal{N}(0, 1).
\]

\paragraph{3. Consistency of Variance Estimation}

The plug-in variance estimator is $$\hat{\sigma}_j^2(t) = \hat{\sigma}_\epsilon^2 \bm{\phi}(t)^\top (\mathbf{Z}_{\mathcal{A}}^\top \mathbf{Z}_{\mathcal{A}})^{-1} \bm{\phi}(t).$$
Since $\hat{\bm{\theta}}$ is consistent, $\hat{\sigma}_\epsilon^2 \xrightarrow{p} \sigma_\epsilon^2$. Also, $\frac{1}{N}\mathbf{Z}_{\mathcal{A}}^\top \mathbf{Z}_{\mathcal{A}} \xrightarrow{p} \bm{\Omega}$.
Therefore, the ratio of estimated standard error to the true standard deviation converges to 1:
\[ \frac{\hat{\sigma}_j(t)}{\sigma_j(t)} \xrightarrow{p} 1. \]
Combining Steps 1, 2a, 2b, and 3, and applying Slutsky’s Theorem, we obtain:
\[
\frac{\hat{\beta}_j(t) - \beta_j^*(t)}{\hat{\sigma}_j(t)} \xrightarrow{d} \mathcal{N}(0, 1).
\]
This completes the proof.
\end{proof}}
\section{Detailed Justification for the Hierarchical Strategy}
\label{app:slope_fusion}

{\color{black}As detailed in Section \ref{subsec:AJL_Obj_func}, the generalized AJL objective \eqref{eq:adaptive_objective_final_func} theoretically incorporates a slope-fusion penalty $\lambda_f^\beta$ to detect structural changepoints in the covariate effects $\beta_{jk}(t)$. This capability is clinically relevant for identifying when a biomarker's effect undergoes abrupt changes (e.g., a risk factor becoming relevant only after a certain disease duration).

However, simultaneously solving for variable selection (Group Lasso) and effect segmentation (Slope Fused Lasso) for all $p$ covariates in the ultra-high-dimensional regime is computationally burdensome. Therefore, as discussed in Section \ref{subsec:AJL_Obj_func}, we implement this via a hierarchical regularization strategy.

In the primary screening phase (Algorithm \ref{alg:main_3stage}), we set $\lambda_f^\beta = 0$ to identify the active set $\mathcal{S} = \{j : \|\hat{\bm{B}}_j\|_F > 0\}$. Subsequently, as a post-selection refinement, we solve the following reduced optimization problem restricted to the active set $j \in \mathcal{S}$:

\begin{equation}
\min_{\bm{A}, \{\bm{B}_j\}_{j \in \mathcal{S}}} L_N(\cdot) + \lambda_f^\alpha \sum_{k=1}^K \sum_{m=1}^{M-1} \hat{w}_{f,km}^\alpha |\Delta a_{k,m}| + \lambda_f^\beta \sum_{j \in \mathcal{S}} \sum_{k=1}^K \sum_{m=1}^{M-1} \hat{w}_{f,jkm}^\beta |\Delta b_{jk,m}|
\label{eq:slope_fusion_step}
\end{equation}

where $\Delta b_{jk,m} = b_{jk,m+1} - b_{jk,m}$. In this step, the group penalty $\lambda_g$ is typically relaxed or removed to focus on unbiased structural estimation.

This two-stage approach effectively reduces the computational complexity of the fusion step from $O(p)$ to $O(|\mathcal{S}|)$. Model selection between the smooth-slope specification ($\lambda_f^\beta=0$) and the piecewise-constant-slope specification ($\lambda_f^\beta>0$) can be guided by the high-dimensional Bayesian Information Criterion (HBIC) \citep{wang2007tuning}.}

\end{document}